\crefname{equation}{}{}
\crefname{figure}{Figure}{Figures}
\crefname{assumption}{Assumption}{Assumptions}
\crefname{condition}{Condition}{Conditions}
\newcommand\st{\textsuperscript{st}\xspace}
\renewcommand\th{\textsuperscript{th}\xspace}
\setlist[enumerate,1]{leftmargin=*,wide=0em, noitemsep,nolistsep, label = {\bfseries \arabic*.}}
\setlist[itemize,1]{leftmargin=*,wide=0em, noitemsep,nolistsep}
\newlist{abbrv}{itemize}{1}
\setlist[abbrv,1]{label=,labelwidth=1in,align=parleft,itemsep=0.1\baselineskip,leftmargin=!}
\newcommand*{\transpose}{%
	{\mathpalette\@transpose{}}%
}
\newcommand*{\@transpose}[2]{%
	\raisebox{\depth}{$\m@th#1\intercal$}%
}
\renewcommand {\AA}  { {\bm{A}} }
\newcommand {\hY}  { {\widehat{Y}} }
\newcommand{\R}{{\rm I\!R}}
\newcommand {\vv}  { {\bm v} }
\newcommand{\defeq}{\mathrel{\mathop:}=}
\newcommand*\xbar[1]{%
	\hbox{%
		\vbox{%
			\hrule height 0.5pt 
			\kern0.5ex
			\hbox{%
				\kern-0.1em
				\ensuremath{#1}%
				\kern-0.1em
			}%
		}%
	}%
} 
\definecolor{forestgreen}{rgb}{0.13, 0.55, 0.13}
\definecolor{amber}{rgb}{1.0, 0.75, 0.0}
\definecolor{bananayellow}{rgb}{.8, 0.6, 0}
\newcounter{comment}\setcounter{comment}{0}
\newmdtheoremenv[%
linewidth = 1pt,%
roundcorner = 10pt,%
leftmargin = 0,%
rightmargin = 0,%
backgroundcolor = green!3,%
outerlinecolor = blue!70!black,%
splittopskip = \topskip,%
ntheorem = true,%
]{theorem}{Theorem}
\newmdtheoremenv[%
linewidth = 1pt,%
roundcorner = 10pt,%
leftmargin = 0,%
rightmargin = 0,%
backgroundcolor = green!3,%
outerlinecolor = blue!70!black,%
splittopskip = \topskip,%
ntheorem = true,%
]{corollary}{Corollary}
\newmdtheoremenv[%
linewidth = 1pt,%
roundcorner = 10pt,%
leftmargin = 0,%
rightmargin = 0,%
backgroundcolor = green!3,%
outerlinecolor = blue!70!black,%
splittopskip = \topskip,%
ntheorem = true,%
]{lemma}{Lemma}
\newmdtheoremenv[%
linewidth = 1pt,%
roundcorner = 10pt,%
leftmargin = 0,%
rightmargin = 0,%
backgroundcolor = yellow!3,%
outerlinecolor = blue!70!black,%
splittopskip = \topskip,%
ntheorem = true,%
]{definition}{Definition}
\newmdtheoremenv[%
linewidth = 1pt,%
roundcorner = 10pt,%
leftmargin = 0,%
rightmargin = 0,%
backgroundcolor = green!3,%
outerlinecolor = blue!70!black,%
splittopskip = \topskip,%
ntheorem = true,%
]{proposition}{Proposition}
\newmdtheoremenv[%
linewidth = 1pt,%
roundcorner = 10pt,%
leftmargin = 0,%
rightmargin = 0,%
backgroundcolor = green!3,%
outerlinecolor = blue!70!black,%
splittopskip = \topskip,%
ntheorem = true,%
]{condition}{Condition}
\newmdtheoremenv[%
linewidth = 1pt,%
roundcorner = 10pt,%
leftmargin = 0,%
rightmargin = 0,%
backgroundcolor = green!3,%
outerlinecolor = blue!70!black,%
splittopskip = \topskip,%
ntheorem = true,%
]{assumption}{Assumption}
\theoremstyle{definition}
\newmdtheoremenv[%
linewidth = 1pt,%
roundcorner = 10pt,%
leftmargin = 0,%
rightmargin = 0,%
backgroundcolor = blue!3,%
outerlinecolor = blue!70!black,%
splittopskip = \topskip,%
ntheorem = true,%
]{example}{Example}
\theoremstyle{definition}
\newmdtheoremenv[%
linewidth = 1pt,%
roundcorner = 10pt,%
leftmargin = 0,%
rightmargin = 0,%
backgroundcolor = red!3,%
outerlinecolor = blue!70!black,%
splittopskip = \topskip,%
ntheorem = true,%
]{remark}{Remark}
\NewDocumentCommand\DownArrow{O{2.0ex} O{black}}{%
	\mathrel{\tikz[baseline] \draw [<-, line width=0.5pt, #2] (0,0) -- ++(0,#1);}
}
\definecolor{mygreen}{rgb}{0,0.6,0}
\definecolor{mygray}{rgb}{0.5,0.5,0.5}
\definecolor{mymauve}{rgb}{0.58,0,0.82}
\tiny\color{mygray}, 
\newcommand {\Ex} { {\mathbb E} }
\DeclareDocumentCommand{\inn}{mgO{\R}}
  {%
  \IfNoValueTF{#2}
    {\in #3^{#1}}
    {\in #3^{#1 \times #2}}%
}
\begin{document}
	\title{Rollage: Efficient Rolling Average Algorithm \\ to Estimate ARMA Models for Big Time Series Data}
	\author{
		Ali Eshragh\thanks{Carey Business School, Johns Hopkins Univeristy, MD, USA. Email:  \tt{ali.eshragh@jhu.edu}}\ \thanks{International Computer Science Institute, University of California at Berkeley, CA, USA.}
		\and
		Glen Livingston\thanks{School of Information and Physical Sciences, University of Newcastle, Australia. Emails:  \tt{glen.livingstonjr@newcastle.edu.au, thomas.mccarthymccann@uon.edu.au, Luke.Yerbury@uon.edu.au}} 
		\and 
		Thomas McCarthy McCann$^{\ddagger}$
		\and 
		Luke Yerbury$^{\ddagger}$
	}	
	\date{}
	\maketitle


\begin{abstract}
	We develop a new efficient algorithm for the analysis of large-scale time series data. We firstly define rolling averages, derive their analytical properties, and establish their asymptotic distribution. These theoretical results are subsequently exploited to develop an efficient algorithm, called \texttt{Rollage}, for fitting an appropriate AR model to big time series data. When used in conjunction with the Durbin's algorithm, we show that the \texttt{Rollage} algorithm can be used as a criterion to optimally fit \texttt{ARMA} models to big time series data. Empirical experiments on large-scale synthetic time series data support the theoretical results and reveal the efficacy of this new approach, especially when compared to existing methodology.   
\end{abstract}

	
	\section{Introduction}

A time series is a collection of random variables indexed according to the order they are obtained in time. The primary objective of time series analysis is to develop statistical models to forecast the future behavior of the system. These models have proved their effectiveness and advantages in modeling and analyzing stochastic dynamic systems, and continue to gain in popularity for modeling a wide range of applications spanning from supply chains  and energy systems to epidemiology and engineering problems \cite{Abolghasemi2020,Eshragh2019,Eshragh2020,Eshragh2022,Shu}.

The \emph{autoregressive moving average} (\texttt{ARMA}) model is a widely applied model to achieve this objective. The model was popularized by Box and Jenkins  \cite{Box} for analyzing stationary time series data. Extensions of this model were subsequently introduced, such as the \emph{autoregressive integrated moving average} (\texttt{ARIMA}) model for analyzing non-stationary time series data which posses a trend in mean, and the seasonal \texttt{ARIMA} (\texttt{SARIMA}) model to deal with time series data displaying seasonal effects \cite{Shu}. All in all, each \texttt{SARIMA} model is, at its core, an \texttt{ARMA} model for a linearly transformed time series constructed by differencing the original time series at proper lags. 

Broadly speaking, the Box-Jenkins method involves three steps, consisting of identification, estimation and diagnosis of an \texttt{ARMA} model. The first step, model identification, involves making an initial guess for an appropriate order of the model. Throughout the literature, there have been numerous procedures proposed to estimate the orders of an \texttt{ARMA} model \cite{CHOI}. Commonly the orders are chosen through use of the \emph{autocorrelation function} ($\mathsf{ACF}$) in conjunction with the \emph{partial autocorrelation function} ($\mathsf{PACF}$), and observing their respective plots. 

This method is not without its drawbacks as modelers require a high level of expertise to interpret the $\mathsf{ACF}$ and $\mathsf{PACF}$ plots manually \cite{Song&Esbogue}. In addition, the $\mathsf{PACF}$ only uses the last over-fitted coefficient to estimate the order of the \emph{autoregressive} (\texttt{AR}) component which may not truly incorporate all information from the sample. The main goal of this paper is to establish a new method to estimate the order of an \texttt{AR} model that utilizes all over-fitted coefficients in big data regime.

The second step of the Box-Jenkins method involves estimating the parameters for the chosen \texttt{ARMA} model. For pure \texttt{AR} models, this can be achieved analytically by performing a variant of maximum likelihood estimation ($\mathsf{MLE}$) known as conditional maximum likelihood estimation ($\mathsf{CMLE}$) \cite{Hamilton1994Book}. However,
for pure \texttt{MA} and combined \texttt{ARMA} models, both the likelihood and conditional likelihood are complex non-linear functions. Hence, finding a solution to either the $\mathsf{MLE}$ or $\mathsf{CMLE}$ is intractable. Consequently, numerical optimization methods are often employed to estimate the parameters of \texttt{MA} and \texttt{ARMA} models.

Durbin \cite{Durbin1959} worked to overcome the intractability of the maximum likelihood equations of \texttt{MA} models by exploiting the asymptotic equivalence between \texttt{AR} models of infinite order and \texttt{MA} models of finite order. The solution was to fit
a large order AR model to represent the infinite order \texttt{AR} model using  $\mathsf{CMLE}$. The residuals of that model can then be used as estimates of the unobservable white noise of the \texttt{MA} process. Standard linear regression techniques could then be used to estimate the parameters of the \texttt{MA} model. Durbin \cite{Durbin1960} then extended this initial work to estimate the parameters for a full \texttt{ARMA} model using a similar approach. Durbin's method is promising as it replaces a non-linear estimation problem with two stages of linear estimation \cite{Broersen1996}. However, a good question to ask is, how large should the order of the AR model be in Durbin's methodology to provide both accurate estimates and optimal efficiency?

Through practice and simulation, Broersen \cite{Broersen1996,Broersen2000} identified that increasing the order of the large \texttt{AR} model does not necessarily result in more accurate estimate of an \texttt{ARMA} model and that the optimal order is in fact finite. As a result, an appropriately large order is essential in Durbin's algorithm to ensure parameter accuracy and optimal forecasts, as well as improving computational complexity. The use of model selection criteria, such as the Akaike Information Criteria (AIC) \cite{AIC1974}, has been suggested to aide in the selection of an appropriately large order \texttt{AR} model \cite{Hannan1982,Broersen1996}.

With these developments serving as motivation, this paper will look to develop a new algorithm to appropriately estimate the order of an \texttt{AR} model, and then use this algorithm as a criterion in Durbin's methodology to optimally fit an \texttt{ARMA} model to big time series data. In particular, our contributions can be summarized as follows:

\begin{enumerate}[label=(\roman*)]
    \itemsep 0.5em 
    \item We introduce the concept of a rolling average and derive its theoretical properties,
    \item Using the rolling average properties, we develop a highly-efficient algorithm, called \texttt{Rollage}, for fitting an appropriate \texttt{AR} model to big time series data,
    \item We use the \texttt{Rollage} algorithm as a model selection criterion in Durbin's methodology to optimally fit an \texttt{ARMA} model,
    \item We empirically demonstrate the effectiveness of the \texttt{Rollage} algorithm to estimate \texttt{ARMA} models on large-scale synthetic time series data, when compared to existing criteria including $\mathsf{BIC}$ and $\mathsf{GIC}$. 
\end{enumerate}

The structure of this paper is as follows: \cref{sec:Background} introduces three times series models utilized in this paper, namely \texttt{AR}, \texttt{MA} and \texttt{ARMA} models, and covers their important properties and estimation techniques. \cref{sec:TheoreticalResults} introduces the concept of a rolling average, derives its theoretical results, and develops a new methodology to estimate \texttt{AR}, \texttt{MA}, and \texttt{ARMA} models, appropriately called \texttt{Rollage} algorithm. \cref{sec:empirical_results} illustrates the efficacy of the new methodology by implementing it on several synthetic big time series data and comparing it to existing methodology. \cref{sec:Conclusion} concludes the paper and addresses future work.

\subsubsection*{Notation}
Throughout the paper, vectors and matrices are denoted by bold lower-case, and upper-case letters respectively (e.g., $\vv$ and $M$). All vectors are assumed to be column vectors. We use regular lower-case to denote scalar constants (e.g., $ c $). Random variables are denoted by regular upper-case letters (e.g., $ X $). For a real vector, $ \vv $, its transpose is denoted by $ \vv^{\transpose} $. For a vector $\vv$ and a matrix $M$, $\|\vv\|$ and $\|M\|$ denote vector $\ell_{2}$ norm and matrix spectral norm, respectively. The determinant and adjugate of a square matrix $ M $ are denoted by $ \mathsf{det}(M) $ (or $ |M| $, used interchangeably) and $\mathsf{adj}(M)$, respectively. Adopting \texttt{Matlab} notation, we use $ A(i,:) $ and $ A(:,j) $ to refer to the $ i\th $ row and $ j\th $ column of the matrix $ A $, respectively, and consider them as a column vector.

	
\section{Background}
\label{sec:Background}

In this section, we present a brief overview of the three time series models considered in this paper, namely autoregressive models (\cref{sec:AR}), moving average models (\cref{sec:AR}), and autoregressive moving average models (\cref{sec:ARMA}).

\subsection{Autoregressive Models}
\label{sec:AR}

A time series $\{Y_t;\, t=0,\pm 1,\pm 2,\ldots\}$ is called (weakly) stationary, if the mean $\Ex[Y_t]$ is independent of time $t$, and the auto-covariance $\mathsf{Cov}(Y_{t},Y_{t+h})$, denoted by $\gamma_h$, depends only on the lag $h$ for any integer values $t$ and $h$. A stationary time series $\{Y_t;\, t=0,\pm 1,\pm 2,\ldots\}$  with the constant mean $\Ex[Y_t]=0$ is an \texttt{AR} model with the order $p$, denoted by $\mathtt{AR}(p)$, if we have
\begin{align}
	\label{eq:AR(p)}
	\medskip Y_t & = \phi_1^{(p)} Y_{t-1}+\cdots+\phi_p^{(p)} Y_{t-p}+W_t, 
\end{align}
where $\phi_p^{(p)} \neq 0$ and the time series $\{W_t;\, t=0,\pm 1,\pm 2,\ldots\}$ is a Gaussian white noise with the mean $\Ex[W_t] = 0$ and variance $\mathsf{Var}(W_t) = \sigma_{W}^2$. Gaussian white noise is a stationary time series in which each individual random variable $W_t$ has a normal distribution and any pair of random variables $W_{t_1}$ and $W_{t_2}$ for distinct values of $t_1,t_2\in\mathbb{Z}$ are uncorrelated. For the sake of simplicity, we assume that $\Ex[Y_t] = 0$

It is readily seen that each $\mathtt{AR}(p)$ model has $p+2$ unknown parameters consisting of the order $p$, the coefficients $\phi_i^{(p)}$ and the variance of white noises $\sigma_W^2$. Following is a brief explanation of a common method in the literature for estimating the unknown order $p$.

\paragraph{Estimating the order $\bm{p}$.} A common method to estimate the order of an $\mathtt{AR}(p)$ model is to use the \emph{partial autocorrelation function} (\textsf{PACF}) \cite[Chapter 3]{Shu}. The \textsf{PACF} of a stationary time series $\{Y_t;\, t=0,\pm 1,\pm 2,\ldots\}$ at lag $m$ is defined by
\begin{align}
	\mathtt{PACF}_m & \defeq \begin{cases}
		\medskip \rho(Y_t,Y_{t+1}) & \mbox{for}\ m=1, \\
		\medskip \rho(Y_{t+m}-\hY_{t+m,-m}, Y_t-\hY_{t,m}) & \mbox{for}\ m\geq 2,
	\end{cases}
	\label{eq:PACF}
\end{align}
where $\rho$ denotes the correlation function, and where $\hY_{t,m}$ and $\hY_{t+m,-m}$ denote the linear regression, in the population sense, of $Y_t$ and $Y_{t+m}$ on $\{Y_{t+1},\ldots, Y_{t+m-1}\}$, respectively. In order to apply the \textsf{PACF} values to estimate the order of an \texttt{AR} model, we first need to introduce the concept of ``causality'', as given in \cref{def:causal_AR}. 

\begin{definition}[Causal \texttt{AR} Model]\label{def:causal_AR}
	An $\mathtt{AR(p)}$ model is said to be ``causal'', if the time series $\{Y_t;\, t=0,\pm 1,\pm 2,\ldots\}$ can be written as 
	\begin{align*}
		Y_t & = \sum_{i=0}^{\infty} \psi_i W_{t-i},
	\end{align*}	
	where $\psi_0 = 1$ and the constant coefficients $\psi_i$ satisfy $\sum_{i=0}^{\infty}|\psi_i|<\infty$. 
\end{definition}

It can be shown that for a causal $\mathtt{AR}(p)$ model, while the theoretical \textsf{PACF} \cref{eq:PACF} at lags $m=1,\ldots,p-1$ may be non-zero and at lag $m=p$ is strictly non-zero, at lag $m=p+1$ it drops to zero and then remains at zero henceforth \cite[Chapter 3]{Shu}. \cref{thm:lim_dist_MLE} indicates the statistical properties of the parameter estimates using \textsf{PACF} and the \textsf{PACF} estimates. These can be used to select the model order in practice by plotting the sample \textsf{PACF} versus lag $m$ along with a $95\%$ zero-confidence boundary, that is two horizontal lines at ${\pm1.96}/{\sqrt{n}}$, are plotted. Then, the largest lag $m$ in which the sample \textsf{PACF} lies out of the zero-confidence boundary for \textsf{PACF} is used as an estimation of the order $p$.

Theorem \ref{thm:lim_dist_MLE} plays a crucial role in developing theoretical results in \cref{sec:TheoreticalResults}.

\begin{theorem}[Asymptotic Distribution of Estimated Coefficients \cite{Brockwell2009TimeSeries}]\label{thm:lim_dist_MLE}
	Suppose the time series $\{Y_1,\cdots, Y_n\}$ be a stationary casual $\mathtt{AR}(p)$ model as given in \cref{eq:AR(p)} and fit an $\mathtt{AR}(m)$ model ($m>p$) to the time series data, that is
	\begin{align*}
		\medskip Y_t & = \phi_1^{(m)}Y_{t-1} + \cdots + \phi_m^{(m)}Y_{t-m} + W_t. 
	\end{align*}
	The maximum likelihood estimate of the coefficient vector, denoted by \newline $ \hat{\bm{\phi}}_{p,m} = \begin{bmatrix} \hat{\phi}_1^{(m)} & \cdots & \hat{\phi}_m^{(m)} \end{bmatrix}^{\transpose} $, asymptotically, has a multivariate normal distribution
	\begin{align*}
		\medskip \sqrt{n} (\hat{\bm{\phi}}_{p,m} - \bm{\phi}_{p,m}) & \sim \mathtt{MN}(\bm{0}, \Sigma_{p,m}),
	\end{align*}
	where $\bm{\phi}_{p,m} := \begin{bmatrix} \phi_1^{(p)} & \cdots & \phi_p^{(p)} & 0 & \cdots & 0 \end{bmatrix}^{\transpose}$, the covariance matrix $\Sigma_{p,m} = \sigma_W^2 \Gamma_{p,m}^{-1}$, and 
	\begin{align*}
		\medskip \Gamma_{p,m} & = \begin{pmatrix}
			\gamma_0 & \gamma_1 & \cdots & \gamma_{m-1} \\
			\gamma_1 & \gamma_0 & \cdots & \gamma_{m-2} \\
			\vdots   & \vdots & \ddots & \vdots \\
			\gamma_{m-1} & \gamma_{m-2} & \cdots & \gamma_{0}
		\end{pmatrix},
	\end{align*}
	is the autocovariance matrix of the given time series. Furthermore, $\hat{\phi}_m^{(m)}$ is an unbiased estimate for $\mathtt{PACF}_m$ with the limit distribution
	\begin{align*}
		\medskip \sqrt{n}\hat{\phi}_m^{(m)} & \sim \mathtt{N}[0,1].
	\end{align*} 
\end{theorem}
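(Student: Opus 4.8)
The plan is to exploit the fact that for a Gaussian AR model the exact MLE, the conditional MLE, and the least-squares (Yule--Walker) estimator of the coefficient vector are all asymptotically equivalent, so I would work with the most tractable of these: the least-squares estimator. Writing the fitted AR($m$) model in regression form with the lag vector $\mathbf{Z}_t = \begin{bmatrix} Y_{t-1} & \cdots & Y_{t-m} \end{bmatrix}^{\transpose}$, the estimator solves the normal equations $\big(\sum_t \mathbf{Z}_t \mathbf{Z}_t^{\transpose}\big)\hat{\bm{\phi}}_{p,m} = \sum_t \mathbf{Z}_t Y_t$. Because the true process is a causal $\mathtt{AR}(p)$ with $p<m$, we may write $Y_t = \mathbf{Z}_t^{\transpose}\bm{\phi}_{p,m} + W_t$, where $\bm{\phi}_{p,m}$ has its last $m-p$ entries equal to zero by construction. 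Substituting this identity into the normal equations gives the exact linearization of the estimation error,
\[
\sqrt{n}\,(\hat{\bm{\phi}}_{p,m} - \bm{\phi}_{p,m}) = \Big(\tfrac{1}{n}\sum_t \mathbf{Z}_t \mathbf{Z}_t^{\transpose}\Big)^{-1}\,\tfrac{1}{\sqrt{n}}\sum_t \mathbf{Z}_t W_t,
\]
which reduces the whole problem to controlling the two factors on the right.

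Next I would handle the two factors separately. For the first, ergodicity of the stationary causal process yields $\tfrac{1}{n}\sum_t \mathbf{Z}_t \mathbf{Z}_t^{\transpose} \to \Gamma_{p,m}$ in probability, where $\Gamma_{p,m}$ is the autocovariance matrix appearing in the statement. For the second factor, the key structural observation is that $\{\mathbf{Z}_t W_t\}$ is a martingale-difference sequence: by causality $W_t$ is mean-zero and independent of the past observations comprising $\mathbf{Z}_t$, so $\Ex[\mathbf{Z}_t W_t \mid Y_{t-1},Y_{t-2},\ldots] = \mathbf{Z}_t\,\Ex[W_t] = \bm{0}$. A martingale central limit theorem then gives $\tfrac{1}{\sqrt{n}}\sum_t \mathbf{Z}_t W_t \to \mathtt{MN}(\bm{0},\sigma_W^2\Gamma_{p,m})$, since independence yields $\Ex[\mathbf{Z}_t \mathbf{Z}_t^{\transpose} W_t^2] = \sigma_W^2\,\Ex[\mathbf{Z}_t \mathbf{Z}_t^{\transpose}] = \sigma_W^2\Gamma_{p,m}$. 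Combining the two factors by Slutsky's theorem produces asymptotic covariance $\Gamma_{p,m}^{-1}(\sigma_W^2\Gamma_{p,m})\Gamma_{p,m}^{-1} = \sigma_W^2\Gamma_{p,m}^{-1}$, which is exactly the claimed $\Sigma_{p,m}$.

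Finally, for the marginal statement about $\hat{\phi}_m^{(m)}$, consistency already gives that it estimates $\phi_m^{(m)} = \mathtt{PACF}_m = 0$ (the PACF vanishes for lags exceeding $p$), establishing the asymptotic unbiasedness, so it remains to show the scaled variance is exactly one, i.e. $\sigma_W^2(\Gamma_{p,m}^{-1})_{mm} = 1$. I expect this identity to be the main obstacle, since it is the only place where the AR structure enters beyond the generic regression machinery. The approach I would take uses the standard fact that for a positive-definite covariance matrix the $(m,m)$ entry of its inverse is the reciprocal of the residual variance from the population regression of the $m$\th coordinate on the remaining ones; here that is the residual variance of predicting $Y_{t-m}$ from $Y_{t-1},\ldots,Y_{t-m+1}$. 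By the time-reversibility of the second-order structure of a stationary process, this backward prediction error of order $m-1$ equals the corresponding forward one-step prediction error variance, which (via the Levinson--Durbin recursion) saturates at the innovation variance $\sigma_W^2$ as soon as at least $p$ lagged values are conditioned on. Since $m>p$, we obtain $(\Gamma_{p,m}^{-1})_{mm} = 1/\sigma_W^2$, whence $\sigma_W^2(\Gamma_{p,m}^{-1})_{mm} = 1$ and therefore $\sqrt{n}\,\hat{\phi}_m^{(m)} \to \mathtt{N}[0,1]$, completing the argument.
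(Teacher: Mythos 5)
Your proposal is correct, but note that the paper does not actually prove this statement: it is imported verbatim from Brockwell and Davis \cite{Brockwell2009TimeSeries}, so there is no internal proof to compare against. Your argument is essentially the standard textbook one, and every step is sound: the exact linearization $\sqrt{n}\,(\hat{\bm{\phi}}_{p,m}-\bm{\phi}_{p,m}) = \bigl(\tfrac{1}{n}\sum_t \mathbf{Z}_t\mathbf{Z}_t^{\transpose}\bigr)^{-1}\tfrac{1}{\sqrt{n}}\sum_t \mathbf{Z}_t W_t$ is valid precisely because $m>p$ lets you embed the true $\mathtt{AR}(p)$ model in the fitted regression with trailing zeros, the ergodic theorem handles the first factor, and causality makes $\{\mathbf{Z}_t W_t\}$ a martingale-difference array so that a martingale CLT plus Slutsky yields $\sigma_W^2\Gamma_{p,m}^{-1}$. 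Two remarks for completeness rather than correctness. First, the theorem concerns the MLE, so the asymptotic equivalence $\hat{\bm{\phi}}_{\mathsf{MLE}} = \hat{\bm{\phi}}_{\mathsf{LS}} + o_p(n^{-1/2})$ that you invoke at the outset is genuinely load-bearing and deserves a citation (it is Brockwell--Davis, Ch.~8; their own route to the CLT differs slightly from yours, approximating $\mathbf{Z}_t$ by truncated moving averages and applying the CLT for $m$-dependent sequences rather than a martingale CLT, but both are standard). Second, your time-reversibility detour in the last step is avoidable: if you index the covariance matrix by $X=(Y_1,\ldots,Y_m)$ rather than by the lag vector $(Y_{t-1},\ldots,Y_{t-m})$, then $(\Gamma_{p,m}^{-1})_{mm}$ is directly the reciprocal of the \emph{forward} one-step prediction error variance of $Y_m$ given $Y_1,\ldots,Y_{m-1}$, which equals $\sigma_W^2$ since $m-1\geq p$; the Toeplitz symmetry you appeal to is correct but unnecessary. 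The martingale CLT application also tacitly needs a Lindeberg-type moment condition, which holds automatically here because the innovations are Gaussian. No gaps.
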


\subsection{Moving Average Models}
\label{sec:MA}

A stationary time series $\{Y_t;\, t=0,\pm 1,\pm 2,\ldots\}$ with the constant mean $\Ex[Y_t]=0$ is an \texttt{MA} model with the order $q$, denoted by $\mathtt{MA}(q)$, if we have
\begin{align}
	\label{eq:MA(q)}
	\medskip Y_t & = \theta_1 W_{t-1} + \cdots + \theta_q W_{t-q} + W_t, 
\end{align}
where $\theta_q \neq 0$ and the time series $\{W_t;\, t=0,\pm 1,\pm 2,\ldots\}$ is a Gaussian white noise with the mean $\Ex[W_t] = 0$ and variance $Var(W_t) = \sigma_{W}^2$.

Similar to an $\mathtt{AR}(p)$ model, each $\mathtt{MA}(q)$ model has $q+2$ unknown parameters consisting of the order $q$, the coefficients $\theta_i$ and the variance of white noises $\sigma_W^2$. Here, we briefly explain the common methods in the literature to estimate each of these unknown parameters.

\paragraph{Estimating the order $\bm{q}$.} A common method to estimate the order of an $\mathtt{MA}(q)$ model is to use the \emph{autocorrelation function} (\textsf{ACF}) \cite[Chapter 3]{Shu}. The \textsf{ACF} of a stationary time series $\{Y_t;\, t=0,\pm 1,\pm 2,\ldots\}$ at lag $m$ is defined by
\begin{align}
	\mathtt{ACF}_m & \defeq \begin{cases}
		\medskip \displaystyle{\frac{\gamma_m}{\gamma_0}} & \mbox{for}\ m=0,1,\ldots, \\
		\medskip \mathtt{ACF}_{-m} & \mbox{for}\ m=-1,-2,\ldots.
		\end{cases}
		\label{eq:ACF}
\end{align}
In order to apply the \textsf{ACF} values to estimate the order of an \texttt{MA} model, we first need to introduce the concept of `` invertibility'', as given in \cref{def:invertible_MA}. 

\begin{definition}[Invertible \texttt{MA} Model]\label{def:invertible_MA}
	An $\mathtt{MA(q)}$ model is said to be ``invertible'', if embedded white noises of the time series $\{Y_t;\, t=0,\pm 1,\pm 2,\ldots\}$ can be written as 
	\begin{align*}
		W_t & = \sum_{i=0}^{\infty} \pi_i Y_{t-i},
	\end{align*}	
	where $\pi_0 = 1$ and the constant coefficients $\pi_i$ satisfy $\sum_{i=0}^{\infty}|\pi_i|<\infty$. 
\end{definition}

It can be shown that for an invertible $\mathtt{MA}(q)$ model, while the theoretical \textsf{ACF} \cref{eq:ACF} at lags $m=1,\ldots,q-1$ may be non-zero and at lag $m=q$ is strictly non-zero, at lag $m=q+1$ it drops to zero and then remains at zero henceforth \cite[Chapter 3]{Shu}. Furthermore, it can be shown that if $y_1,\ldots,y_n$ is a time series realization of an invertible $\mathtt{MA}(q)$ model, the (estimated) sample \textsf{ACF} values, scaled by $\sqrt{n}$, at lags greater than $q$ has a standard normal distribution, in limit. Thus, in practice, the (estimated) sample \textsf{ACF} versus lag $m$ along with a $95\%$ zero-confidence boundary, that is two horizontal lines at ${\pm1.96}/{\sqrt{n}}$, are plotted. Then, the largest lag $m$ in which the sample \textsf{ACF} lies out of the zero-confidence boundary for \textsf{ACF} is used as an estimation of the order $q$.

\paragraph{Maximum likelihood estimation of the coefficients $\bm{\theta_i}$ and variance $\bm{\sigma_{W}^2}$.} Unlike an \texttt{AR} model, for an $\mathtt{MA}(q)$ model both the log-likelihood function and the conditional log-likelihood function are complicated non-convex functions and cannot be maximized analytically \cite[Chapter 5]{Hamilton1994Book}. So, one approach in estimating the parameters of an \texttt{MA} model is maximizing the corresponding (log-)likelihood function approximately by applying some numerical optimization algorithms, such as the gradient descent method. 

The main difficulty in dealing with the likelihood function of an \texttt{MA} model is that the lagged values of white noises are not known before fitting a model to the data. \cite{Durbin1959} overcame this problem by developing a new method for \texttt{MA} model fitting. Motivated from \cref{def:invertible_MA}, it is readily seen that an ``invertible'' $\mathtt{MA}(q)$ model can be represented by
\begin{align*}
	\medskip Y_t = \sum_{i=1}^{\infty} (-\pi_i) Y_{t-1} + W_t,
\end{align*}
implying that an invertible $\mathtt{MA}(q)$ model is equivalent to an $\mathtt{AR}(\infty)$ model. Durbin exploited this equivalence to estimate the parameters of an \texttt{MA} model and later extended this for \texttt{ARMA} models.  As a result, \texttt{Durbin's Algorithm} first fits an \texttt{AR} model with a sufficiently large order, say $\tilde{p}$, to the data and approximates the values of white noises $W_t$ by finding the residuals of the fitted \texttt{AR} model, that is,
\begin{align}
	\label{eq:tilda_wt}
	\medskip \widetilde{w}_t & = y_t - \sum_{i=1}^{\tilde{p}} (-\hat{\pi}_i) y_{t-i}.
\end{align}
In the next step, the algorithm approximately estimates the coefficients $\theta_i$ by regressing the time series over the $q$ lagged values of the estimated residuals in \cref{eq:tilda_wt}, that is
\begin{align}
	\label{eq:regress_tilde_wt}
	\medskip Y_t & = \tilde{\theta}_1 \tilde{w}_{t-1} + \cdots + \tilde{\theta}_q \tilde{w}_{t-q} + W_t.
\end{align}
The steps of \texttt{Durbin's Algorithm} are depicted in \cref{alg:hr}. \cite{Hannan1982} extended this algorithm with trimming steps to improve the initial parameter estimates, as well as suggesting the use the \emph{Bayesian Information Criterion} (BIC) as a model selection criterion. Details regarding the BIC can be seen in \cref{def:BIC}. Both algorithms are readily applicable to ARMA models with the inclusion of the appropriate lags of the primary time series in \emph{Step 3}. 

\begin{definition}[Bayesian Information Criterion \cite{Shu}]\label{def:BIC}
	Consider a regression model with $k$ coefficients and denote the \emph{maximum likelihood estimator} for the variance as
	\begin{align*} 
	    \medskip \hat{\sigma}_k^2 = \frac{SSE(k)}{n} 
	\end{align*}
	where $SSE(k)$ is the residual sum of squares under the model with k regression coefficients. The \emph{Bayesian Information Criterion}, denoted $BIC$, is
	\begin{align*}
	    \medskip BIC = \log(\hat{\sigma_k^2}) + \frac{k\log(n)}{n}, 
	\end{align*}
	with the value of $k$ yielding the minimum $BIC$ specifying the best regression model.
\end{definition}

\begin{algorithm}[!h]
	\caption{\texttt{Durbin's Algorithm}: An Algorithm for \texttt{MA} Fitting}	
	\begin{algorithmic}
		\STATE \textbf{Input:} 
		\begin{itemize}[label=-]
			\vspace{2mm}
			\item Time series data $\{y_1,\ldots,y_n\}$\,;
			\vspace{2mm}
			\item The estimated order $q$\,;
		\end{itemize}
		\vspace{2mm}
		\item \emph{Step 1}. Choose a sufficiently high order $\tilde{p} > q$;
		\vspace{2mm}
		\item \emph{Step 2}. Find the \textsf{CMLE} of the coefficient vector, $(\hat{\pi}_{1},\ldots,\hat{\pi}_{\tilde{p}})$ and compute the white noise approximations $\widetilde{w}_t$ as in \cref{eq:tilda_wt};
		\vspace{2mm}
		\item \emph{Step 3}. Regress the primary time series over q lagged values of $\widetilde{w}_t$ to estimate the coefficient vector, $(\tilde{\theta}_{1},\ldots,\tilde{\theta}_{q})$ as in \cref{eq:regress_tilde_wt};
		\vspace{2mm}
		

		\STATE \textbf{Output:} Estimated coefficients $(\hat{\tilde{\theta}}_1,\ldots,\hat{\tilde{\theta}}_q)$.
	\end{algorithmic}
	\label{alg:hr}
\end{algorithm}

\cite{Broersen2000,Broersen1996} identified, through practice and simulation, that simply increasing the size of $\tilde{p}$ in the intermediate \texttt{AR} model does not necessarily improve the quality of estimates of the associated \texttt{MA} model. Broerson \cite{Broersen1996,Broersen2000} also identified that the optimal $\tilde{p}$ value is in fact finite and advocated the use of model selection criteria, such as the AIC \cite{AIC1974}, in order to choose an appropriate $\tilde{p}$. Instead, Broerson \cite{Broersen1996,Broersen2000} developed a new selection criterion for optimally choosing $\tilde{p}$ by generalizing the AIC (GIC), details of which can be seen in \cref{def:GIC}.
\begin{definition}[Generalized Information Criterion \cite{Broersen1996}]\label{def:GIC}
	Consider an $\mathtt{MA}(q)$ process as defined in \cref{eq:MA(q)} with variance
	\begin{align*}
	    \medskip \sigma_Y^2 = \sigma_W^2\left(1+\sum_{i=1}^{q}\theta_i^2\right) = \frac{\sigma_W^2}{{\prod_{i=1}^{\infty}}(1-k_i^2)}
	\end{align*}
	where $\theta_i$ are the coefficients of an $\mathtt{MA}$ model and $k_i$ are the reflection coefficients of a long $\mathtt{AR}$ model. Then the residual sum of squares for N observations of an $\mathtt{MA}(q)$ process, denoted $RSS_p$, is given by
	\begin{align*}
	    \medskip RSS_p =\sum_{n=1}^{N}\sigma_Y^2 \prod_{i=1}^{p}(1-k_i^2) = RSS_{p-1}(1-k_p^2).
	\end{align*}
	The \emph{Generalized Information Criterion}, denoted $GIC(p,\alpha)$, with penalty factor 1 for $\alpha$, is
	\begin{align*}
	    \medskip GIC(p, \alpha) = \log
	    \left(\frac{RSS_p}{N}\right) + \frac{\alpha p}{N}
	\end{align*}
	where the optimal order $k$ is the order with minimum $GIC(p,1), p=0,1,\ldots, \infty$.
\end{definition}

\subsection{Autoregressive Moving Average Models}
\label{sec:ARMA}

A stationary time series $\{Y_t;\, t=0,\pm 1,\pm 2,\ldots\}$ with the constant mean $\Ex[Y_t]=0$ is an \texttt{ARMA} model with orders $p$ and $q$, denoted by $\mathtt{ARMA}(p,q)$, if we have
\begin{align}
	\label{eq:AR(p)_1}
	\medskip Y_t & = \phi_1 Y_{t-1} + \cdots + \phi_p Y_{t-p} + W_t + \theta_1 W_{t-1} + \cdots + \theta_q W_{t-q}, 
\end{align}
where $\phi_p \neq 0 $, $\theta_q \neq 0$ and $W_t$ is Gaussian white noise with the mean $\Ex[W_t] = 0$ and variance $Var(W_t) = \sigma_{W}^2$. It is readily seen that an \texttt{ARMA} model is a combination of the \texttt{AR} and \texttt{MA} model seen previously in \cref{sec:AR} and \cref{sec:MA} respectively. Consequently, each $\mathtt{ARMA}(p,q)$ model has $p+q+3$ unknown parameters consisting of orders $p$ and $q$, the coefficients $\phi_i$ and $\theta_i$ and the variance of the white noises $\sigma_{W}^2$. Here, we briefly explain the common methods in the literature to estimate each of these unknown parameters.

\paragraph{Estimating the orders $\bm{p}$ and $\bm{q}$} Since each \texttt{ARMA} model contains both \texttt{AR} and \texttt{MA} components, the order of an $\mathtt{ARMA}(p,q)$ model is commonly estimated by using the $\mathsf{PACF}$ seen in \cref{sec:AR} in conjunction with the $\mathsf{ACF}$ seen in \cref{sec:MA}. \cref{tab:Behaviour_ACF_PACF} \cite{Shu} summarises the behaviour of the $\mathsf{ACF}$ and $\mathsf{PACF}$ for \texttt{ARMA} models and its derivatives. For pure \texttt{AR} and \texttt{MA} models, the behaviour of the $\mathsf{PACF}$ and $\mathsf{ACF}$ is clear and estimation of $p$ or $q$ is straight forward as previously discussed in \cref{sec:AR} and \cref{sec:MA} respectively. However, the behaviour of the $\mathsf{PACF}$ and $\mathsf{ACF}$ for an $\mathtt{ARMA}$ model is ambiguous with no clear cut off after a particular lag to estimate the orders $p$ and $q$. As a result, a more precise inspection of the $\mathsf{PACF}$ and $\mathsf{ACF}$ is required, and often multiple models, consisting of different combinations of orders $p$ and $q$, are made as an initial guess of what the true orders may be. Steps 2 and 3 of the Box-Jenkins method, estimation and diagnosis, are then performed to help distinguish the best modelling combination of $p$ and $q$.

\begin{table}[htb]
\centering
\medskip
\begin{tabular}{ccccccc}
\hline
\rowcolor[HTML]{EFEFEF} 
 &  \textbf{$\mathtt{AR}(p)$} & \textbf{$\mathtt{MA}(q)$} & \textbf{$\mathtt{ARMA}(p,q)$}  \\ \hline
\cellcolor[HTML]{EFEFEF}\textbf{$\mathsf{ACF}$} & \textsf{Tails off} & \textsf{Cuts off after lag $q$} & \textsf{Tails off} \\ \hline
\cellcolor[HTML]{EFEFEF}\textbf{$\mathsf{PACF}$} & \textsf{Cuts off after lag $p$} & \textsf{Tails off} & \textsf{Tails off} \\ \hline
\end{tabular}
\caption{This table highlights the behaviour of the $\textsf{ACF}$ and $\textsf{PACF}$ for $\mathtt{AR}(p)$, $\mathtt{MA}(q)$ and $\mathtt{ARMA}(p,q)$ processes.}
\label{tab:Behaviour_ACF_PACF}
\end{table}

\paragraph{Maximum likelihood estimation of the coefficients $\bm{\phi_i}$ and $\bm{\theta_i}$ and variance $\bm{\sigma_{W}^2}$:} Due to the \texttt{MA} component, both the log-likelihood and conditional log-likelihood function of an $\mathtt{ARMA}(p,q)$ model are complicated non-convex functions and cannot be maximized analytically. Numerical optimization techniques are commonly employed to deal with estimating the coefficients of \texttt{ARMA} models. Durbin's methodology, \cref{alg:hr}, of exploiting the asymptotic equivalence between $\mathtt{AR}(\infty)$ and $\mathtt{MA}(q)$ models can similarly be used to estimate the coefficients of \texttt{ARMA} models. The optimal order $\tilde{p}$ can also be chosen using model selection criteria such as the $\mathsf{BIC}$ utilised by Hannan and Rissanen \cite{Hannan1982} and the $\mathsf{GIC}$ utilised by \cite{Broersen1996,Broersen2000} as seen in \cref{sec:MA}. Durbin's methodology, \cref{alg:hr}, along with the $\mathsf{BIC}$ and  $\mathsf{GIC}$ criteria will be used as the estimation method for \texttt{ARMA} models for the remainder of this paper, and will be compared empirically in \cref{sec:empirical_results} to the new algorithm, called \texttt{Rollage}, developed in \cref{sec:TheoreticalResults}.

	
	\section{Theoretical Results}\label{sec:TheoreticalResults}

In this section, we introduce the concept of \emph{rolling average} and develop its theoretical properties. These results are utilized to construct an efficient algorithm to estimate an appropriate \texttt{AR} model for big time series data. It should be noted that all proofs of this section as well as technical lemmas/propositions/theorems used in the proofs are presented in the Supplementary Materials document.

\subsection{Rolling Average}
\label{sec:RA}

We start this section by introducing the \emph{rolling average} which is plays a central role in this work. 

\begin{definition}[Rolling Average]\label{def:rolling_average}
	Suppose the time series $\{Y_1,\cdots, Y_n\}$ is a causal $\mathtt{AR}(p)$ process. Fit an $\mathtt{AR}(m)$ model ($m > p$) to the data and find the \textsf{MLE} of the coefficient vector, $\hat{\bm{\phi}}_{p,m} = (\hat{\phi}_{1}^{(m)},\cdots,\hat{\phi}_{m}^{(m)})^\transpose$. The ``rolling average'' of this estimation is denoted by $\bar{\phi}_{p,m}$ and defined as follows: 
	\begin{align*}
	\medskip \bar{\phi}_{p,m} & := \frac{1}{m-p} \displaystyle{\sum_{j=p+1}^{m}\hat{\phi}_{j}^{(m)}}.
	\end{align*}
\end{definition}

The main motivation of introducing rolling averages is the convergence result on the asymptotic distribution of the \textsf{MLE}s of an \texttt{AR} model coefficients fitted to (large enough) time series data, as expressed in \cref{thm:lim_dist_MLE}. More precisely, \cref{thm:lim_dist_MLE} states that if the underlying model of the time series data is truly an $\mathtt{AR}(p)$ model, but an $\mathtt{AR}(m)$ model ($m > p$) is (over)fitted to the data, the \textsf{MLE} of the (over-fitting) coefficient vector, asymptotically, has a multivariate normal distribution, that is,
\begin{align*}
	\medskip \sqrt{n}\begin{bmatrix} \hat{\phi}_{p+1}^{(m)} & \cdots & \hat{\phi}_{m}^{(m)} \end{bmatrix}^{\transpose} & \sim \mathtt{MN}(\bm{0}, \Sigma_{p,m}(p+1,m:p+1:m)).
\end{align*}

Accordingly, the rolling average $\bar{\phi}_{p,m}$ which is the sample mean of the estimated (over-fitting) coefficients $\hat{\phi}_{p+1}^{(m)},\cdots,\hat{\phi}_{m}^{(m)}$ should also have a Normal distribution with a mean zero, in limit. This implies that, similar to the \textsf{PACF} values, the rolling averages can be also considered as a tool to estimate the order of an \texttt{AR} model. Furthermore, while the former approach looks only at the last (over-fitting) coefficient $\hat{\phi}_{m}^{(m)}$ to estimate the order of the model, rolling averages consider all (over-fitting) coefficients together. Hence, one could expect that the latter exploits more information from the sample, and consequently may provide more efficient and accurate estimates. This is the motivation behind developing a new algorithm for estimating an \texttt{AR} model based on rolling averages. 

Hence, we should derive the variance of the rolling averages to be able to utilize them in an algorithmic way to estimate the order of an \texttt{AR} model. For this purpose, we first define \emph{nested lower right corner matrix} in \cref{def:NLRC_matrix} and then obtain its structure in \cref{thm:lrc_matrix}. Finally, \cref{thm:lim_dist_rolling_average} applies all these results to establish the asymptotic distribution of the rolling average $\bar{\phi}_{p,m}$. 

\begin{definition}[Nested Lower Right Corner Matrix (NLRC)]\label{def:NLRC_matrix}
	Suppose the time series $\{Y_1,\cdots, Y_n\}$ is a causal $\mathtt{AR}(p)$ process. Fit an $\mathtt{AR}(m)$ model ($m > p$) to the data and find the covariance matrix $\Sigma_{p,m}$. The $(m-p)\times (m-p)$ square matrix $\Sigma_{p,m}(p+1:m,p+1:m)$ extracted from the lower right corner of the covariance matrix is called ``nested lower right corner'' matrix and denoted by $\mathsf{NLRC}_{p,m}$.
\end{definition}

\begin{theorem}[Closed Form for $\mathsf{NLRC}_{p,m}$] \label{thm:lrc_matrix}
	Let the time series $\{Y_1,\cdots, Y_n\}$ be a causal $\mathtt{AR}(p)$ model. The nested lower right corner matrix $\mathsf{NLRC}_{p,m}$ for a fixed $p \geq 1$ and $m=p+1,p+2,\ldots$, is a symmetric positive definite matrix with the lower triangular coordinates satisfying the following ``nested'' equations for  $m=p+2,p+3,\ldots$,
	\begin{align*}
		\medskip \mathsf{NLRC}_{p,m}(i,j) & = \begin{cases}
			\medskip \underline{\mbox{if } m \leq 2p+1:} \\
			\medskip \phi_{m-p-i}^{(p)}\phi_{m-p-1}^{(p)} + \mathsf{NLRC}_{p,m-1}(i,1) & \mbox{for } j=1,\ i=1,\ldots,m-p-1 \\
			\medskip \phi_{0}^{(p)}\phi_{m-p-1}^{(p)} & \mbox{for } j=1,\ i=m-p \\
			\medskip \underline{\mbox{if } m \geq 2p+2:} \\
			\medskip \mathsf{NLRC}_{p,m-1}(i,1) & \mbox{for } j=1,\ i=1,\ldots,m-p-1 \\
			\medskip 0 & \mbox{for } j=1,\ i=m-p \\
			\medskip \underline{\mbox{for all } m \geq p+2:} \\
			\medskip \mathsf{NLRC}_{p,m-1}(i-1,j-1) & \mbox{for } 2 \leq j \leq i \leq m-p,
			\end{cases}
	\end{align*}
	where $\phi_0^{(p)} := -1$ and the initial condition is  $\mathsf{NLRC}_{p,p+1}(1,1) = (\phi_0^{(p)})^2$. Moreover, the lower triangular coordinates equal
	\begin{align*}
		\medskip \mathsf{NLRC}_{p,m}(i,j) & = \begin{cases}
			\medskip \displaystyle{\sum_{k=0}^{\min\{ m-p-i, p \}}(\phi_k^{(p)})^2} & \mbox{for } j=i,\ i=1,\ldots,m-p \\
			\medskip \displaystyle{\sum_{k=0}^{\min\{ m-p-i, p-1 \}}\phi_k^{(p)} \phi_{k+1}^{(p)}} & \mbox{for } j=i-1,\ i=2,\ldots,m-p \\
			\medskip \hspace*{1.1cm} \vdots & \hspace*{1.1cm} \vdots \\
			\medskip \displaystyle{\sum_{k=0}^{\min\{ m-p-i, p-\ell \}}\phi_k^{(p)} \phi_{k+\ell}^{(p)}} & \mbox{for } j=i-\ell,\ i=\ell+1,\ldots,m-p \\
			\medskip \hspace*{1.1cm} \vdots & \hspace*{1.1cm} \vdots \\
			\medskip \hspace*{0.8cm} \displaystyle{\phi_0^{(p)} \phi_{p}^{(p)}} & \mbox{for } j=i-p,\ i=p+1,\ldots,m-p \\
			\medskip \hspace*{1.0cm} 0 & \mbox{elsewhere},			
		\end{cases}
	\end{align*}
	and above the diagonal coordinates are calculated through
	\begin{align*}
		\medskip \mathsf{NLRC}_{p,m}(i,j) & = \mathsf{NLRC}_{p,m}(j,i)\quad \mbox{for } 1 \leq i < j \leq m-p.
	\end{align*}
	Note that those coordinates where the corresponding range for $i$ has a lower bound greater than the upper bound should be disregarded.
\end{theorem}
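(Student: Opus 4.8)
The plan is to obtain a single clean factorization of $\mathsf{NLRC}_{p,m}$ from which both the closed form and the nested recursion drop out by elementary bookkeeping. First I would write the autocovariance matrix in block form $\Gamma_{p,m} = \left(\begin{smallmatrix}\Gamma_{11} & \Gamma_{12}\\ \Gamma_{21} & \Gamma_{22}\end{smallmatrix}\right)$, where $\Gamma_{11}$ is the top-left $p\times p$ block (the covariance of $(Y_1,\dots,Y_p)$) and $\Gamma_{22}$ is the bottom-right $(m-p)\times(m-p)$ block. By the block-inverse formula, the lower-right $(m-p)\times(m-p)$ block of $\Gamma_{p,m}^{-1}$ equals the inverse of the Schur complement $\Gamma_{22}-\Gamma_{21}\Gamma_{11}^{-1}\Gamma_{12}$, so that $\mathsf{NLRC}_{p,m} = \sigma_W^2\,(\Gamma_{22}-\Gamma_{21}\Gamma_{11}^{-1}\Gamma_{12})^{-1}$.

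Next I would identify this Schur complement probabilistically. Since the process is jointly Gaussian, $\Gamma_{22}-\Gamma_{21}\Gamma_{11}^{-1}\Gamma_{12}$ is exactly the conditional covariance of $(Y_{p+1},\dots,Y_m)$ given $(Y_1,\dots,Y_p)$. Here causality (\cref{def:causal_AR}) does the real work: because each $Y_s$ with $s\le p$ is a function of $\{W_s,W_{s-1},\dots\}$, the innovations $W_{p+1},\dots,W_m$ are independent of $(Y_1,\dots,Y_p)$, and the defining recursion holds exactly for $t=p+1,\dots,m$. Collecting these equations and setting $\tilde\phi_0:=1,\ \tilde\phi_k:=-\phi_k^{(p)}$, I can write $\Phi_1\,(Y_1,\dots,Y_p)^{\transpose} + \Phi_2\,(Y_{p+1},\dots,Y_m)^{\transpose} = (W_{p+1},\dots,W_m)^{\transpose}$, where $\Phi_2$ is the $(m-p)\times(m-p)$ unit lower-triangular banded Toeplitz matrix carrying $\tilde\phi_k$ on its $k$-th subdiagonal. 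Inverting the (unit triangular, hence nonsingular) $\Phi_2$ gives the conditional covariance $\sigma_W^2\,\Phi_2^{-1}\Phi_2^{-\transpose}$, and substituting back yields the key identity $\mathsf{NLRC}_{p,m} = \Phi_2^{\transpose}\Phi_2$.

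From this Gram representation everything follows. Symmetry is immediate, and positive definiteness holds because $\Phi_2$ is nonsingular. The closed form is a direct computation of $(\Phi_2^{\transpose}\Phi_2)_{ij} = \sum_r \tilde\phi_{r-i}\tilde\phi_{r-j}$: writing $\ell = i-j$ and using $\tilde\phi_a\tilde\phi_b = \phi_a^{(p)}\phi_b^{(p)}$, the admissible range of the summation index (both subdiagonal indices lie in $[0,p]$ and the row index in $[1,m-p]$) produces exactly $\sum_{k=0}^{\min\{m-p-i,\,p-\ell\}}\phi_k^{(p)}\phi_{k+\ell}^{(p)}$, with the initial case $\mathsf{NLRC}_{p,p+1}(1,1) = (\phi_0^{(p)})^2$ corresponding to the $1\times1$ matrix $\Phi_2=(1)$. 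The nested recursion is then read off by comparing the sums for window sizes $m$ and $m-1$: since $\Phi_2$ is Toeplitz, $\mathsf{NLRC}_{p,m-1}(i-1,j-1)=\sum_{r=2}^{m-p}\tilde\phi_{r-i}\tilde\phi_{r-j}$, so for $i,j\ge 2$ the two sums differ only in the $r=1$ term, which vanishes because $\tilde\phi_{1-i}=0$; this gives the shift rule $\mathsf{NLRC}_{p,m}(i,j)=\mathsf{NLRC}_{p,m-1}(i-1,j-1)$. For the first column $j=1$ the sums for $m$ and $m-1$ differ only in the $r=m-p$ term $\phi_{m-p-i}^{(p)}\phi_{m-p-1}^{(p)}$, which is present when $m\le 2p+1$ and vanishes (via $\phi_{m-p-1}^{(p)}=0$) when $m\ge 2p+2$, reproducing the two stated cases.

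The main obstacle is the justification in the second paragraph: pinning down that the Schur complement is literally the conditional covariance and that conditioning on $(Y_1,\dots,Y_p)$ leaves $W_{p+1},\dots,W_m$ i.i.d.\ $N(0,\sigma_W^2)$ with the exact recursion intact. This is where causality is essential — without it the innovations would not be independent of the conditioning block and the boundary terms would not truncate cleanly. The remaining work (extracting the closed form and the recursion from $\Phi_2^{\transpose}\Phi_2$) is then routine index bookkeeping, the only care being the out-of-range conventions noted in the statement.
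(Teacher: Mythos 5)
Your proposal is correct, and it takes a genuinely different route from the paper. The paper never factorizes $\mathsf{NLRC}_{p,m}$: it instead proves a bordered block-inverse recursion for $\Sigma_{p,m}$ (\cref{pro:sigma_matrix}, where the Yule--Walker equations of \cref{thm:recursion_gamma} identify $\Gamma_{p,m-1}^{-1}\bm{\gamma}_{p,m-1}$ with the padded coefficient vector), a determinant recursion $\mathsf{det}(\Gamma_{p,m})=\sigma_W^2\,\mathsf{det}(\Gamma_{p,m-1})$ (\cref{pro:det_Gamma}), and then computes the first-column entries $\mathsf{NLRC}_{p,m}(i,1)=\sigma_W^2\,\mathsf{adj}(\Gamma_{p,m})(p+i,p+1)/\mathsf{det}(\Gamma_{p,m})$ by hand via \cref{thm:invA_detA}, carrying out cofactor expansions with column and row operations in four separate cases ($i\leq m-p-1$ versus $i=m-p$, crossed with $m\leq 2p+1$ versus $m>2p+1$); the closed form is only obtained at the end by iterating the nested equations from the scalar initial condition. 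You run the logic in the opposite direction: the Schur-complement identity $\mathsf{NLRC}_{p,m}=\sigma_W^2\bigl(\Gamma_{22}-\Gamma_{21}\Gamma_{11}^{-1}\Gamma_{12}\bigr)^{-1}$, the identification of that Schur complement as the residual covariance of $(Y_{p+1},\ldots,Y_m)$ after projecting on $(Y_1,\ldots,Y_p)$, and causality (innovations $W_{p+1},\ldots,W_m$ uncorrelated with the conditioning block, so the exact band system $\Phi_1 Y_{1:p}+\Phi_2 Y_{p+1:m}=W_{p+1:m}$ truncates cleanly) give the one-line identity $\mathsf{NLRC}_{p,m}=\Phi_2^{\transpose}\Phi_2$, from which the closed form is a single index computation and the nested recursion is read off by comparing the Gram sums at $m$ and $m-1$ (the $r=1$ term vanishing for the interior shift, the $r=m-p$ term producing the $m\leq 2p+1$ versus $m\geq 2p+2$ dichotomy exactly as stated). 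I verified the bookkeeping: your $(\Phi_2^{\transpose}\Phi_2)_{ij}=\sum_{k=0}^{\min\{m-p-i,\,p-\ell\}}\phi_k^{(p)}\phi_{k+\ell}^{(p)}$ with $\ell=i-j$ reproduces every case of the theorem, including the corner entry $\phi_0^{(p)}\phi_{m-p-1}^{(p)}$ and the zero for $m\geq 2p+2$. What each approach buys: yours is shorter and more conceptual, delivers symmetry and positive definiteness for free from the nonsingularity of the unit-triangular $\Phi_2$ (the paper asserts positive definiteness but its proof never addresses it explicitly), and in fact establishes an exact finite-dimensional structural identity --- essentially the known banded form of the inverse covariance of an $\mathtt{AR}(p)$ process --- rather than just the asymptotic-variance bookkeeping; one refinement worth making is that you do not actually need the Gaussian conditional-covariance fact you flag as the main obstacle, since the Schur complement equals the covariance of the linear-projection residual under mere uncorrelatedness, which follows from causality and $L^2$-continuity alone. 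The paper's approach, for its part, stays entirely within elementary determinant manipulations with no probabilistic conditioning, and produces the nested recursion directly in the iterative form that its proof of \cref{thm:lim_dist_rolling_average} consumes.
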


\begin{example}
	For an $\mathtt{AR}(2)$ model, the lower triangular coordinates of the symmetric matrix $\mathsf{NLRC}_{2,7}$ are illustrated below (for sake of simplicity, the superscript $(2)$ has been omitted): 	
	\begin{align*}
		\medskip \mathsf{NLRC}_{2,7} & = \begin{pmatrix}
			\medskip \phi_0^2+\phi_1^2+\phi_2^2 & & & & & & \\
			\medskip \phi_0\phi_1 +\phi_1\phi_2 & \phi_0^2+\phi_1^2+\phi_2^2 & & & & \\ 
			\medskip \phi_0\phi_2 & \phi_0\phi_1 +\phi_1\phi_2 & \phi_0^2+\phi_1^2+\phi_2^2 	& & & \\ 
			\medskip 0 & \phi_0\phi_2 & \phi_0\phi_1 +\phi_1\phi_2 & \phi_0^2+\phi_1^2 & & \\ 
			\medskip 0 & 0 & \phi_0 \phi_2 & \phi_0 \phi_1 & \phi_0^2 \\
		\end{pmatrix}.
	\end{align*}
\end{example}

\begin{theorem}[Asymptotic Distribution of Rolling Average]\label{thm:lim_dist_rolling_average}
	Let the time series $\{Y_1,\cdots, Y_n\}$ be a causal $\mathtt{AR}(p)$ process. If an $\mathtt{AR}(m)$ model ($m > p$) is fitted to the data, asymptotically, we have
	\begin{align*}
		\medskip \sqrt{n}\bar{\phi}_{p,m} & \sim N[0, \sigma_{p,m}^2],
	\end{align*}
	where $\sigma_{p,m}^2$ satisfies the recursion, 
	\begin{align*}
		(m-p)^2\sigma_{p,m}^2 & = \begin{cases}
			(m-p-1)^2\sigma_{p,m-1}^2 + (\phi_0^{(p)} + \cdots + \phi_{p}^{(p)})^2 & \mbox{for } m=2p+1,2p+2,\ldots, \\
			(m-p-1)^2\sigma_{p,m-1}^2 + (\phi_0^{(p)} + \cdots + \phi_{m-p-1}^{(p)})^2 & \mbox{for } m=p+2,\ldots,2p, \\
			(\phi_0^{(p)})^2 & \mbox{for } m = p+1,
		\end{cases}
	\end{align*}
	with the general solution for $\sigma_{p,m}^2$, 
	\begin{align*}
		 \begin{cases}
			\medskip (\phi_0^{(p)})^ 2 & \mbox{for } m = p+1, \\	
			\medskip \displaystyle{\frac{1}{2^2}}\left((\phi_0^{(p)})^ 2 + (\phi_0^{(p)} + \phi_{1}^{(p)})^2 \right) & \mbox{for } m = p+2, \\	
			\hspace*{2.5cm}\vdots & \hspace*{1cm}\vdots \\
			\medskip \displaystyle{\frac{1}{\ell^2}}\left( (\phi_0^{(p)})^ 2 + (\phi_0^{(p)} + \phi_{1}^{(p)})^2 + \cdots + (\phi_0^{(p)} + \cdots + \phi_{\ell-1}^{(p)})^2 \right) & \mbox{for } m = p +\ell, \\	
			\hspace*{2.5cm}\vdots & \hspace*{1cm}\vdots \\
			\medskip \displaystyle{\frac{1}{p^2}}\left( (\phi_0^{(p)})^ 2 + (\phi_0^{(p)} + \phi_{1}^{(p)})^2 + \cdots + (\phi_0^{(p)} + \cdots + \phi_{p-1}^{(p)})^2 \right) & \mbox{for } m = 2p, \\	
			\medskip \displaystyle{\frac{1}{(p+k)^2}}\left( p^2 \sigma_{p,2p}^2 + k(\phi_0^{(p)} + \cdots + \phi_{p}^{(p)})^2 \right) & \mbox{for } m = 2p+k,\, k=1,2,\ldots. \\	
		\end{cases}
	\end{align*}
\end{theorem}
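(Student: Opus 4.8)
The plan is to reduce the statement to the evaluation of a single scalar quadratic form and then compute that form by exploiting the recursive structure of $\mathsf{NLRC}_{p,m}$ provided by \cref{thm:lrc_matrix}. First, by \cref{def:rolling_average} the scaled rolling average is a fixed linear functional of the over-fitted coefficient subvector,
\begin{align*}
	\sqrt{n}\,\bar{\phi}_{p,m} & = \frac{1}{m-p}\,\one^{\transpose}\sqrt{n}\begin{bmatrix}\hat{\phi}_{p+1}^{(m)} & \cdots & \hat{\phi}_{m}^{(m)}\end{bmatrix}^{\transpose},
\end{align*}
where $\one$ is the all-ones vector of length $m-p$. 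By \cref{thm:lim_dist_MLE} together with \cref{def:NLRC_matrix}, this subvector is asymptotically $\mathtt{MN}(\zero,\mathsf{NLRC}_{p,m})$ (its limiting mean vanishes because the last $m-p$ entries of $\bm{\phi}_{p,m}$ are zero). Since any linear functional of a jointly Gaussian vector is Gaussian, $\sqrt{n}\,\bar{\phi}_{p,m}$ is asymptotically univariate normal with mean zero and variance
\begin{align*}
	\sigma_{p,m}^2 & = \frac{1}{(m-p)^2}\,\one^{\transpose}\mathsf{NLRC}_{p,m}\,\one.
\end{align*}
It therefore suffices to show that $S_m := (m-p)^2\sigma_{p,m}^2$, the sum of all entries of $\mathsf{NLRC}_{p,m}$, obeys the stated recursion and admits the stated closed form.

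For the recursion, I would invoke the nesting identity $\mathsf{NLRC}_{p,m}(i,j)=\mathsf{NLRC}_{p,m-1}(i-1,j-1)$ for $2\le j\le i\le m-p$ from \cref{thm:lrc_matrix}: combined with symmetry, this says the trailing $(m-p-1)\times(m-p-1)$ principal submatrix of $\mathsf{NLRC}_{p,m}$ is exactly $\mathsf{NLRC}_{p,m-1}$. Summing all entries and peeling off the new first row and first column gives
\begin{align*}
	S_m & = S_{m-1} + 2\sum_{i=1}^{m-p}\mathsf{NLRC}_{p,m}(i,1) - \mathsf{NLRC}_{p,m}(1,1),
\end{align*}
where the factor $2$ uses symmetry of the off-diagonal boundary and the last term removes the doubly counted corner entry.

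The crux is to evaluate this boundary contribution from the closed form. Writing $\phi_k$ for $\phi_k^{(p)}$ and reading off the first column ($j=1$, i.e.\ diagonal offset $\ell=i-1$), the entries are $\mathsf{NLRC}_{p,m}(i,1)=\sum_{k=0}^{\min\{m-p-i,\,p-i+1\}}\phi_k\,\phi_{k+i-1}$; the truncation limit equals $p-i+1$ when $m\ge 2p+1$ and $m-p-i$ when $p+2\le m\le 2p$. Setting $r=p$ (respectively $r=m-p-1$) and expanding $\big(\sum_{k=0}^{r}\phi_k\big)^2$ into its diagonal and off-diagonal parts matches $2\sum_i\mathsf{NLRC}_{p,m}(i,1)-\mathsf{NLRC}_{p,m}(1,1)$ term by term, yielding the increment $(\phi_0+\cdots+\phi_p)^2$ for $m\ge 2p+1$ and $(\phi_0+\cdots+\phi_{m-p-1})^2$ for $p+2\le m\le 2p$; the $1\times1$ base case $\mathsf{NLRC}_{p,p+1}(1,1)=(\phi_0)^2$ supplies $S_{p+1}=(\phi_0)^2$. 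This is exactly the claimed recursion, and I expect the main obstacle to be precisely this bookkeeping — tracking the two truncation regimes and the index shift — even though it reduces to an elementary expansion of a perfect square.

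Finally, I would obtain the closed form by telescoping the recursion. For $m=p+\ell$ with $1\le\ell\le p$ the increments accumulate to $S_{p+\ell}=\sum_{j=1}^{\ell}(\phi_0+\cdots+\phi_{j-1})^2$, so dividing by $\ell^2$ gives the stated expression for $\sigma_{p,p+\ell}^2$. For $m=2p+k$ the increment is the constant $(\phi_0+\cdots+\phi_p)^2$, whence $S_{2p+k}=S_{2p}+k(\phi_0+\cdots+\phi_p)^2=p^2\sigma_{p,2p}^2+k(\phi_0+\cdots+\phi_p)^2$, and dividing by $(p+k)^2$ yields the final case.
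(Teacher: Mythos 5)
Your proposal is correct and follows essentially the same route as the paper's proof: the paper likewise peels the nested trailing block $\mathsf{NLRC}_{p,m-1}$ out of $\mathsf{NLRC}_{p,m}$ via the identity $\mathsf{NLRC}_{p,m}(i,j)=\mathsf{NLRC}_{p,m-1}(i-1,j-1)$ and identifies the first-row/column boundary sum with the perfect square $(\phi_0^{(p)}+\cdots+\phi_{r}^{(p)})^2$ in the two truncation regimes, then telescopes. The only difference is cosmetic --- you phrase the computation as the quadratic form $\frac{1}{(m-p)^2}\one^{\transpose}\mathsf{NLRC}_{p,m}\one$ while the paper expands $(m-p)^2\mathsf{Var}\bigl(\sqrt{n}\,\bar{\phi}_{p,m}\bigr)$ entrywise into variances and covariances, and you make the Gaussianity of the linear functional slightly more explicit.
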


\subsection{\texttt{Rollage} Algorithm for Fitting \texttt{AR} Models}
\label{sec:rollage_alg}

Based on theoretical results developed in \cref{sec:RA}, we introduce the \texttt{Rollage} algorithm, depicted in \cref{alg:rollage} to fit an appropriate \texttt{AR} model to a given time series data. 
\begin{algorithm}[!h]
	\caption{\texttt{Rollage}: A Novel Algorithm for \texttt{AR} Fitting}	
	\begin{algorithmic}
		\STATE \textbf{Input:} 
		\begin{itemize}[label=-]
			\vspace{2mm}
			\item Time series data $\{y_1,\ldots,y_n\}$\,;
			\vspace{2mm}
			\item A relatively large value $\bar{p} \ll n$\,;
		\end{itemize}
		\vspace{2mm}
		\item \emph{Step 0}. Set $\ell=1$\,;
		\vspace{2mm}
		\WHILE {$ \ell < \bar{p} $} 
			\vspace{2mm}
			\STATE \emph{Step 1}. $\ell \leftarrow \ell + 1$\,;
			\vspace{2mm}
			\STATE \emph{Step 2}. Find the \textsf{CMLE} of the coefficient vector, $(\hat{\phi}_{1}^{(\ell)},\ldots,\hat{\phi}_\ell^{(\ell)})$\,;
			\vspace{2mm}
			\STATE \emph{Step 3}. Compute the values of rolling averages, $\bar{\phi}_{h,\ell}$, for $h=1,\ldots,\ell-1$ as in \cref{def:rolling_average}\,;
			\vspace{2mm}
			\STATE \emph{Step 4}. Compute the variance of rolling averages, $\hat{\sigma}_{\ell,m}^2$, for $m=\ell+1,\ldots,\bar{p}$ as in \cref{thm:lim_dist_rolling_average}\,;
			\vspace{2mm}
		\ENDWHILE
		\vspace{2mm}
		\STATE \emph{Step 5}. Estimate $p$ as the largest $\ell$ such that at least $5\%$ of inequalities $ |\bar{\phi}_{\ell,m}| \geq 1.96\sigma_{\ell,m}/\sqrt{n-\bar{p}} $ for $m = \ell+1,\ldots,\bar{p}$ are held\,;
		\vspace{2mm}
		\STATE \textbf{Output:} Estimated order $p$ and coefficients $(\hat{\phi}_{1}^{(p)},\ldots,\hat{\phi}_p^{(p)})$.
	\end{algorithmic}
	\label{alg:rollage}
\end{algorithm}

\paragraph{Typical procedure for calculating rolling average. } \cref{tab:MLE_phi,tab:rolling_average} illustrate how the rolling averages are calculated in the \texttt{Rollage} algorithm.
\begin{table}[h!]
	\begin{center}
		\begin{tabular}{ | c | l |}
			\hline
			\textbf{Model} & \textbf{Parameters}  \\ \hline
			$\mathtt{AR}(1)$ & $\hat{\phi}_1^{(1)}$  \\ \hline
			$\mathtt{AR}(2)$ & $\hat{\phi}_1^{(2)}, \hat{\phi}_2^{(2)}$  \\ \hline 
			$\mathtt{AR}(3)$ & $\hat{\phi}_1^{(3)}, \hat{\phi}_2^{(3)}, \hat{\phi}_3^{(3)}$  \\ \hline 
			\vdots & \vdots \\ \hline
			$\mathtt{AR}(\bar{p})$ & $\hat{\phi}_1^{(\bar{p})},\cdots, \hat{\phi}_{\bar{p}}^{(\bar{p})}$  \\ 
			\hline
		\end{tabular}
		\caption{Fitting $\bar{p}$ models $\mathtt{AR}(1), \ldots, \mathtt{AR}(\bar{p})$ to the data and finding the \textsf{MLE} of their parameters}
		\label{tab:MLE_phi}
	\end{center}
\end{table}

\begin{table}[h!]
	\begin{center}
		\begin{tabular}{ | c | l |}
			\hline
			\textbf{Model} & \textbf{Rolling average} \\ \hline
			$\mathtt{AR}(1)$ & $-$  \\ \hline
			$\mathtt{AR}(2)$ & $\bar{\phi}_{1,2}=\hat{\phi}_2^{(2)}$  \\ \hline 
			$\mathtt{AR}(3)$ & $\bar{\phi}_{1,3}=\frac{1}{2}(\hat{\phi}_2^{(3)} + \hat{\phi}_3^{(3)})$, $\bar{\phi}_{2,3}=\hat{\phi}_3^{(3)}$  \\ \hline 
			\vdots & \vdots \\ \hline
			$\mathtt{AR}(\bar{p})$ & $\bar{\phi}_{1,\bar{p}}=\frac{1}{\bar{p}-1}(\hat{\phi}_2^{(\bar{p})} + \cdots + \hat{\phi}_{\bar{p}}^{(\bar{p})})$, $\ldots$,  $\bar{\phi}_{\bar{p}-1,\bar{p}}=\hat{\phi}_{\bar{p}}^{(\bar{p})}$  \\
			\hline
		\end{tabular}
		\caption{Calculating the rolling averages}
		\label{tab:rolling_average}
	\end{center}
\end{table}

%
%

\subsection{\texttt{Rollage} Algorithm for Fitting \texttt{ARMA} Models}
\label{sec:rollage_alg_ARMA}

In order to fit \texttt{ARMA} (and \texttt{MA}) models using the theory developed in \cref{sec:rollage_alg}, we must use the \texttt{Rollage} algorithm in conjunction with Durbin's \cref{alg:hr}. To achieve this, we must first introduce a threshold hyper-parameter to \texttt{Rollage} so that an appropriately large $\tilde{p}$ value is chosen for the \texttt{AR} model, in the intermediate step of \cref{alg:hr}. In this scenario, \texttt{Rollage} will act as a stopping criterion to choose an appropriately large $\tilde{p}$ value. 

We define the threshold hyper-parameter, denoted $\delta$, as the following inequality:
\begin{equation}
    \centering
    \delta < \max\left(\frac{|\bar{\phi}_{\ell,m}|}{1.96\sigma_{\ell,m}/\sqrt{n-\bar{p}}}\right),
\end{equation}
whereby the estimate of the large $\tilde{p}$ value for the intermediate \texttt{AR} model is the first value of $l$ such that the inequality is violated. This extension to Durbin's \cref{alg:hr} can be seen in \cref{alg:rollage_ARMA}, and is called \texttt{Rollage*}.

\begin{algorithm}[!h]
	\caption{\texttt{Rollage*}: An Algorithm for \texttt{ARMA} (and \texttt{MA}) Fitting}	
	\begin{algorithmic}
		\STATE \textbf{Input:} 
		\begin{itemize}[label=-]
			\vspace{2mm}
			\item Time series data $\{y_1,\ldots,y_n\}$\,;
			\vspace{2mm}
			\item The estimated order $q$\,;
			\vspace{2mm}
			\item Threshold parameter $\delta$
		\end{itemize}
		\vspace{2mm}
		\item \emph{Step 1}. Using \texttt{Rollage} (\ref{alg:rollage}) choose the sufficiently high order $\tilde{p} > q$ to be the first value of $l$ such that the inequality $\delta < \max\left(\frac{|\bar{\phi}_{\ell,m}|}{1.96\sigma_{\ell,m}/\sqrt{n-\bar{p}}}\right)$ is violated in the \texttt{Rollage} algorithm;
		\vspace{2mm}
		\item \emph{Step 2}. Find the \textsf{CMLE} of the coefficient vector, $(\hat{\pi}_{1},\ldots,\hat{\pi}_{\tilde{p}})$ and compute the white noise approximations $\widetilde{w}_t$ as in \cref{eq:tilda_wt};
		\vspace{2mm}
		\item \emph{Step 3}. Regress the primary time series over q lagged values of $\widetilde{w}_t$ to estimate the coefficient vector, $(\tilde{\theta}_{1},\ldots,\tilde{\theta}_{q})$ as in \cref{eq:regress_tilde_wt};
		\vspace{2mm}

		\STATE \textbf{Output:} Estimated coefficients $(\hat{\tilde{\theta}}_1,\ldots,\hat{\tilde{\theta}}_q)$.
	\end{algorithmic}
	\label{alg:rollage_ARMA}
\end{algorithm}

	
	\section{Empirical Results} \label{sec:empirical_results}

In this section, we present the performance of the \texttt{Rollage} algorithm on several synthetic time series data. The data were simulated using models with randomly generated sets of coefficient parameters. The variance of the error parameter was 1 for each case. The resulting models were confirmed to be causal and invertible using MATLAB's \textit{arima$()$} function. Parameters were randomly generated for $\mathtt{AR}(p)$ and $\mathtt{MA}(q)$ models with $p=5,10,\ldots,100$ and $q=5,10,\ldots,100$. All possible parameter combinations were then utilised to produce $400$ $\mathtt{ARMA}(p,q)$ models.

Numerical analysis is presented in three subsequent sections. In \cref{sec:emp_AR}, the \texttt{Rollage} algorithm is applied to the $\mathtt{AR}$ synthetic data where its overall efficacy is analysed. In \cref{sec:emp_MA}, the \texttt{Rollage} algorithm is applied to the $\mathtt{MA}$ data where it is used as a criterion in Durbin's \cref{alg:hr} and compared to the BIC and GIC criteria for estimating an optimal $\tilde{p}$ value then subsequently fitting an $\mathtt{MA}$ model. The relative errors of parameter estimation are also analysed amongst the \texttt{Rollage}, BIC and GIC criteria. Analogously, \cref{sec:emp_ARMA} compares the \texttt{Rollage}, BIC and GIC criteria in estimating $\mathtt{ARMA}$ models. 

\subsection{Autoregressive Models}
\label{sec:emp_AR}

From each of the 20 $\mathtt{AR}(p)$ models, 500,000 synthetic time series realizations were generated. The \texttt{Rollage} algorithm was applied to each with results being the average of 20 replications. 

For each of the 20 models, the \texttt{Rollage} algorithm correctly identifies the order $p$, showing its efficacy in estimating this parameter. \cref{fig:rolling_average} displays the rolling average graphs produced by the \texttt{Rollage} algorithm at various lags for the data generated from the $\mathtt{AR}(20)$ model. In general, we notice that for lags less than or equal to the true order $p$, at least one rolling average lies significantly outside the $95\%$ confidence boundary. However, for all subsequent lags greater than the true order $p$ all rolling averages lie within the $95\%$ confidence boundary. Specifically, \cref{fig:rolling_average}(a-e) displays lags less than or equal to $p=$ 20, where it can be seen that the rolling averages lie outside the $95\%$ confidence bounds. \cref{fig:rolling_average}(f-i) then shows lags greater than $p=$ 20 where the rolling averages lie within the $95\%$ confidence bounds. This feature is dictated by the inequalities in Step $5$ of the \texttt{Rollage} algorithm. It should be noted that the confidence boundaries in \cref{fig:rolling_average}(a-f) display similar curvature to those in \cref{fig:rolling_average}(g-i), also dictated by Step $5$, however can not be distinguished due to the larger scale in their respective graphs. In this sense, the \texttt{Rollage} graphs can be used as a tool, analogous to the \textsf{PACF}, in choosing the order of an $\mathtt{AR}$ model.

\begin{figure}[h!]
	\centering
	\begin{subfigure}{.3\textwidth}
		\includegraphics[width=\textwidth]{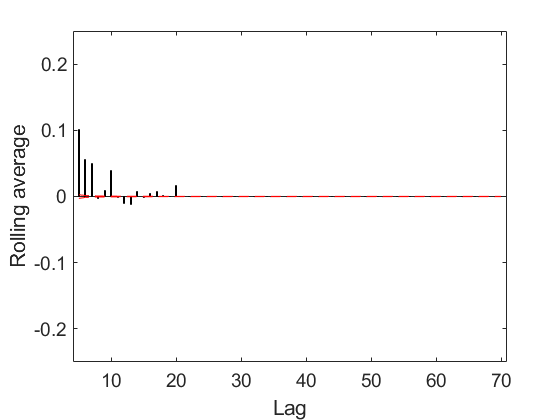}
		\caption{$p=5$}
	\end{subfigure}
	\begin{subfigure}{.3\textwidth}
		\includegraphics[width=\textwidth]{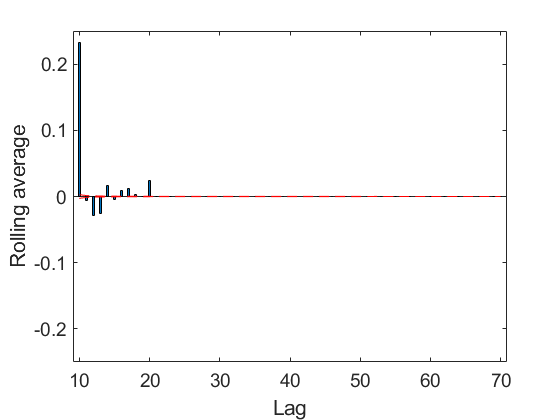}
		\caption{$p=10$ }
	\end{subfigure}
	\begin{subfigure}{.3\textwidth}
		\includegraphics[width=\textwidth]{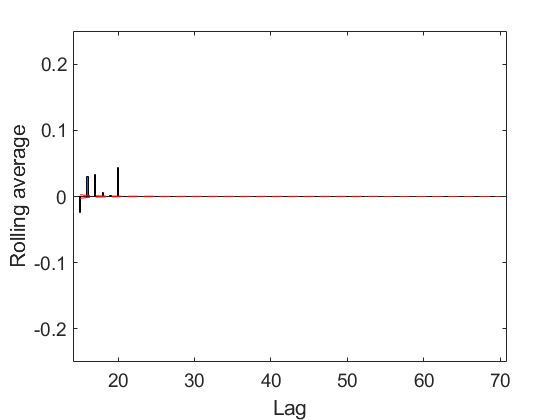}
		\caption{$p=15$}
	\end{subfigure}
	\begin{subfigure}{.3\textwidth}
		\includegraphics[width=\textwidth]{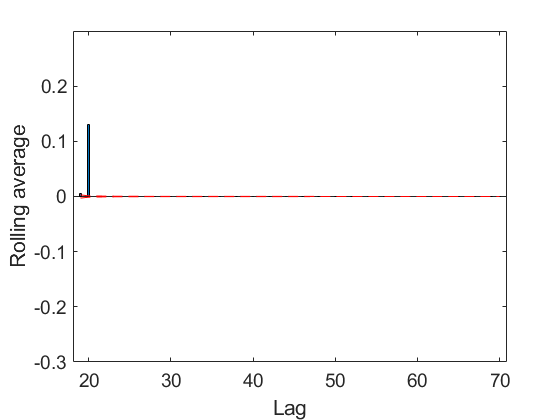}
		\caption{$p=19$}
	\end{subfigure}
	\begin{subfigure}{.3\textwidth}
		\includegraphics[width=\textwidth]{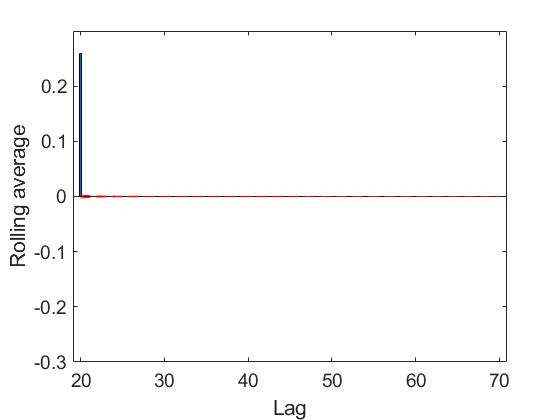}
		\caption{$p=20$}
	\end{subfigure}
	\begin{subfigure}{.3\textwidth}
		\includegraphics[width=\textwidth]{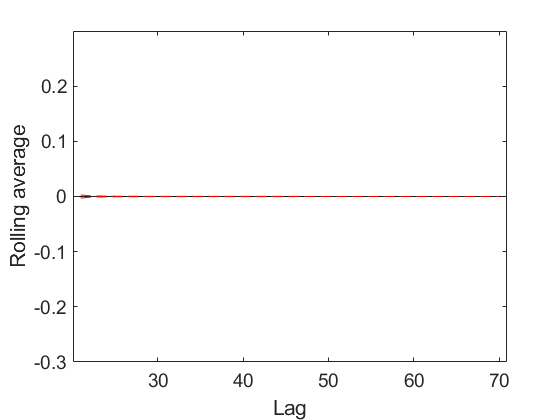}
		\caption{$p=21$}
	\end{subfigure}
	\begin{subfigure}{.3\textwidth}
		\includegraphics[width=\textwidth]{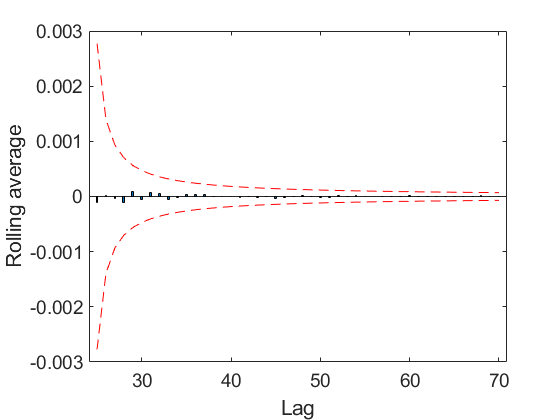}
		\caption{$p=25$}
	\end{subfigure}
	\begin{subfigure}{.3\textwidth}
		\includegraphics[width=\textwidth]{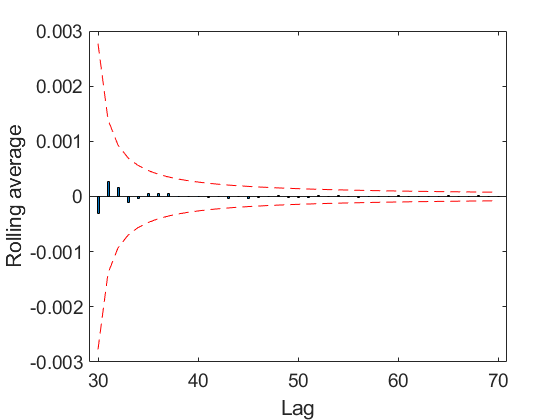}
		\caption{$p=30$}
	\end{subfigure}
	\begin{subfigure}{.3\textwidth}
		\includegraphics[width=\textwidth]{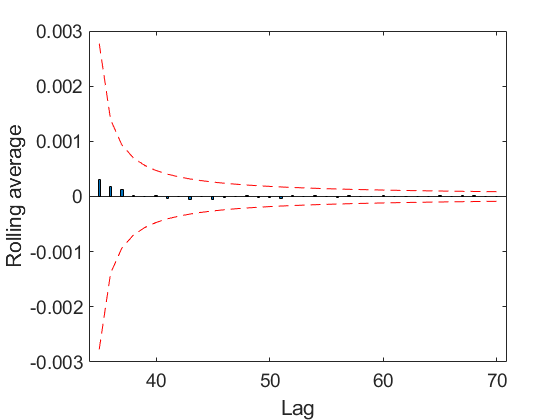}
		\caption{$p=35$}
	\end{subfigure}
	\caption{This figure illustrates the rolling averages at different lags for nine $\mathtt{AR}(p)$ models where the underlying data is generated from an $\mathtt{AR}(20)$ model. It is readily seen that up to $p=$ 20, there exists at least one rolling average which is significantly out of the $95\%$ confidence boundaries. However, from $p=$ 21 onwards, all rolling averages at all lags lie within the $95\%$ confidence bounds.}
	\label{fig:rolling_average}
\end{figure}

\subsection{Moving Average Models}
\label{sec:emp_MA}


For each of the 20 $\mathtt{MA}(q)$ models, synthetic time series data was generated for sample sizes $n=$ 10,000, 20,000, 50,000, 100,000, 200,000, 500,000 and 1,000,000. For each dataset, the \texttt{Rollage}, BIC and GIC criteria are used to estimate an optimal $\tilde{p}$ value for fitting an $\mathtt{AR}$ model before subsequently fitting an $\mathtt{MA}$ model. The threshold hyper-parameter for the \texttt{Rollage} algorithm was varied from 2.5 to 4 in 0.25 increments throughout the experiments with threshold equal to 3 providing the best trade off between estimating an optimal $\tilde{p}$ and producing the lowest relative error of parameter estimates, on average. As a result, all subsequent empirical results utilise a threshold parameter of 3 in the \texttt{Rollage} algorithm. 

\cref{fig:MA_ptilde} displays the estimated $\tilde{p}$ values for each $q$ (i.e. $\tilde{p}$ vs $q$) from the 20 $\mathtt{MA}$ models at the various sample sizes $n$ (excluding $n=$ 10,000). More precisely, the blue, red and magenta graphs are associated with the \texttt{Rollage}, BIC and GIC criteria, respectively. It can be observed that for all sample sizes, the \texttt{Rollage} algorithm produces a smaller than or equal to $\tilde{p}$ value compared to both the BIC and GIC criteria, with this becoming more noticeable as $q$ and $n$ increases. \cref{tab:MA_RE} highlights this, comparing the average $\tilde{p}$ and associated relative error produced by the \texttt{Rollage}, BIC and GIC criteria for each of the respective sample sizes $n$. On average, the \texttt{Rollage} algorithm provides smaller estimates for $\tilde{p}$ than both the BIC and GIC criteria, with the difference between \texttt{Rollage} and the other criteria increasing as $n$ increases. The bottom 2 rows of \cref{tab:MA_RE} quantify this relative difference between \texttt{Rollage} and the BIC and GIC criteria. The relative difference of $\tilde{p}$ is calculated as
\begin{equation}
        \centering
        \frac{|\tilde{p}_{\texttt{A}} - \tilde{p}_{\texttt{R}}|} {\tilde{p}_{\texttt{R}}}
\label{eqn:RelDiff}
\end{equation}
where $\tilde{p}_{\texttt{R}}$ is given by the \texttt{Rollage} algorithm and $\tilde{p}_{\texttt{A}}$ is suggested by the mentioned alternative. The total average column of \cref{tab:MA_RE} summarises this trend for all sample sizes, showing that the \texttt{Rollage} criterion provides $\tilde{p}$ values that are $18.06\%$ smaller than the BIC and $32.26\%$ smaller than the GIC criteria, on average. The results of \cref{fig:MA_ptilde} and \cref{tab:MA_RE} provide empirical evidence that the \texttt{Rollage} algorithm is computationally less expensive than both the BIC and GIC criteria, on average. 

\begin{figure}[t!]
	\centering
	\begin{subfigure}{.3\textwidth}
		\includegraphics[width=\textwidth]{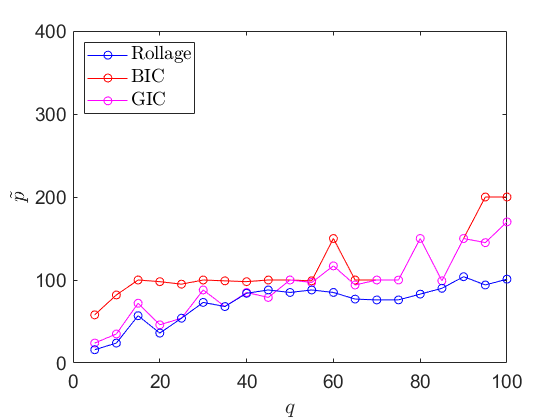}
		\caption{$n=20,000$}
	\end{subfigure}
	\begin{subfigure}{.3\textwidth}
		\includegraphics[width=\textwidth]{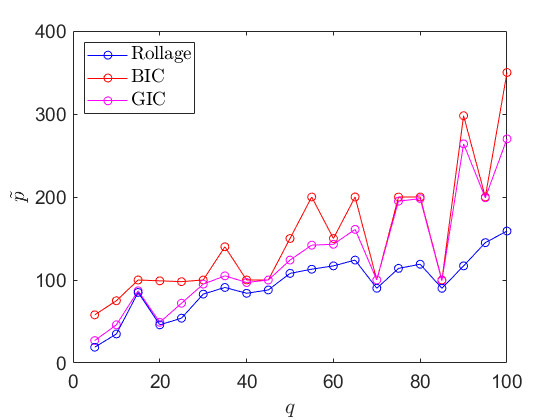}
		\caption{$n=50,000$}
	\end{subfigure}
	\begin{subfigure}{.3\textwidth}
		\includegraphics[width=\textwidth]{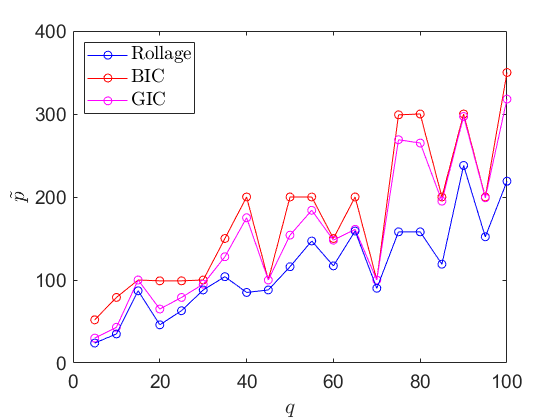}
		\caption{$n=100,000$}
	\end{subfigure}
	\begin{subfigure}{.3\textwidth}
		\includegraphics[width=\textwidth]{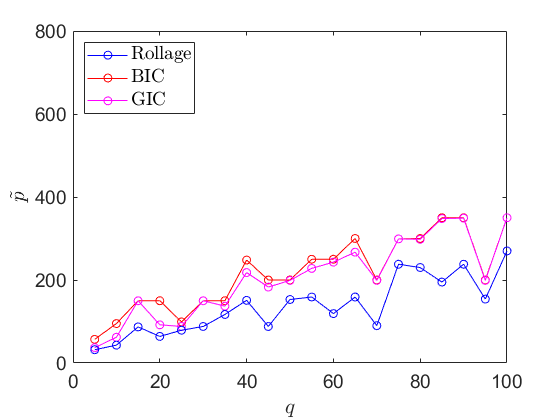}
		\caption{$n=200,000$}
	\end{subfigure}
	\begin{subfigure}{.3\textwidth}
		\includegraphics[width=\textwidth]{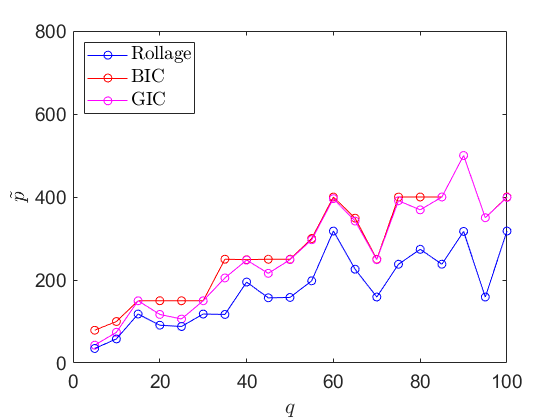}
		\caption{$n=500,000$}
	\end{subfigure}
	\begin{subfigure}{.3\textwidth}
		\includegraphics[width=\textwidth]{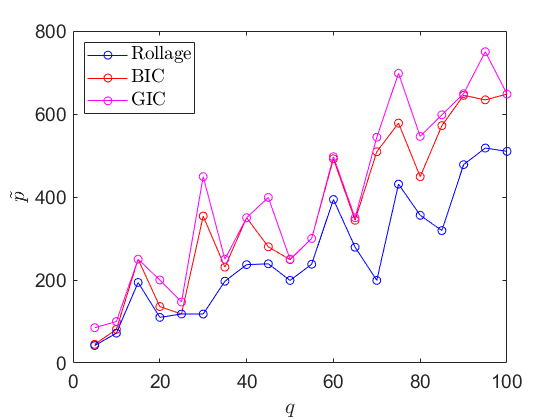}
		\caption{$n=1,000,000$}
	\end{subfigure}
	\caption{Each graph illustrates the $\tilde{p}$ values estimated by the \texttt{Rollage}, BIC and GIC criteria to initially fit an $\mathtt{AR}(p)$ model and then subsequently fit an associated $\mathtt{MA}(q)$ model. The \texttt{Rollage}, BIC and GIC criteria are represented by the blue, red and magenta graphs respectively. The six graphs represent the different sample sizes $n$ that were used to fit an $\mathtt{MA}(q)$ model. It is readily seen that the \texttt{Rollage} criteria regularly produces the smallest estimate for $\tilde{p}$.}
	\label{fig:MA_ptilde}
\end{figure}

Similarly, \cref{fig:ptildevn} displays the estimated $\tilde{p}$ values for each sample size $n$ (i.e. $\tilde{p}$ vs $n$) for various $\mathtt{MA}(q)$ models where a similar conclusion, that the \texttt{Rollage} criterion provides smaller estimates for $\tilde{p}$ than the BIC and GIC criteria, is reached. Additionally, \cref{fig:MA_ptilde} suggests that $\tilde{p}$ increases linearly as $q$ increases, and becomes more evident as $n$ increases, whereas \cref{fig:ptildevn} suggests that $\tilde{p}$ increases logarithmically as n increases. We have fitted linear models to predict the optimal $\tilde{p}$, which find $q$, $\log(n)$ and $q\log(n)$ to be statistically significant predictors for $\tilde{p}$ across all three criteria. \cref{tab:linear_models_MA} provides the coefficients for linear models to predict the optimal $\tilde{p}$ for each of the \texttt{Rollage}, BIC and GIC criteria. For the \texttt{Rollage} algorithm, the rearranged linear model is
\begin{equation}
        \centering
        \hat{\tilde{p}} = q(0.81\log(n)-7.12)+3.19\log(n)
\end{equation}
highlighting that $\tilde{p}$ is heavily dependent on $q$, however the sample size $n$ is less impactful as $\log(n)$ stunts this impact. The linear models can't be applied generally to prescribe a $\tilde{p}$, even within the range of $q$ and $n$ considered, as the set of 20 $\mathtt{MA}$ model coefficients may not be representative of the entire population. The models do however provide a good initial approximation for $\tilde{p}$ and offer insight into variable relationships.  

Analogous to \cref{fig:MA_ptilde}, \cref{fig:MA_ra} shows the corresponding relative error associated with estimating the parameters of an $\mathtt{MA}(q)$ model, in percentage. Here we define the relative error of parameter estimates by 
\begin{equation}
        \centering
        \frac{||\boldsymbol{\hat{\theta}}-\boldsymbol{\theta}||}{||\boldsymbol{\theta}||}
\end{equation}
where $\boldsymbol{\hat{\theta}}$ is the parameter estimates and $\boldsymbol{\theta}$ is the true parameter values for an $\mathtt{MA}(q)$ model. It is observed that for all sample sizes, the \texttt{Rollage} criterion produces similar relative errors to the BIC and GIC criteria over the whole range of $\mathtt{MA}(q)$ models. \cref{tab:MA_RE} summarises this by showing the average relative errors below the average estimate of $\tilde{p}$. For all criteria, the relative error of parameter estimates becomes smaller as $n$ increases and for big data regimes (i.e. $n=$ 1,000,000) all criteria produce relative errors of approximately $1.5\%$. As a result, the \texttt{Rollage} algorithm provides a great trade off between computational runtime and algorithmic accuracy for fitting an $\mathtt{MA}$ model, when compared to the BIC and GIC criteria.

In addition, \cref{fig:MA_ra} suggests that the relative error increases linearly as a function of $q$, however as $n$ increases the linear relationship appears to become more constant across all values of $q$ (particularly $n=$ 1,000,000). In contrast, \cref{fig:revn} displays the relative error of parameter estimates as a function of the sample size $n$ for various $\mathtt{MA}(q)$ models, where we see the relative errors decrease exponentially. This is further evidence that the \texttt{Rollage} algorithm may be appropriate for fitting $\mathtt{MA}$ models to big time series data when compared to current
alternatives.

\begin{table}[]
\resizebox{\textwidth}{!}{\begin{tabular}{@{}ccccccccc@{}}
\toprule
\rowcolor[HTML]{EFEFEF} 
 & \textbf{10k} & \textbf{20k} & \textbf{50k} & \textbf{100k} & \textbf{200k} & \textbf{500k} & \textbf{1M} & \textbf{Total Average} \\ \midrule
\cellcolor[HTML]{EFEFEF}\textbf{\texttt{Rollage}} & \begin{tabular}[c]{@{}c@{}}71\\ (14.09$\%$)\end{tabular} & \begin{tabular}[c]{@{}c@{}}86\\ (10.26$\%$)\end{tabular} & \begin{tabular}[c]{@{}c@{}}115\\ (7.14$\%$)\end{tabular} & \begin{tabular}[c]{@{}c@{}}134\\ (5.76$\%$)\end{tabular} & \begin{tabular}[c]{@{}c@{}}168\\ (3.93$\%$)\end{tabular} & \begin{tabular}[c]{@{}c@{}}216\\ (2.99$\%$)\end{tabular} & \begin{tabular}[c]{@{}c@{}}297\\ (1.78$\%$)\end{tabular} & \begin{tabular}[c]{@{}c@{}}155\\ (6.56$\%$)\end{tabular} \\ \midrule
\cellcolor[HTML]{EFEFEF}\textbf{BIC} & 
\begin{tabular}[c]{@{}c@{}}76\\ (13.69$\%$)\end{tabular} & \begin{tabular}[c]{@{}c@{}}94\\ (9.75$\%$)\end{tabular} & \begin{tabular}[c]{@{}c@{}}129\\ (6.66$\%$)\end{tabular} & \begin{tabular}[c]{@{}c@{}}155\\ (5.22$\%$)\end{tabular} & \begin{tabular}[c]{@{}c@{}}205\\ (3.47$\%$)\end{tabular} & \begin{tabular}[c]{@{}c@{}}263\\ (2.55$\%$)\end{tabular} & \begin{tabular}[c]{@{}c@{}}363\\ (1.48$\%$)\end{tabular} & 
\begin{tabular}[c]{@{}c@{}}183\\ (6.12$\%$)\end{tabular} \\ \midrule
\cellcolor[HTML]{EFEFEF}\textbf{GIC} & 
\begin{tabular}[c]{@{}c@{}}100\\ (12.79$\%$)\end{tabular} & \begin{tabular}[c]{@{}c@{}}113\\ (9.30$\%$)\end{tabular} & \begin{tabular}[c]{@{}c@{}}151\\ (6.34$\%$)\end{tabular} & \begin{tabular}[c]{@{}c@{}}174\\ (5.07$\%$)\end{tabular} & \begin{tabular}[c]{@{}c@{}}217\\ (3.40$\%$)\end{tabular} & \begin{tabular}[c]{@{}c@{}}276\\ (2.53$\%$)\end{tabular} & \begin{tabular}[c]{@{}c@{}}403\\ (1.42$\%$)\end{tabular} & 
\begin{tabular}[c]{@{}c@{}}205\\ (5.84$\%$)\end{tabular} \\ \midrule
\cellcolor[HTML]{EFEFEF}\textbf{Relative to BIC} & 7.04\% & 9.30\% & 12.17\% & 15.67\% & 22.02\% & 21.76\% & 22.22\% & 18.06\% \\ \midrule
\cellcolor[HTML]{EFEFEF}\textbf{Relative GIC} & 40.85\% & 31.40\% & 31.30\% & 29.85\% & 29.17\% & 27.78\% & 35.69\% & 32.26\%
\end{tabular}}
\caption{Each cell provides the average $\tilde{p}$ (rounded to the nearest integer) and average associated relative error (rounded to 2 d.p.) of the subsequently fitted $\mathtt{MA}$ models for the given sample size and estimation method.}
\label{tab:MA_RE}
\end{table}

\begin{figure}[h!]
	\centering
	\begin{subfigure}{.3\textwidth}
		\includegraphics[width=\textwidth]{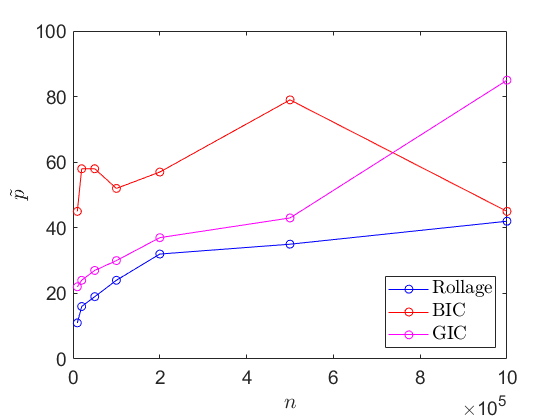}
		\caption{$\mathtt{MA}(5)$}
	\end{subfigure}
	\begin{subfigure}{.3\textwidth}
		\includegraphics[width=\textwidth]{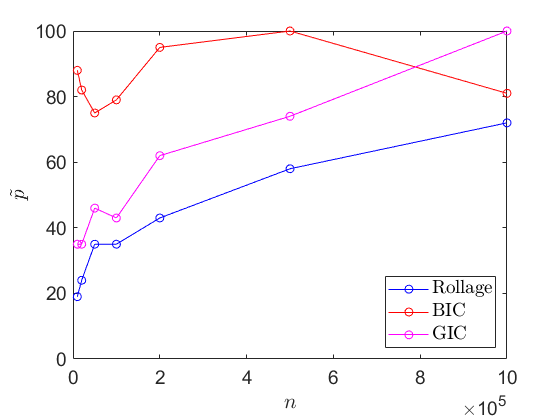}
		\caption{$\mathtt{MA}(10)$}
	\end{subfigure}
	\begin{subfigure}{.3\textwidth}
		\includegraphics[width=\textwidth]{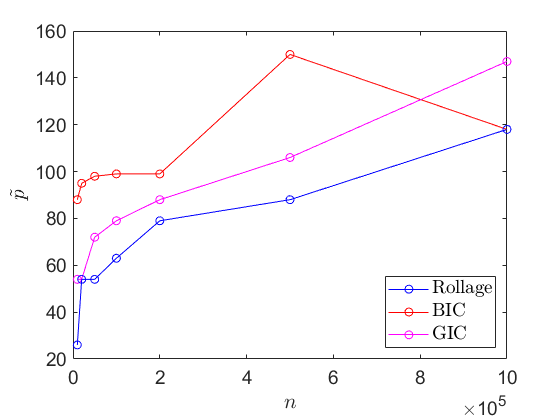}
		\caption{$\mathtt{MA}(25)$}
	\end{subfigure}
	\begin{subfigure}{.3\textwidth}
		\includegraphics[width=\textwidth]{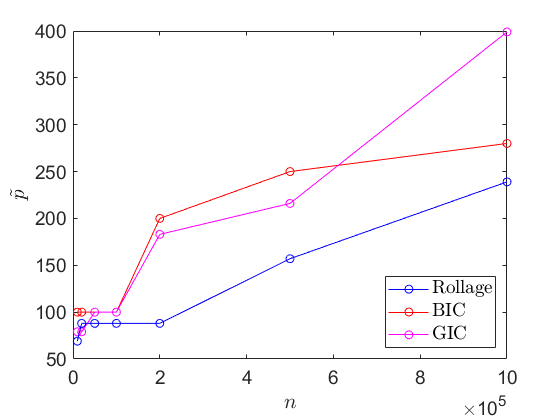}
		\caption{$\mathtt{MA}(45)$}
	\end{subfigure}
	\begin{subfigure}{.3\textwidth}
		\includegraphics[width=\textwidth]{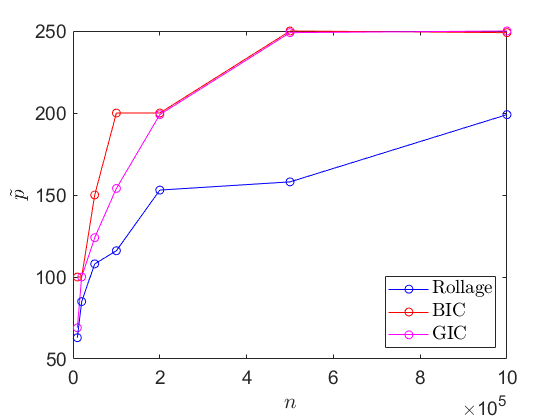}
		\caption{$\mathtt{MA}(50)$}
	\end{subfigure}
	\begin{subfigure}{.3\textwidth}
		\includegraphics[width=\textwidth]{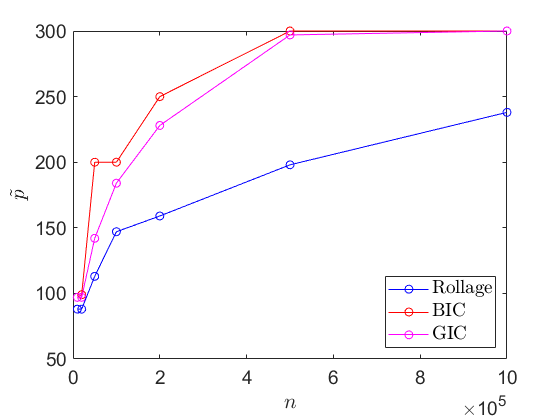}
		\caption{$\mathtt{MA}(55)$}
	\end{subfigure}
	\begin{subfigure}{.3\textwidth}
		\includegraphics[width=\textwidth]{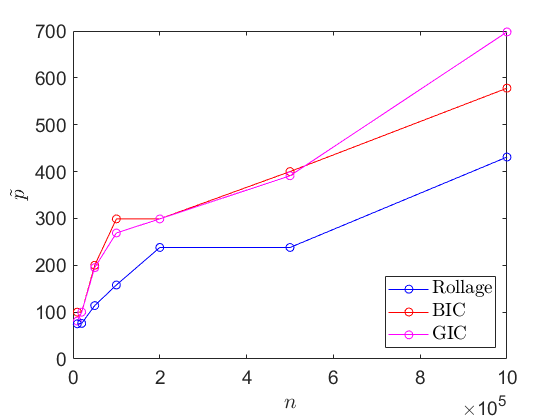}
		\caption{$\mathtt{MA}(75)$}
	\end{subfigure}
	\begin{subfigure}{.3\textwidth}
		\includegraphics[width=\textwidth]{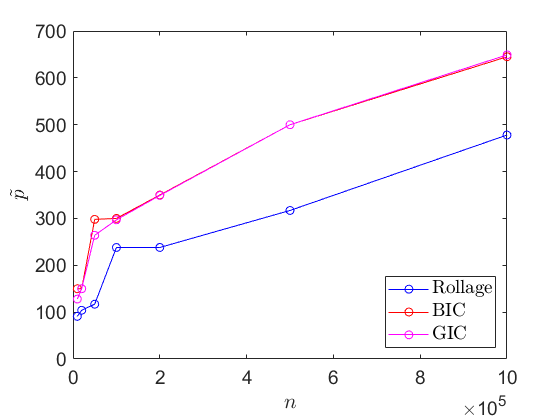}
		\caption{$\mathtt{MA}(90)$}
	\end{subfigure}
	\begin{subfigure}{.3\textwidth}
		\includegraphics[width=\textwidth]{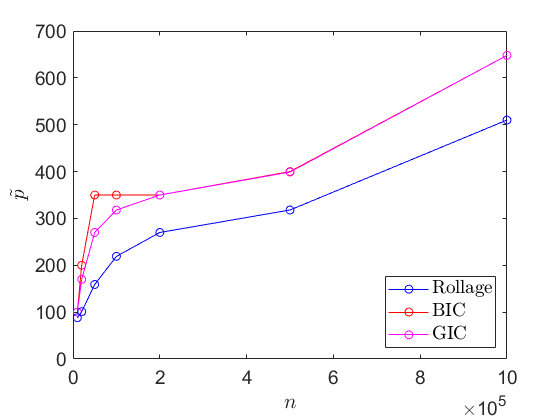}
		\caption{$\mathtt{MA}(100)$}
	\end{subfigure}
	\caption{This figure illustrates $\tilde{p}$ as a function of the sample size $n$ for various $\mathtt{MA}(q)$ models. The \texttt{Rollage} criteria is shown to consistently produce the smallest estimate for $\tilde{p}$. A logarithmic relationship can be observed between $\tilde{p}$ and the sample size $n$.}
	\label{fig:ptildevn}
\end{figure}

\begin{table}[]
\centering
\begin{tabular}{ccccccc}
\hline
\rowcolor[HTML]{EFEFEF} 
 &  \textbf{$q$} & \textbf{$\log(n)$} & \textbf{$q\log(n)$} & \textbf{$R^2_a$} \\ \hline
\cellcolor[HTML]{EFEFEF}\textbf{\texttt{Rollage}} & -7.12 & 3.19 & 0.81 & 0.95 \\ \hline
\cellcolor[HTML]{EFEFEF}BIC & -9.39 & 3.27 & 1.06 & 0.95 \\ \hline
\cellcolor[HTML]{EFEFEF}GIC & -9.43 & 5.76 & 1.05 & 0.94 \\ \hline
\end{tabular}
\caption{Coefficients for the corresponding terms of linear models fitted to predict the optimal $\tilde{p}$ chosen by each criterion based on all $\mathtt{MA}$ simulated time series data.}
\label{tab:linear_models_MA}
\end{table}

\begin{figure}[h!]
	\centering
	\begin{subfigure}{.3\textwidth}
		\includegraphics[width=\textwidth]{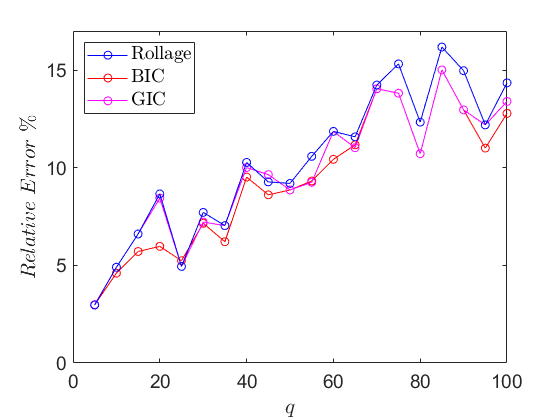}
		\caption{$n=20,000$}
	\end{subfigure}
	\begin{subfigure}{.3\textwidth}
		\includegraphics[width=\textwidth]{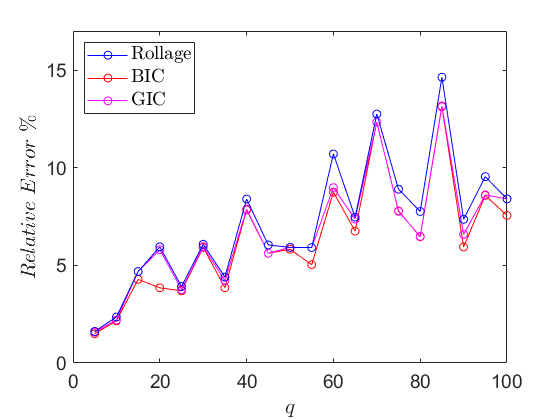}
		\caption{$n=50,000$}
	\end{subfigure}
	\begin{subfigure}{.3\textwidth}
		\includegraphics[width=\textwidth]{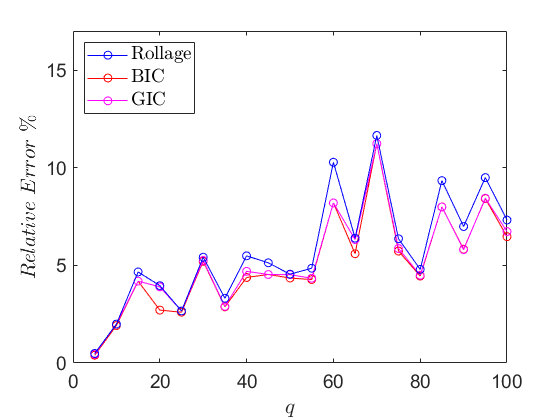}
		\caption{$n=100,000$}
	\end{subfigure}
	\begin{subfigure}{.3\textwidth}
		\includegraphics[width=\textwidth]{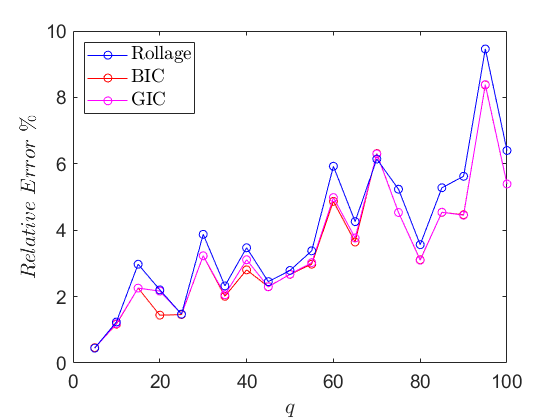}
		\caption{$n=200,000$}
	\end{subfigure}
	\begin{subfigure}{.3\textwidth}
		\includegraphics[width=\textwidth]{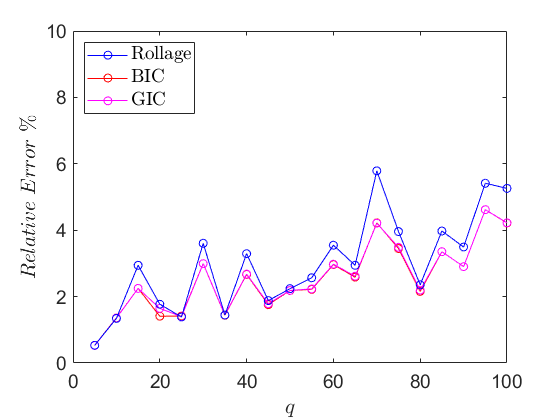}
		\caption{$n=500,000$}
	\end{subfigure}
	\begin{subfigure}{.3\textwidth}
		\includegraphics[width=\textwidth]{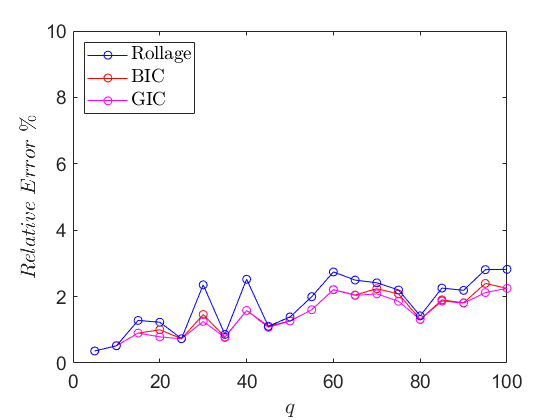}
		\caption{$n=1,000,000$}
	\end{subfigure}
	\caption{Each graph illustrates the $Relative$ $Error$ $\%$ of estimating the parameters of an $\mathtt{MA}(q)$ model for each of the \texttt{Rollage}, BIC and GIC criteria. The \texttt{Rollage}, BIC and GIC criteria are represented by the blue, red and magenta graphs respectively. The six graphs represent the different sample sizes $n$ that were used to fit an $\mathtt{MA}(q)$ model. All three criteria produce similar relative errors in estimating the parameters of an associated $\mathtt{MA}(q)$ model. Relative errors reduce as sample size $n$ increases for all criteria.}
	\label{fig:MA_ra}
\end{figure}

\begin{figure}[h!]
	\centering
	\begin{subfigure}{.3\textwidth}
		\includegraphics[width=\textwidth]{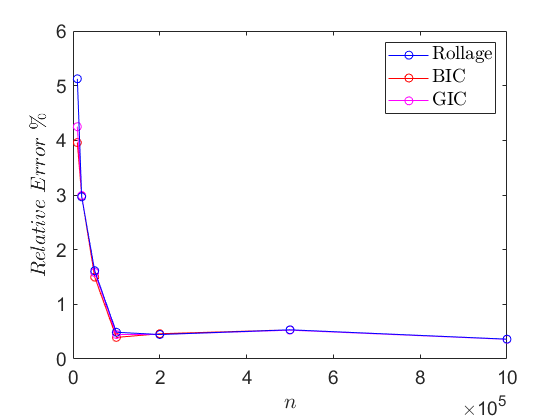}
		\caption{$\mathtt{MA}(5)$}
	\end{subfigure}
	\begin{subfigure}{.3\textwidth}
		\includegraphics[width=\textwidth]{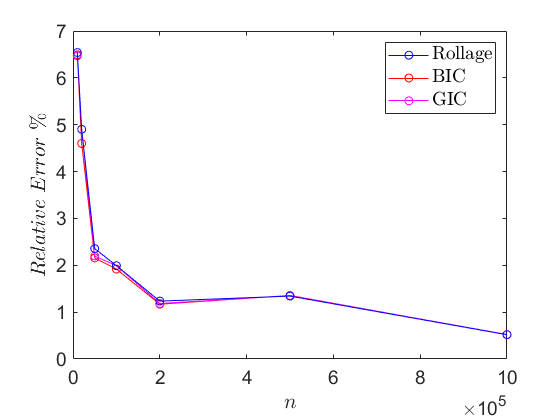}
		\caption{$\mathtt{MA}(10)$}
	\end{subfigure}
	\begin{subfigure}{.3\textwidth}
		\includegraphics[width=\textwidth]{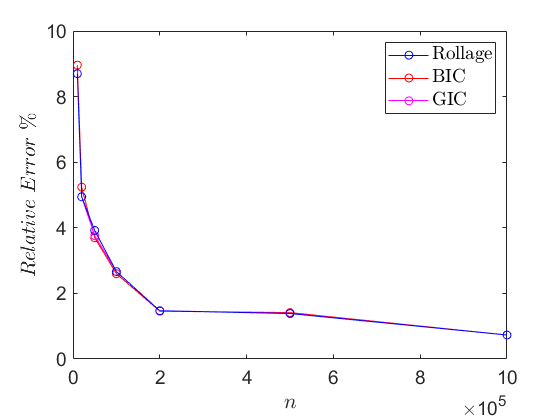}
		\caption{$\mathtt{MA}(25)$}
	\end{subfigure}
	\begin{subfigure}{.3\textwidth}
		\includegraphics[width=\textwidth]{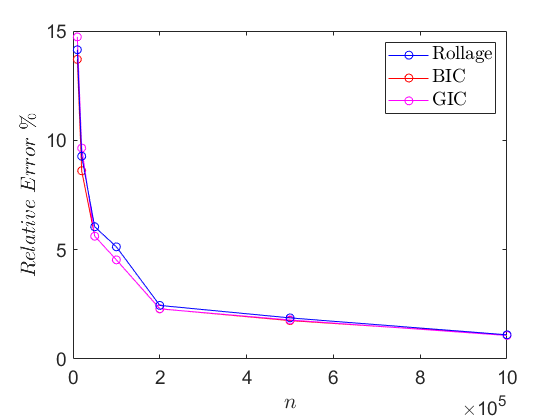}
		\caption{$\mathtt{MA}(45)$}
	\end{subfigure}
	\begin{subfigure}{.3\textwidth}
		\includegraphics[width=\textwidth]{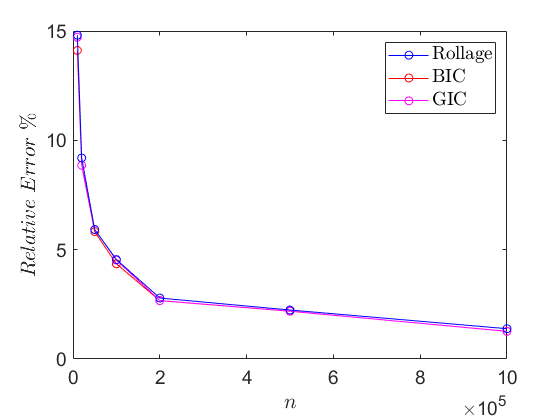}
		\caption{$\mathtt{MA}(50)$}
	\end{subfigure}
	\begin{subfigure}{.3\textwidth}
		\includegraphics[width=\textwidth]{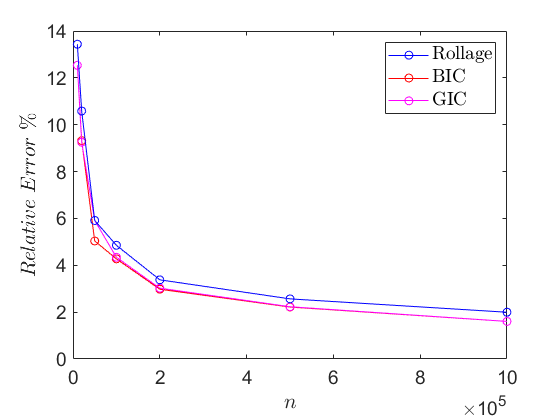}
		\caption{$\mathtt{MA}(55)$}
	\end{subfigure}
	\begin{subfigure}{.3\textwidth}
		\includegraphics[width=\textwidth]{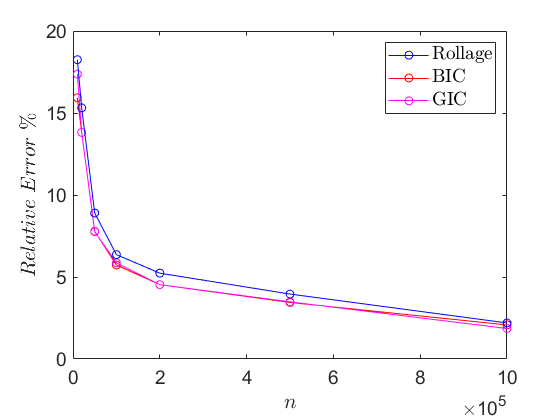}
		\caption{$\mathtt{MA}(75)$}
	\end{subfigure}
	\begin{subfigure}{.3\textwidth}
		\includegraphics[width=\textwidth]{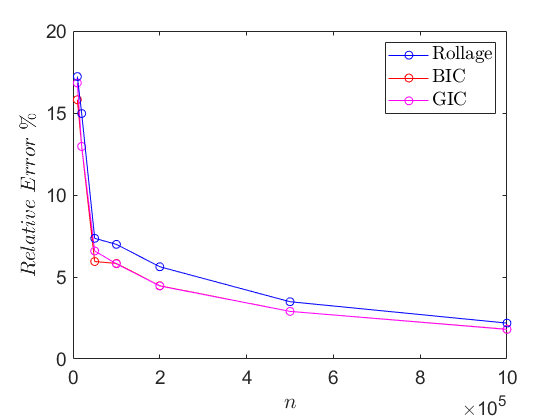}
		\caption{$\mathtt{MA}(90)$}
	\end{subfigure}
	\begin{subfigure}{.3\textwidth}
		\includegraphics[width=\textwidth]{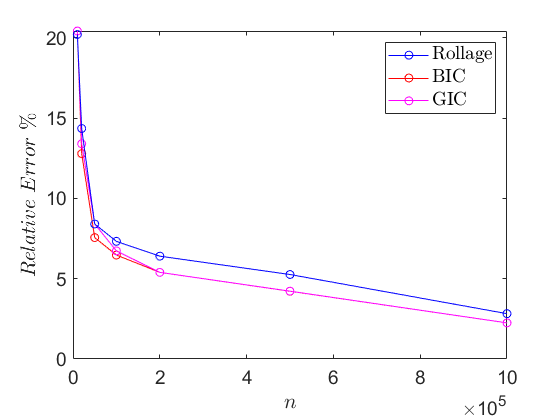}
		\caption{$\mathtt{MA}(100)$}
	\end{subfigure}
	\caption{This figure illustrates $Relative$ $Error$ $\%$ of estimating the parameters of an $\mathtt{MA}(q)$ model as a function of the sample size $n$. The \texttt{Rollage} criterion is shown to produce similar relative errors when compared to the BIC and GIC criteria. A negative exponential relationship can be observed between $Relative$ $Error$ $\%$ and the sample size $n$.}
	\label{fig:revn}
\end{figure}

\subsection{Autoregressive Moving Average Models}
\label{sec:emp_ARMA}
For each of the 400 $\mathtt{ARMA}(p,q)$ models, synthetic time series data was generated for sample sizes $n=$ 10,000, 20,000, 50,000, 100,000, 200,000, 500,000 and 1,000,000. 
For each dataset, the \texttt{Rollage}, BIC and GIC criteria produced estimates of the optimal $\tilde{p}$ value for fitting a large $\mathtt{AR}$ model before subsequently fitting the full $\mathtt{ARMA}$ model. The threshold hyper-parameter for the \texttt{Rollage} algorithm was varied from 2.5 to 4 in 0.25 increments throughout the experiments and 3 was found to provide the lowest average relative errors for the parameter estimates over the various $n<$ 1,000,000 and 3.5 for $n=$ 1,000,000. 

Similar to the case with $\mathtt{MA}$ models, Figure~\ref{fig:pt_vs_n_relerr} provides examples of the \texttt{Rollage} algorithm consistently providing smaller values of $\tilde{p}$ than the other criteria. The total average column of Table~\ref{tab:RE} summarises this trend for all considered sample sizes, showing also that GIC consistently provides the largest values. \texttt{Rollage} begins suggesting significantly smaller $\tilde{p}$ than BIC for $n>$ 50,000, and for all considered sample sizes for GIC. The bottom 2 rows of Table~\ref{tab:RE} quantify this relative difference, which is calculated as per \cref{eqn:RelDiff}.
The relative difference becomes significant as $n$ gets large. However, the \texttt{Rollage} algorithm provides relative errors that closely matches those for BIC and GIC for all considered sample sizes. When $n=$ 1,000,000, we see that BIC provides a relative error only $1.5\%$ lower than \texttt{Rollage}, but the $\tilde{p}$ prescribed by BIC is almost $20\%$ larger than that for \texttt{Rollage}. This is evidence that the \texttt{Rollage} algorithm could provide significant computational savings within the context of big data with a limited reduction in parameter accuracy. 

Figures~\ref{fig:pt_vs_n_relerr} and~\ref{fig:pt_vs_p_q} suggest that $\tilde{p}$ grows linearly as a function of $q$ and logarithmically as a function of $n$. The relationship with $p$ is less obvious, but linear models find $q$, $p$, $p\log(n)$, $q\log(n)$ and $\log(n)$ to be significant predictors for $\tilde{p}$ across all three criterion. The coefficients of the three linear models are provided in Table~\ref{tab:linear_models}. The dependence of $\tilde{p}$ suggested by \texttt{Rollage} on $p$ and $q$ is clarified if we consider the rearranged model
\begin{equation}
        \centering
        \hat{\tilde{p}} = p(2.08-0.14\log(n))+q(0.76\log(n)-6.29)+4.19\log(n).
\end{equation}

Here we can see that for the range of sample sizes considered for this study, the coefficients of $p$ and $q$ are both positive, but as $n$ increases the coefficient of $p$ decreases while the coefficient of $q$ increases. This is more evidence that for larger $n$, the value of $q$ is more critical for determining the optimal $\tilde{p}$ than $p$. This is likely due to $q$ determining the number of white noise columns included in the design matrix. These regression models can't be applied generally to prescribe a $\tilde{p}$, even within the range of $p$, $q$ and $n$ considered, as the 400 $\mathtt{ARMA}$ models may not be representative of the entire population of $\mathtt{ARMA}$ models. The linear model does however provide a good initial approximation for $\tilde{p}$ and offer insight into variable relationships. 

\begin{table}[]
\resizebox{\textwidth}{!}{\begin{tabular}{@{}ccccccccc@{}}
\toprule
\rowcolor[HTML]{EFEFEF} 
 & \textbf{10k} & \textbf{20k} & \textbf{50k} & \textbf{100k} & \textbf{200k} & \textbf{500k} & \textbf{1M} & \textbf{Total Average} \\ \midrule
\cellcolor[HTML]{EFEFEF}\textbf{\texttt{Rollage}} & \begin{tabular}[c]{@{}c@{}}130\\ (35.94$\%$)\end{tabular} & \begin{tabular}[c]{@{}c@{}}142\\ (32.59$\%$)\end{tabular} & \begin{tabular}[c]{@{}c@{}}168\\ (27.33$\%$)\end{tabular} & \begin{tabular}[c]{@{}c@{}}172\\ (25.32$\%$)\end{tabular} & \begin{tabular}[c]{@{}c@{}}205\\ (20.59$\%$)\end{tabular} & \begin{tabular}[c]{@{}c@{}}269\\ (14.99$\%$)\end{tabular} & \begin{tabular}[c]{@{}c@{}}301\\ (11.09$\%$)\end{tabular} & \begin{tabular}[c]{@{}c@{}}195\\ (23.88$\%$)\end{tabular} \\ \midrule
\cellcolor[HTML]{EFEFEF}\textbf{BIC} & 
\begin{tabular}[c]{@{}c@{}}131\\ (35.24$\%$)\end{tabular} & \begin{tabular}[c]{@{}c@{}}143\\ (31.90$\%$)\end{tabular} & \begin{tabular}[c]{@{}c@{}}169\\ (26.97$\%$)\end{tabular} & \begin{tabular}[c]{@{}c@{}}182\\ (24.12$\%$)\end{tabular} & \begin{tabular}[c]{@{}c@{}}222\\ (19.06$\%$)\end{tabular} & \begin{tabular}[c]{@{}c@{}}292\\ (13.51$\%$)\end{tabular} & \begin{tabular}[c]{@{}c@{}}358\\ (9.68$\%$)\end{tabular} & 
\begin{tabular}[c]{@{}c@{}}211\\ (22.84$\%$)\end{tabular} \\ \midrule
\cellcolor[HTML]{EFEFEF}\textbf{GIC} & 
\begin{tabular}[c]{@{}c@{}}170\\ (35.00$\%$)\end{tabular} & \begin{tabular}[c]{@{}c@{}}186\\ (31.53$\%$)\end{tabular} & \begin{tabular}[c]{@{}c@{}}220\\ (26.76$\%$)\end{tabular} & \begin{tabular}[c]{@{}c@{}}219\\ (23.74$\%$)\end{tabular} & \begin{tabular}[c]{@{}c@{}}263\\ (18.86$\%$)\end{tabular} & \begin{tabular}[c]{@{}c@{}}344\\ (13.72$\%$)\end{tabular} & \begin{tabular}[c]{@{}c@{}}425\\ (9.83$\%$)\end{tabular} & 
\begin{tabular}[c]{@{}c@{}}257\\ (22.69$\%$)\end{tabular} \\ \midrule
\cellcolor[HTML]{EFEFEF}\textbf{Relative to BIC} & 0.77\% & 0.70\% & 0.60\% & 5.81\% & 8.29\% & 8.55\% & 18.94\% & 8.21\% \\ \midrule
\cellcolor[HTML]{EFEFEF}\textbf{Relative GIC} & 30.77\% & 30.99\% & 30.95\% & 27.33\% & 28.29\% & 27.88\% & 41.20\% & 31.79\%
\end{tabular}}
\caption{Each cell provides the average $\tilde{p}$ (rounded to the nearest integer) and average associated relative error (rounded to 2 d.p.) of the subsequently fitted $\mathtt{ARMA}$ models for the given sample size and estimation method.}
\label{tab:RE}
\end{table}


\begin{figure}[h!]
	\centering
	\begin{subfigure}{.48\textwidth}
		\includegraphics[width=\textwidth]{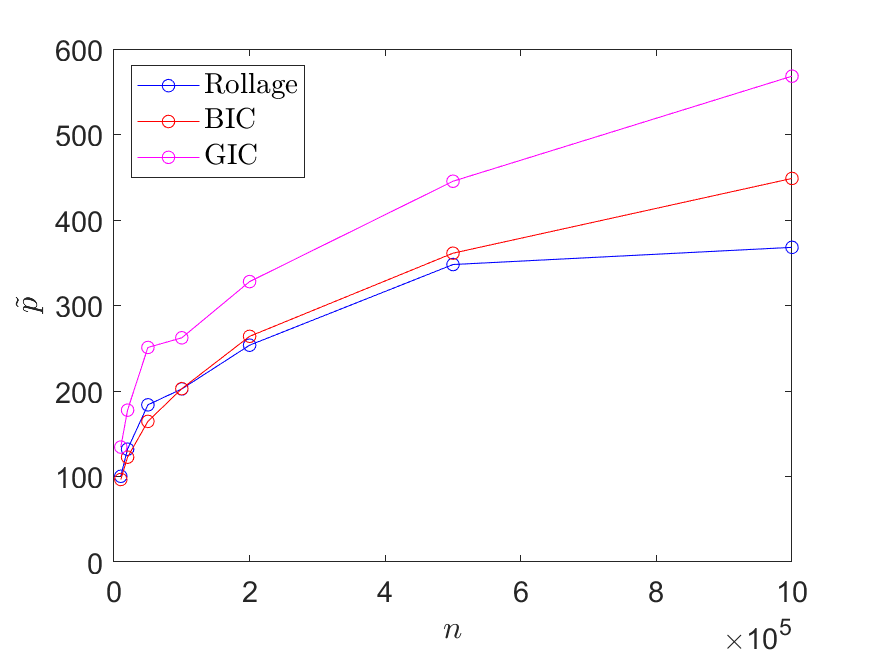}
		\caption{$\mathtt{ARMA}(25,75)$}
	\end{subfigure}
	\begin{subfigure}{.48\textwidth}
		\includegraphics[width=\textwidth]{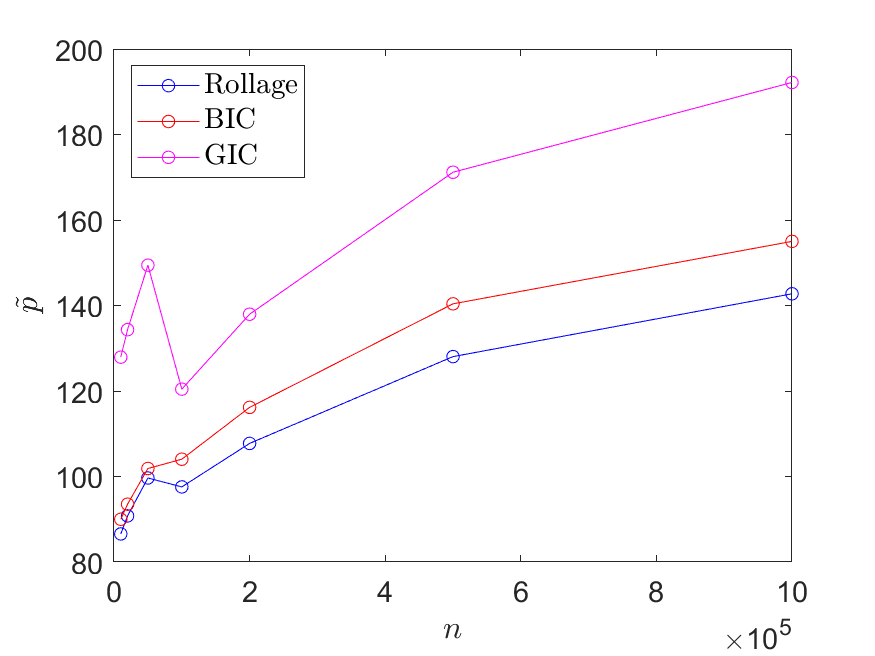}
		\caption{$\mathtt{ARMA}(75,25)$}
	\end{subfigure}
	\begin{subfigure}{.48\textwidth}
		\includegraphics[width=\textwidth]{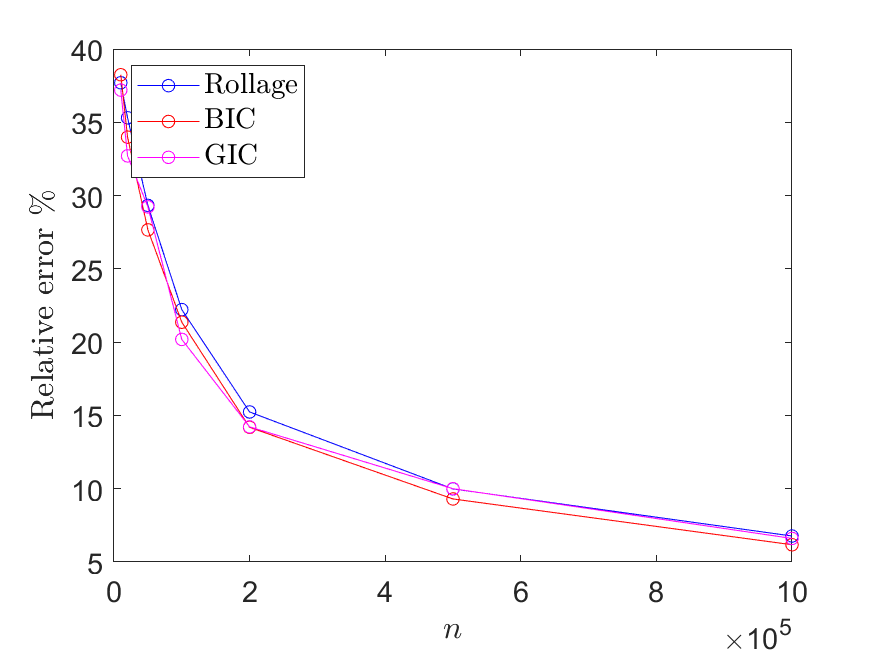}
		\caption{$\mathtt{ARMA}(25,75)$}
	\end{subfigure}
	\begin{subfigure}{.48\textwidth}
		\includegraphics[width=\textwidth]{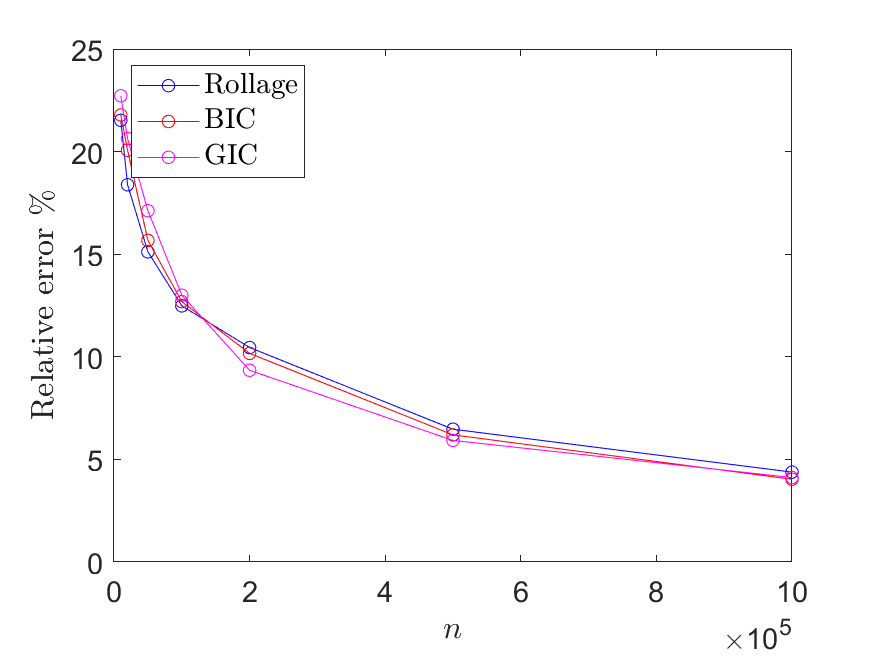}
		\caption{$\mathtt{ARMA}(75,25)$}
	\end{subfigure}
	\caption{The first two plots (left to right) show the values of $\tilde{p}$ suggested by the 3 criterion plotted against $n$ for two $\mathtt{ARMA}$ models. The last two plots show the $Relative$ $Error$ $\%$ of the full $\mathtt{ARMA}$ parameter estimates for the same models when the optimal $\tilde{p}$ according to each criterion was used in Durbin's Algorithm (\ref{alg:hr}).}
	\label{fig:pt_vs_n_relerr}
\end{figure}


\begin{table}[h!]
\centering
\begin{tabular}{ccccccc}
\hline
\rowcolor[HTML]{EFEFEF} 
 & \textbf{$p$} & \textbf{$q$} & \textbf{$\log(n)$} & \textbf{$p\log(n)$} & \textbf{$q\log(n)$} & \textbf{$R^2_a$} \\ \hline
\cellcolor[HTML]{EFEFEF}\textbf{\texttt{Rollage}} & 2.08 & -6.29 & 4.19 & -0.14 & 0.76 & 0.92 \\ \hline
\cellcolor[HTML]{EFEFEF}BIC & 1.87 & -8.12 & 3.55 & -0.11 & 0.94 & 0.94 \\ \hline
\cellcolor[HTML]{EFEFEF}GIC & 2.54 & -8.86 & 7.09 & -0.19 & 1.03 & 0.90 \\ \hline
\end{tabular}
\caption{Coefficients for the corresponding terms of linear models fitted to predict the optimal $\tilde{p}$ chosen by each criterion based on all $\mathtt{ARMA}$ simulated time series data.}
\label{tab:linear_models}
\end{table}

\begin{figure}[h!]
	\centering
		\begin{subfigure}{.24\textwidth}
		\includegraphics[width=\textwidth]{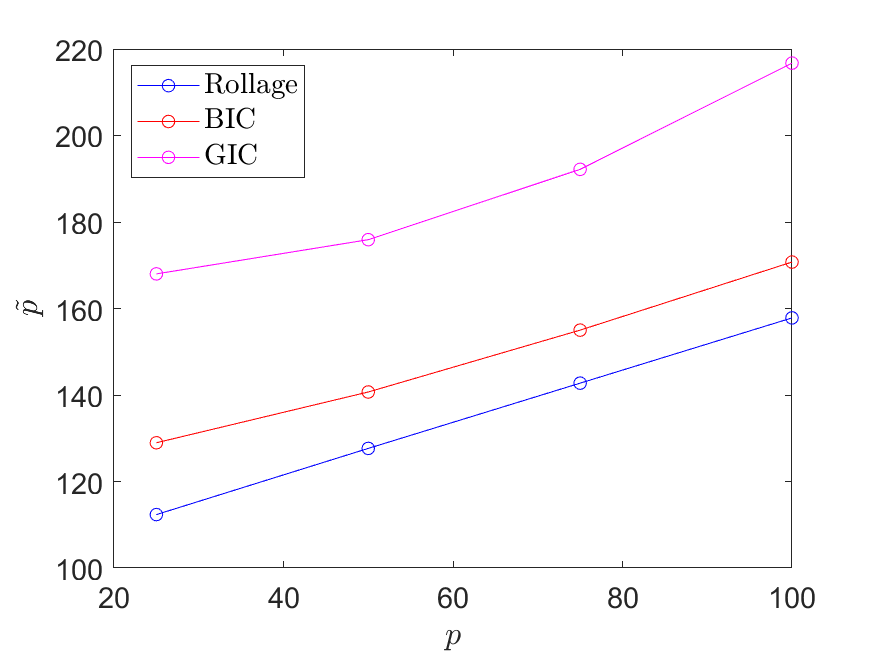}
		\caption{$\mathtt{ARMA}(p,25)$}
	\end{subfigure}
	\begin{subfigure}{.24\textwidth}
		\includegraphics[width=\textwidth]{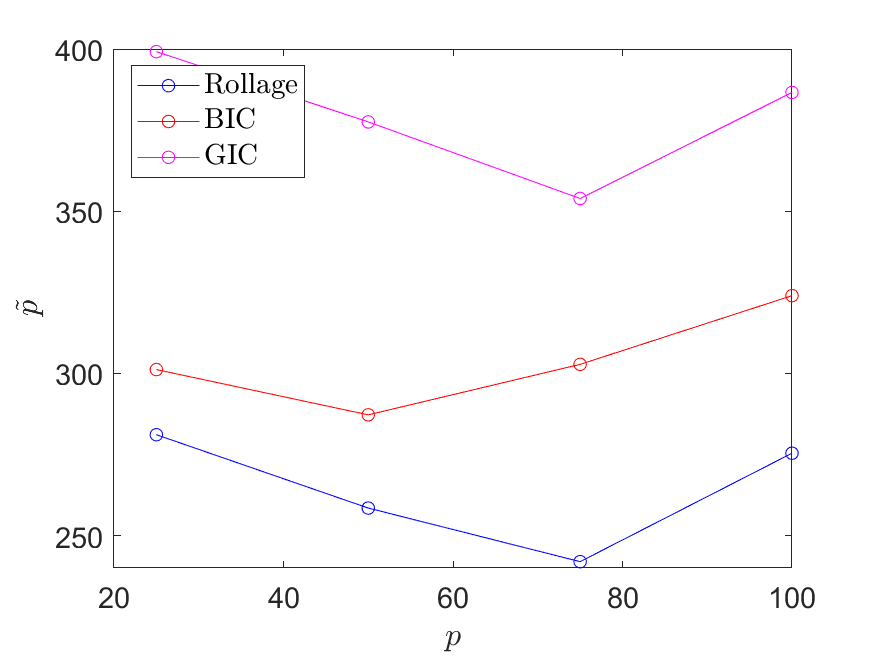}
		\caption{$\mathtt{ARMA}(p,50)$}
	\end{subfigure}
	\begin{subfigure}{.24\textwidth}
		\includegraphics[width=\textwidth]{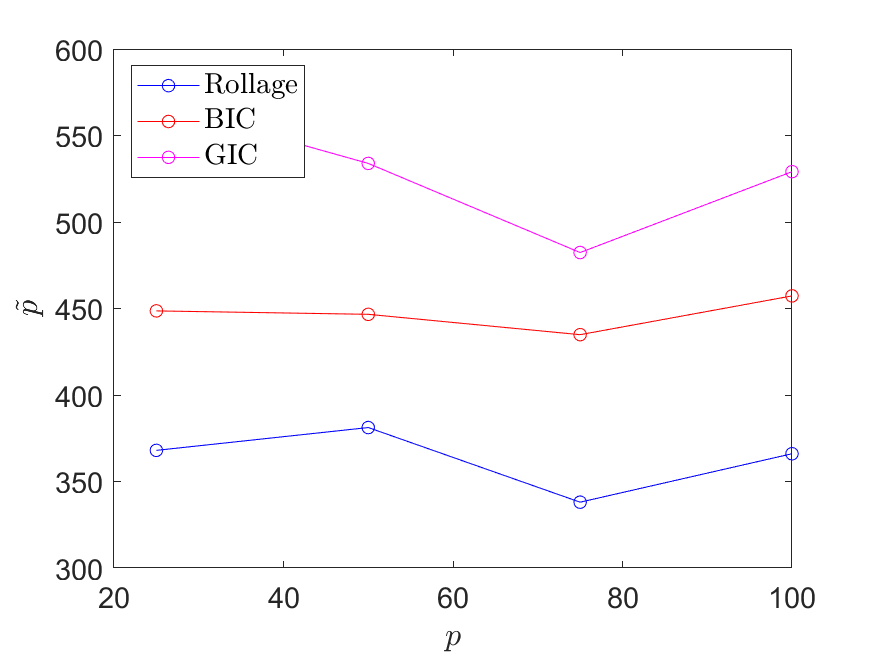}
		\caption{$\mathtt{ARMA}(p,75)$}
	\end{subfigure}
	\begin{subfigure}{.24\textwidth}
		\includegraphics[width=\textwidth]{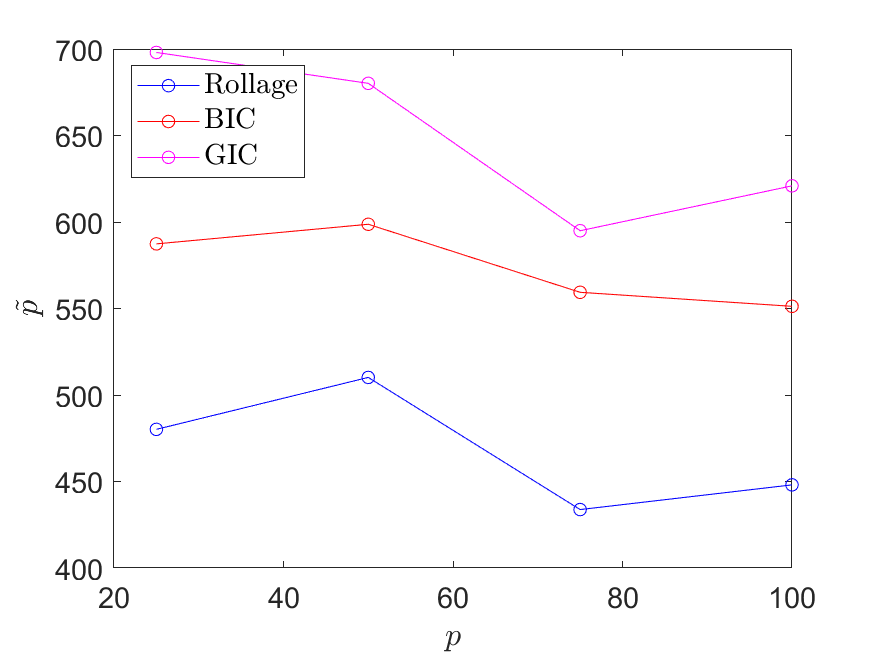}
		\caption{$\mathtt{ARMA}(p,100)$}
	\end{subfigure}
		\begin{subfigure}{.24\textwidth}
		\includegraphics[width=\textwidth]{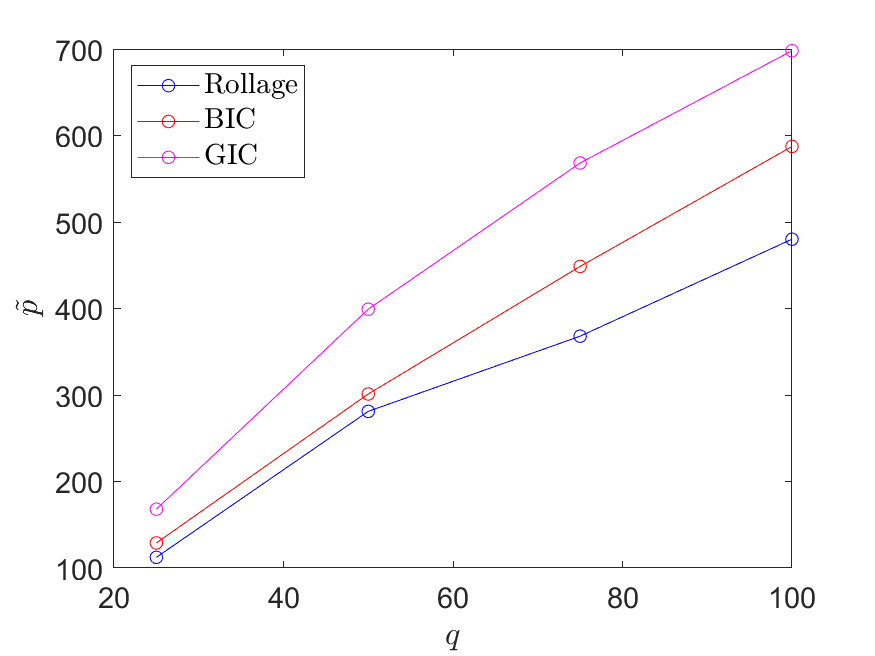}
		\caption{$\mathtt{ARMA}(25,q)$}
	\end{subfigure}
	\begin{subfigure}{.24\textwidth}
		\includegraphics[width=\textwidth]{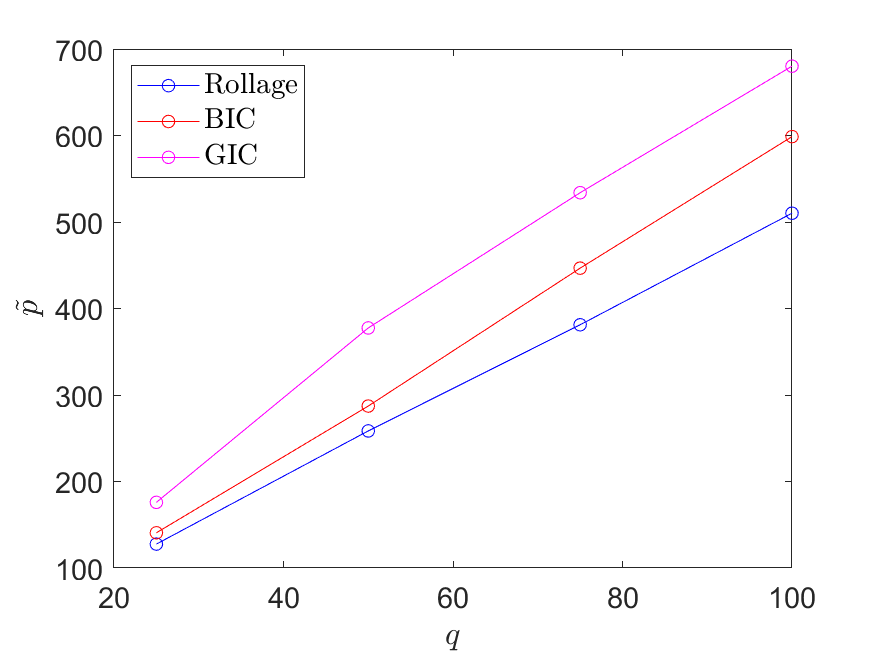}
		\caption{$\mathtt{ARMA}(50,q)$}
	\end{subfigure}
	\begin{subfigure}{.24\textwidth}
		\includegraphics[width=\textwidth]{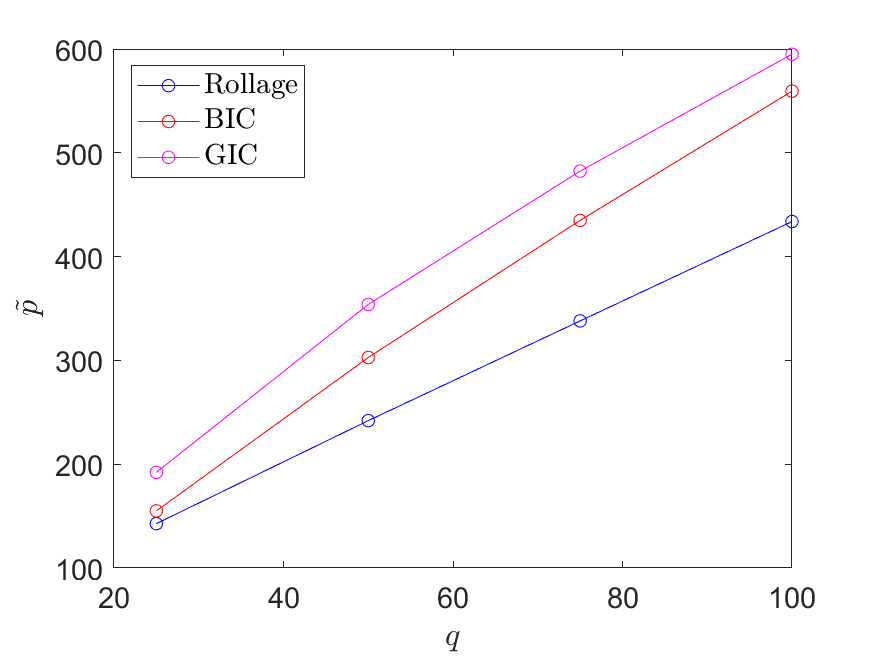}
		\caption{$\mathtt{ARMA}(75,q)$}
	\end{subfigure}
	\begin{subfigure}{.24\textwidth}
		\includegraphics[width=\textwidth]{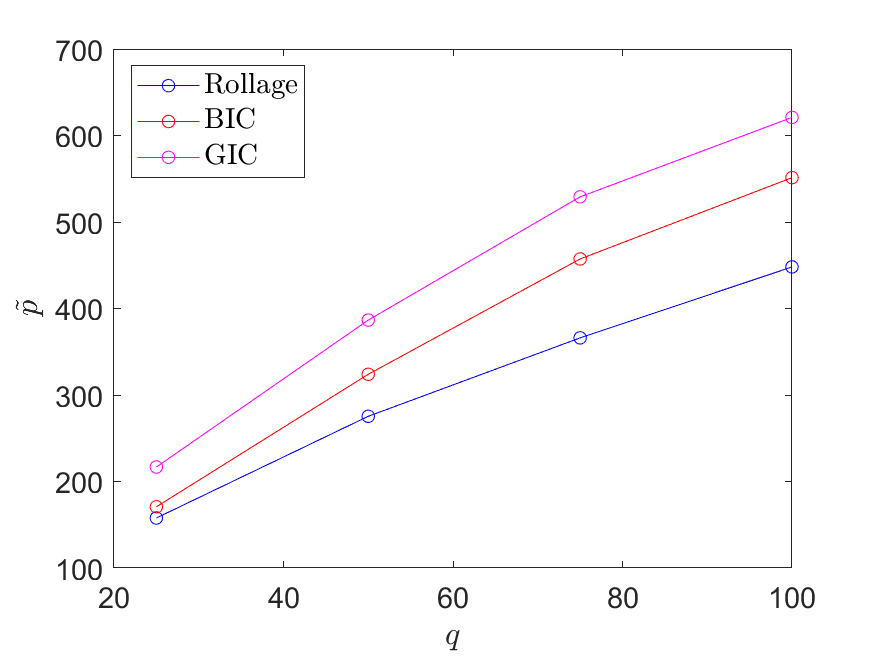}
		\caption{$\mathtt{ARMA}(100,q)$}
	\end{subfigure}
	\caption{The values of $\tilde{p}$ suggested by the three criterion are plotted against $p$ and $q$ for $n=1,000,000$. The dependence of $\tilde{p}$ linearly on $q$ is clearly more significant than the dependence on $p$.}
	\label{fig:pt_vs_p_q}
\end{figure}

	
	\section{Conclusion} \label{sec:Conclusion}

In this paper, we have developed a new efficient algorithm to estimate an \texttt{AR} model. Motivated by \cref{thm:lim_dist_MLE}, we utilise the concept of a rolling average to develop an algorithm, called \texttt{Rollage}, to estimate the order of an \texttt{AR} model and subsequently fit the model. When used in conjunction with existing methodology, specifically Durbin's, the \texttt{Rollage} algorithm can be used as a criterion to optimally fit \texttt{ARMA} models to big time series data. Empirical results on large-scale synthetic time series data show the efficacy of \texttt{Rollage} and when compared to existing criteria, the \texttt{Rollage} algorithm provides a great trade off between computational complexity and algorithmic accuracy.

	
	
	

	\clearpage
	\appendix
\section{Technical Lemmas/Propositions/Theorems and Proofs}
\label{sec:proofs}


\subsection{Proof of \cref{thm:lrc_matrix}}

We first present \cref{lem:inv_block_matrix} \cite{Golub1983Matrix}, \cref{thm:recursion_gamma} \cite{Brockwell2009TimeSeries}, and \cref{thm:invA_detA} \cite{Golub1983Matrix} and then prove \cref{pro:sigma_matrix,pro:det_Gamma} which are used in the proof of \cref{thm:lrc_matrix}. 

\begin{lemma}[Block Matrix Inversion \cite{Golub1983Matrix}] \label{lem:inv_block_matrix}
	Consider the $2\times 2$ block matrix
	\begin{align*}
		\medskip M & = \begin{pmatrix}
			\medskip c & \bm{b}^{\transpose} \\
			\medskip \bm{b} & \AA 
		\end{pmatrix},
	\end{align*}
	where $\AA$, $\bm{b}$, and $c$ are an $m\times m$ matrix, an $m\times 1$ vector and a scalar, respectively. If $\AA$ is invariable, the inverse of matrix $M$ exists an can be calculated as follows
	\begin{align*}
		\medskip M^{-1} & = \frac{1}{k}\begin{pmatrix}
			\medskip 1 & -\bm{b}^{\transpose} \AA^{-1} \\
			\medskip -\AA^{-1} \bm{b} & k \AA^{-1} + \AA^{-1} \bm{b} \bm{b}^{\transpose} \AA^{-1}
		\end{pmatrix},
	\end{align*}
	where $k = c - \bm{b}^{\transpose} \AA^{-1} \bm{b}$.
\end{lemma}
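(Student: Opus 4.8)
The plan is to verify the displayed formula directly, which is the cleanest route for this classical identity. Write $k = c - \bm{b}^{\transpose}\AA^{-1}\bm{b}$ for the Schur complement of $\AA$ in $M$ and denote the proposed inverse by
\[
N = \frac{1}{k}\begin{pmatrix} 1 & -\bm{b}^{\transpose}\AA^{-1} \\ -\AA^{-1}\bm{b} & k\AA^{-1} + \AA^{-1}\bm{b}\bm{b}^{\transpose}\AA^{-1} \end{pmatrix}.
\]
Since $\AA$ is invertible and (implicitly) $k \neq 0$, every block of $N$ is well defined, and $N$ is partitioned conformally with $M$: a scalar $(1,1)$ block, a $1\times m$ row block, an $m\times 1$ column block, and an $m\times m$ block. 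Because $M$ is square, it suffices to check the single product $MN = I_{m+1}$, which I would carry out block by block.

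The four block entries reduce to the identity after elementary cancellation. The $(1,1)$ entry is $\tfrac{1}{k}\bigl(c - \bm{b}^{\transpose}\AA^{-1}\bm{b}\bigr) = \tfrac{k}{k} = 1$, which is precisely where the definition of $k$ enters. The $(2,1)$ block is $\tfrac{1}{k}\bigl(\bm{b} - \AA\AA^{-1}\bm{b}\bigr) = \bm{0}$. For the $(1,2)$ block, substituting $c = k + \bm{b}^{\transpose}\AA^{-1}\bm{b}$ and treating $\bm{b}^{\transpose}\AA^{-1}\bm{b}$ as a scalar shows the coefficient of the row vector $\bm{b}^{\transpose}\AA^{-1}$ collapses to zero; the $(2,2)$ block reduces to $I_m$ once the two copies of $\tfrac{1}{k}\bm{b}\bm{b}^{\transpose}\AA^{-1}$ cancel against each other and $\AA\AA^{-1} = I_m$ is used.

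A more structural alternative, which I would also mention, is to derive the formula from the block factorization
\[
M = \begin{pmatrix} 1 & \bm{b}^{\transpose}\AA^{-1} \\ \bm{0} & I \end{pmatrix}\begin{pmatrix} k & \bm{0}^{\transpose} \\ \bm{0} & \AA \end{pmatrix}\begin{pmatrix} 1 & \bm{0}^{\transpose} \\ \AA^{-1}\bm{b} & I \end{pmatrix},
\]
then inverting the three factors (the triangular factors invert by negating their off-diagonal block, the middle block-diagonal factor inverts entrywise) and multiplying the inverses in reverse order. This route has the bonus of producing $\det M = k\,\det\AA$, which clarifies the one point the statement understates: invertibility of $\AA$ by itself does not guarantee that $M^{-1}$ exists — one also needs $k \neq 0$, a condition already encoded by the factor $1/k$. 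Since the identity is standard there is no genuine obstacle; the only things to watch are the bookkeeping of block dimensions (so that, e.g., $\bm{b}\bm{b}^{\transpose}$ is read as the $m\times m$ outer product rather than the scalar $\bm{b}^{\transpose}\bm{b}$) and stating the hypothesis $k \neq 0$ explicitly at the outset.
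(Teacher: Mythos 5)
Your verification is correct, and it is worth noting at the outset that the paper itself offers no proof of this lemma to compare against: it is stated as a classical result imported verbatim from Golub and Van Loan, and the appendix only uses it (in \cref{pro:sigma_matrix}) as a black box. Your direct block-multiplication check of $MN = I_{m+1}$ is complete: the four blocks reduce exactly as you sketch, the $(1,2)$ block collapsing because $c - k = \bm{b}^{\transpose}\AA^{-1}\bm{b}$ is a scalar that commutes with the row vector $\bm{b}^{\transpose}\AA^{-1}$, and the $(2,2)$ block cancelling to $I_m$ once $\AA\AA^{-1} = I_m$ is inserted; and since $M$ is square, a one-sided inverse is automatically two-sided, so checking the single product suffices as you claim. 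The most valuable part of your write-up is the point the lemma's statement understates: as printed (``if $\AA$ is invariable, the inverse of matrix $M$ exists''), the claim is false without the additional hypothesis $k \neq 0$ --- take $c = 0$, $\bm{b} = \bm{0}$, $\AA = I_m$, for which $\AA$ is invertible, $k = 0$, and $M$ is singular. Your factorization route makes the correct hypothesis transparent via $\det M = k\,\det\AA$, so $M$ is invertible precisely when $\AA$ is invertible and the Schur complement $k$ is nonzero. In the paper's sole application this is harmless, since there $M = \Gamma_{p,m}$, $\AA = \Gamma_{p,m-1}$, and $k = \sigma_W^2 > 0$, but your explicit flagging of the condition is a genuine correction to the statement as written rather than a pedantic quibble.
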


\begin{theorem}[Recursion for Autocovariance Fucntion \cite{Brockwell2009TimeSeries}] \label{thm:recursion_gamma}
	For a causal $\mathtt{AR}(p)$ model with an induced Gaussian white noise series with mean $0$ and variance $\sigma_W^2$, the autocovariance function at lag $j$ is given by 
	\begin{align*}
		\medskip \gamma_j & = \begin{cases}
			\phi_1^{(p)} \gamma_{1} + \cdots \phi_p^{(p)} \gamma_{p} + \sigma_W^2 & \mbox{for } j=0, \\
			\phi_1^{(p)} \gamma_{j-1} + \cdots + \phi_p^{(p)} \gamma_{j-p} & \mbox{for } j=1,2,\ldots, \\
			\gamma_{-j} & \mbox{for } j=-1,-2,\ldots.
		\end{cases}
	\end{align*}
\end{theorem}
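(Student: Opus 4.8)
The plan is to derive these Yule--Walker-type equations directly from the defining relation \cref{eq:AR(p)} by the standard device of multiplying through by a lagged observation and taking expectations, exploiting stationarity, the zero-mean assumption $\Ex[Y_t]=0$, and above all the causal representation of \cref{def:causal_AR}.

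First I would fix an integer $j \geq 0$, multiply both sides of $Y_t = \phi_1^{(p)} Y_{t-1} + \cdots + \phi_p^{(p)} Y_{t-p} + W_t$ by $Y_{t-j}$, and take expectations. Since $\Ex[Y_t]=0$, every product expectation is a covariance, and stationarity lets me write $\Ex[Y_{t-i}Y_{t-j}] = \gamma_{j-i}$, so that the right-hand side becomes $\phi_1^{(p)}\gamma_{j-1} + \cdots + \phi_p^{(p)}\gamma_{j-p} + \Ex[W_t Y_{t-j}]$ while the left-hand side is simply $\gamma_j$. The entire content of the first two branches then reduces to evaluating the single cross-term $\Ex[W_t Y_{t-j}]$.

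The key step --- and the only place real structure enters --- is to compute $\Ex[W_t Y_{t-j}]$ using causality. By \cref{def:causal_AR} I may substitute the $\mathtt{MA}(\infty)$ representation $Y_{t-j} = \sum_{i=0}^{\infty} \psi_i W_{t-j-i}$ with $\psi_0 = 1$, so that $\Ex[W_t Y_{t-j}] = \sum_{i=0}^{\infty} \psi_i \Ex[W_t W_{t-j-i}]$. Because the $W_s$ are uncorrelated with common variance $\sigma_W^2$, the only surviving term is the one with $t-j-i = t$, i.e. $i = -j$. For $j \geq 1$ no such nonnegative index exists, so the cross-term vanishes and I obtain $\gamma_j = \phi_1^{(p)}\gamma_{j-1} + \cdots + \phi_p^{(p)}\gamma_{j-p}$; for $j = 0$ only $i=0$ contributes, giving $\psi_0 \sigma_W^2 = \sigma_W^2$ and hence $\gamma_0 = \phi_1^{(p)}\gamma_1 + \cdots + \phi_p^{(p)}\gamma_p + \sigma_W^2$. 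This matches the claimed branches, where any negative-index $\gamma_{j-i}$ appearing on the right is interpreted through the symmetry established next.

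Finally, the negative-lag branch $\gamma_{-j} = \gamma_j$ is immediate: by weak stationarity $\gamma_h = \mathsf{Cov}(Y_t, Y_{t+h})$ depends only on $h$, and $\mathsf{Cov}(Y_t, Y_{t-j}) = \mathsf{Cov}(Y_{t-j}, Y_t)$ shows the autocovariance is an even function of the lag, which also legitimizes the negative indices that arise in the first two branches. I expect the main (though modest) obstacle to be the careful bookkeeping in the causality argument --- in particular justifying the interchange of expectation with the infinite sum, which is guaranteed by the absolute summability $\sum_{i=0}^{\infty}|\psi_i| < \infty$ built into \cref{def:causal_AR} together with the uniform boundedness of the white-noise covariances.
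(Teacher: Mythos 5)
Your proof is correct. Note that the paper itself does not prove this statement---it is quoted as a known result from \cite{Brockwell2009TimeSeries}---and your argument (multiplying the defining \texttt{AR} recursion by $Y_{t-j}$, taking expectations, and evaluating the cross-term $\Ex[W_t Y_{t-j}]$ via the causal $\mathtt{MA}(\infty)$ representation, with the interchange of sum and expectation justified by $\sum_{i=0}^{\infty}|\psi_i|<\infty$ and negative lags handled by evenness of $\gamma$) is exactly the standard Yule--Walker derivation found in that reference, so there is nothing to add.
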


\begin{theorem}[Inverse and Adjugate of a Square Matrix \cite{Golub1983Matrix}] \label{thm:invA_detA}
	If $A$ is an invertible square matrix, its inverse can be represented by
	\begin{align*}
	\medskip A^{-1} & = \frac{1}{\mathsf{det}(A)}\mathsf{adj}(A),
	\end{align*} 
	where $\mathsf{adj}(A)$ is the adjugate of matrix $A$, that is, the transpose of its cofactor matrix.
\end{theorem}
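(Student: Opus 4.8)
The plan is to prove the equivalent multiplicative identity $A\,\mathsf{adj}(A) = \mathsf{det}(A)\, I$ and then, since $A$ is invertible and hence $\mathsf{det}(A) \neq 0$, divide through by the scalar $\mathsf{det}(A)$ to obtain $A^{-1} = \frac{1}{\mathsf{det}(A)}\mathsf{adj}(A)$. Throughout I would write $C_{ij} = (-1)^{i+j} M_{ij}$ for the $(i,j)$ cofactor of $A$, where $M_{ij}$ is the minor obtained by deleting row $i$ and column $j$; by definition of the adjugate as the transpose of the cofactor matrix, $(\mathsf{adj}(A))_{jk} = C_{kj}$.

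First I would compute a generic entry of the product. For any row index $i$ and column index $k$,
\[
(A\,\mathsf{adj}(A))_{ik} = \sum_{j} A_{ij}\,(\mathsf{adj}(A))_{jk} = \sum_{j} A_{ij}\, C_{kj},
\]
so the whole theorem reduces to evaluating the sum $\sum_j A_{ij} C_{kj}$ in the two cases $i = k$ and $i \neq k$ using the Laplace (cofactor) expansion of a determinant along a row.

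For the diagonal case $i = k$, the sum $\sum_j A_{ij} C_{ij}$ is precisely the cofactor expansion of $\mathsf{det}(A)$ along its $i$-th row, so it equals $\mathsf{det}(A)$. For the off-diagonal case $i \neq k$, I would introduce the matrix $\widetilde{A}$ obtained from $A$ by overwriting its $k$-th row with a copy of the $i$-th row. Deleting row $k$ from $\widetilde{A}$ leaves exactly the same submatrices that define the cofactors of $A$ along row $k$, so the cofactors of $\widetilde{A}$ along row $k$ are again $C_{kj}$, while the entries in that row are now $A_{ij}$; hence $\sum_j A_{ij} C_{kj}$ is the cofactor expansion of $\mathsf{det}(\widetilde{A})$ along row $k$. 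Since $\widetilde{A}$ has two identical rows (rows $i$ and $k$), its determinant is zero, and the sum vanishes. Combining the two cases gives $(A\,\mathsf{adj}(A))_{ik} = \mathsf{det}(A)\,\delta_{ik}$, that is $A\,\mathsf{adj}(A) = \mathsf{det}(A)\, I$, which completes the argument after dividing by $\mathsf{det}(A)$.

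The hard part is not the bookkeeping above but the two determinant facts it rests on: that the determinant admits a cofactor (Laplace) expansion along any row, and that a matrix with two equal rows has determinant zero. For a fully self-contained treatment I would derive both from a single definition of the determinant --- either the Leibniz permutation sum, from which the alternating property (equal rows give zero) follows by pairing each permutation with its composition with the relevant transposition, and the row expansion follows by grouping permutations according to the image of a fixed index; or the axiomatic characterization as the unique multilinear, alternating, unit-normalized function of the rows. Since the statement is quoted from Golub and Van Loan, however, I would treat these standard properties as known and present only the reduction to them.
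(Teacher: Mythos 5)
The paper does not prove this theorem at all: it is imported verbatim from Golub and Van Loan as a background fact, used only as a tool in the proof of \cref{thm:lrc_matrix}, so there is no internal argument to compare yours against. Your proof is the standard one and is correct: the diagonal entries of $A\,\mathsf{adj}(A)$ are handled by the Laplace expansion of $\mathsf{det}(A)$ along row $i$, the off-diagonal entries by the alien-cofactor device of overwriting row $k$ with row $i$ and invoking the vanishing of a determinant with two equal rows, and the passage from $A\,\mathsf{adj}(A) = \mathsf{det}(A)\,I$ to the stated formula is legitimate since $A$ is assumed invertible (so $\mathsf{det}(A)\neq 0$, and left-multiplying by $A^{-1}$ turns the one-sided identity into $A^{-1}=\frac{1}{\mathsf{det}(A)}\mathsf{adj}(A)$). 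Your choice to treat the row-expansion property and the equal-rows property as known, with a sketch of how both follow from the Leibniz or axiomatic definition, matches the level at which the paper itself treats the result; the only step worth making explicit in a written version is the one you already gesture at, namely that ``invertible implies $\mathsf{det}(A)\neq 0$'' rests on multiplicativity of the determinant via $\mathsf{det}(A)\,\mathsf{det}(A^{-1})=\mathsf{det}(I)=1$.
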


\begin{proposition}[Recursion For Matrix $\Sigma_{p,m}$] \label{pro:sigma_matrix}
	Let the time series $\{Y_1,\cdots, Y_n\}$ be a causal $\mathtt{AR}(p)$ model. The covariance matrix $\Sigma_{p,m}$ for a fixed $p \geq 1$ and $m=p+2,p+3,\ldots$, satisfies the following recursion
	\begin{align*}
	\medskip \Sigma_{p,m} & = \begin{pmatrix}
	\medskip 1 & \bm{v}_{p,m-1}^\transpose \\
	\medskip \bm{v}_{p,m-1} & \Sigma_{p,m-1} + V_{p,m-1}
	\end{pmatrix},
	\end{align*}
	where $\bm{v}$ is an $(m-1)\times 1$ vector given by
	\begin{align*}	
	\medskip \bm{v}_{p,m-1}^{\transpose} := \begin{bmatrix}
	\phi_1^{(p)} & \cdots & \phi_p^{(p)} & 0 & \cdots & 0
	\end{bmatrix}
	\end{align*}
	and $V_{p,m-1}$ is a symmetric $(m-1)\times (m-1)$ matrix defined by
	\begin{align*}
	\medskip V_{p,m-1} & := \bm{v}_{p,m-1} \bm{v}_{p,m-1}^\transpose.
	\end{align*}
\end{proposition}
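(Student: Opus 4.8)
The plan is to use the relation $\Sigma_{p,m}=\sigma_W^2\Gamma_{p,m}^{-1}$ supplied by \cref{thm:lim_dist_MLE} and to obtain the recursion by inverting the symmetric Toeplitz matrix $\Gamma_{p,m}$ one border at a time. First I would peel off the first row and column and write $\Gamma_{p,m}=\begin{pmatrix}\gamma_0 & \bm{\gamma}^{\transpose}\\ \bm{\gamma} & \Gamma_{p,m-1}\end{pmatrix}$, where $\bm{\gamma}=(\gamma_1,\ldots,\gamma_{m-1})^{\transpose}$. The key structural point is that, because $\Gamma_{p,m}$ is Toeplitz, its trailing $(m-1)\times(m-1)$ principal submatrix is \emph{exactly} the smaller autocovariance matrix $\Gamma_{p,m-1}$; this self-similarity is what makes a one-step recursion possible.

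Next I would apply the block inversion formula of \cref{lem:inv_block_matrix} with $c=\gamma_0$, $\bm{b}=\bm{\gamma}$ and $\AA=\Gamma_{p,m-1}$. This reduces everything to two quantities: the vector $\Gamma_{p,m-1}^{-1}\bm{\gamma}$, which controls the off-diagonal blocks, and the Schur complement $k=\gamma_0-\bm{\gamma}^{\transpose}\Gamma_{p,m-1}^{-1}\bm{\gamma}$, which controls the overall scaling.

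The heart of the argument is to identify $\Gamma_{p,m-1}^{-1}\bm{\gamma}$ with $\bm{v}_{p,m-1}$ through the Yule--Walker equations. Using the autocovariance recursion of \cref{thm:recursion_gamma}, namely $\gamma_j=\sum_{i=1}^{p}\phi_i^{(p)}\gamma_{j-i}$ for $j\ge 1$ together with the symmetry $\gamma_{-k}=\gamma_k$, I would verify row by row that $\Gamma_{p,m-1}\,\bm{v}_{p,m-1}=\bm{\gamma}$, hence $\Gamma_{p,m-1}^{-1}\bm{\gamma}=\bm{v}_{p,m-1}$. This is precisely where the hypothesis $m\ge p+2$ (so that $m-1\ge p+1>p$) is used: the zero padding in $\bm{v}_{p,m-1}$ must be long enough to carry all $p$ nonzero coefficients. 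Feeding this into the Schur complement and invoking the $j=0$ case $\gamma_0=\sum_{i=1}^{p}\phi_i^{(p)}\gamma_i+\sigma_W^2$ of the same recursion, the inner product telescopes and $k$ collapses to exactly $\sigma_W^2$.

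Finally I would substitute back into \cref{lem:inv_block_matrix}. The off-diagonal blocks of $\Gamma_{p,m}^{-1}$ become $-\bm{b}^{\transpose}\AA^{-1}=-\bm{v}_{p,m-1}^{\transpose}$ and its transpose (the sign being dictated by the inversion formula), while the trailing block becomes $k\,\Gamma_{p,m-1}^{-1}+\Gamma_{p,m-1}^{-1}\bm{\gamma}\bm{\gamma}^{\transpose}\Gamma_{p,m-1}^{-1}=\sigma_W^2\Gamma_{p,m-1}^{-1}+\bm{v}_{p,m-1}\bm{v}_{p,m-1}^{\transpose}$. Multiplying through by $\sigma_W^2$ and recognising $\Sigma_{p,m-1}=\sigma_W^2\Gamma_{p,m-1}^{-1}$ and $V_{p,m-1}=\bm{v}_{p,m-1}\bm{v}_{p,m-1}^{\transpose}$ yields the stated recursion. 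I expect the Yule--Walker identification of $\Gamma_{p,m-1}^{-1}\bm{\gamma}$ to be the main obstacle, since it requires careful index bookkeeping and the range condition $m-1\ge p$; by contrast, the collapse of the Schur complement to $\sigma_W^2$ is the clean payoff that produces the bordering scalar $1$.
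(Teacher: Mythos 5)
Your proposal is correct and follows the paper's own proof essentially step for step: the same bordered-block decomposition $\Gamma_{p,m}=\bigl(\begin{smallmatrix}\gamma_0 & \bm{\gamma}^{\transpose}\\ \bm{\gamma} & \Gamma_{p,m-1}\end{smallmatrix}\bigr)$, the same appeal to \cref{lem:inv_block_matrix}, the same Yule--Walker identification $\Gamma_{p,m-1}^{-1}\bm{\gamma}=\bm{v}_{p,m-1}$ via \cref{thm:recursion_gamma} (with uniqueness from invertibility of $\Gamma_{p,m-1}$), and the same collapse of the Schur complement to $\sigma_W^2$. One small remark: as you yourself note, the inversion formula forces off-diagonal blocks $-\bm{v}_{p,m-1}$, which is what the paper's proof actually derives but disagrees in sign with the proposition as printed --- a discrepancy in the paper itself that is harmless downstream, since only the trailing block $\Sigma_{p,m-1}+V_{p,m-1}$ with $V_{p,m-1}=\bm{v}_{p,m-1}\bm{v}_{p,m-1}^{\transpose}$ enters the $\mathsf{NLRC}$ analysis.
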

\begin{proof}
	From \cref{thm:lim_dist_MLE}, we know that $\Sigma_{p,m} = \sigma_W^2 \Gamma_{p,m}^{-1}$. Furthermore, we have
	\begin{align*}
		\medskip \Gamma_{p,m} & = \begin{pmatrix}
			\gamma_0 & \gamma_1 & \cdots & \gamma_{m-1} \\
			\gamma_1 & \gamma_0 & \cdots & \gamma_{m-2} \\
			\vdots   & \vdots & \ddots & \vdots \\
			\gamma_{m-1} & \gamma_{m-2} & \cdots & \gamma_{0}
		\end{pmatrix} \\
		& = \begin{pmatrix}
			\gamma_0 & \gamma_1 & \cdots & \gamma_{m-1} \\
			\gamma_1 & & &  \\
			\vdots   & & \Gamma_{p,m-1} & \\
			\gamma_{m-1} & & & 
		\end{pmatrix}\ =\ \begin{pmatrix}
			\gamma_0 & \bm{\gamma}_{p,m-1}^\transpose \\
			\bm{\gamma}_{p,m-1} & \Gamma_{p,m-1}
		\end{pmatrix},
	\end{align*}
	where $\bm{\gamma}_{p,m-1}^\transpose := \begin{bmatrix}
		\gamma_1 & \cdots & \gamma_{m-1}
	\end{bmatrix}$. So, \cref{lem:inv_block_matrix} implies that
	\begin{align*}
		\medskip \Gamma_{p,m}^{-1} & = \frac{1}{k_{p,m-1}}\begin{pmatrix}
			\medskip 1 & -\bm{\gamma}_{p,m-1}^{\transpose} \Gamma_{p,m-1}^{-1} \\
			\medskip -\Gamma_{p,m-1}^{-1} \bm{\gamma}_{p,m-1} & k_{p,m-1} \Gamma_{p,m-1}^{-1} + \Gamma_{p,m-1}^{-1} \bm{\gamma}_{p,m-1} \bm{\gamma}_{p,m-1}^{\transpose} \Gamma_{p,m-1}^{-1}
		\end{pmatrix},
	\end{align*}
	where $k_{p,m-1} = \gamma_0 - \bm{\gamma}_{p,m-1}^{\transpose} \Gamma_{p,m-1}^{-1} \bm{\gamma}_{p,m-1}$. Firstly, we simplify $\Gamma_{p,m-1}^{-1} \bm{\gamma}_{p,m-1}$ appearing in all blocks of the inverse matrix $\Gamma_{p,m}^{-1}$. For this purpose, let define $\bm{v}_{p,m-1} := \Gamma_{p,m-1}^{-1} \bm{\gamma}_{p,m-1}$ or equivalently, $\Gamma_{p,m-1}\bm{v}_{p,m-1} = \bm{\gamma}_{p,m-1}$. More precisely, 
	\begin{align*}
		\medskip \begin{pmatrix}
			\gamma_0 & \gamma_1 & \cdots & \gamma_{m-2} \\
			\gamma_1 & \gamma_0 & \cdots & \gamma_{m-3} \\
			\vdots   & \vdots & \ddots & \vdots \\
			\gamma_{m-2} & \gamma_{m-3} & \cdots & \gamma_{0}
		\end{pmatrix} \begin{bmatrix}
			v_1 \\ v_2 \\ \vdots \\ v_{m-1}
		\end{bmatrix} & = \begin{bmatrix}
			\gamma_1 \\ \gamma_2 \\ \vdots \\ \gamma_{m-1}
		\end{bmatrix}.
	\end{align*}
	Following \ref{thm:recursion_gamma} as well as the uniqueness of the solution of the above system of linear equations (due to the invertibility of matrix $\Gamma_{p,m-1}$), we can conclude that for any $m \geq p+2$, we have
	\begin{align*}
		\medskip \bm{v}_{p,m-1}^{\transpose} & = \begin{bmatrix}
			\phi_1^{(p)} & \cdots & \phi_p^{(p)} & 0 & \cdots & 0
		\end{bmatrix}.
	\end{align*}
	Therefore, 
	\begin{align*}
		\medskip k_{p,m-1} & = \gamma_0 - \bm{\gamma}_{p,m-1}^{\transpose} \Gamma_{p,m-1}^{-1} \bm{\gamma}_{p,m-1} \\
		\medskip & = \gamma_0 - \bm{\gamma}_{p,m-1}^{\transpose} \bm{v}_{p,m-1} \\
		\medskip & = \gamma_0 - \phi_1^{(p)}\gamma_1 - \cdots - \phi_1^{(p)}\gamma_p \\
		\medskip \mbox{(\cref{thm:recursion_gamma})} & = \sigma_W^2.
	\end{align*}
	Thus, all these results conclude that
	\begin{align*}
		\medskip \Gamma_{p,m}^{-1} & = \frac{1}{\sigma_W^{2}}\begin{pmatrix}
			\medskip 1 & -\bm{v}_{p,m-1}^{\transpose} \\
			\medskip -\bm{v}_{p,m-1} & \sigma_W^2 \Gamma_{p,m-1}^{-1} + \bm{v}_{p,m-1} \bm{v}_{p,m-1}
		\end{pmatrix},
	\end{align*}
	and, equivalently,
	\begin{align*}
		\medskip \Sigma_{p,m} & = \begin{pmatrix}
			\medskip 1 & -\bm{v}_{p,m-1}^{\transpose} \\
			\medskip -\bm{v}_{p,m-1} & \Sigma_{p,m-1} + V_{p,m-1} ,
		\end{pmatrix}.
	\end{align*}
\end{proof}

\begin{proposition}[Recursion for Determinant of Autocovariance Matrix]\label{pro:det_Gamma}
	Let the time series $\{Y_1,\cdots, Y_n\}$ be a causal $\mathtt{AR}(p)$ model with an induced Gaussian white noise series with mean $0$ and variance $\sigma_W^2$. We have the following recursion for the determinant of the autocovariance matrix:
	\begin{align*}
		\medskip \mathsf{det}(\Gamma_{p,m}) & = \sigma_W^{2} \mathsf{det}(\Gamma_{p,m-1}),\quad \mbox{for\ } p \geq 1,\, m \geq p+1.
	\end{align*}
\end{proposition}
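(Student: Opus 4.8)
The plan is to exploit the same $2\times 2$ block decomposition of $\Gamma_{p,m}$ that drives the proof of \cref{pro:sigma_matrix}, but to read off the determinant rather than the inverse. Writing
\begin{align*}
\Gamma_{p,m} = \begin{pmatrix} \gamma_0 & \bm{\gamma}_{p,m-1}^\transpose \\ \bm{\gamma}_{p,m-1} & \Gamma_{p,m-1} \end{pmatrix}, \qquad \bm{\gamma}_{p,m-1}^\transpose = \begin{bmatrix} \gamma_1 & \cdots & \gamma_{m-1} \end{bmatrix},
\end{align*}
I would first record the block LU factorization
\begin{align*}
\Gamma_{p,m} = \begin{pmatrix} 1 & \bm{\gamma}_{p,m-1}^\transpose \Gamma_{p,m-1}^{-1} \\ \bm{0} & \eye \end{pmatrix} \begin{pmatrix} k_{p,m-1} & \bm{0}^\transpose \\ \bm{\gamma}_{p,m-1} & \Gamma_{p,m-1} \end{pmatrix},
\end{align*}
where $k_{p,m-1} := \gamma_0 - \bm{\gamma}_{p,m-1}^\transpose \Gamma_{p,m-1}^{-1} \bm{\gamma}_{p,m-1}$ is the Schur complement. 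This factorization is valid because $\Gamma_{p,m-1}$ is invertible (it is the autocovariance matrix of a nondegenerate process), which is exactly the hypothesis needed to invoke \cref{lem:inv_block_matrix}.

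Taking determinants and using that the first factor is unit upper block-triangular while the second is lower block-triangular gives immediately
\begin{align*}
\mathsf{det}(\Gamma_{p,m}) = k_{p,m-1} \cdot \mathsf{det}(\Gamma_{p,m-1}).
\end{align*}
The whole proposition then reduces to identifying the scalar Schur complement $k_{p,m-1}$ with $\sigma_W^2$. But this is precisely the computation already carried out inside the proof of \cref{pro:sigma_matrix}: solving $\Gamma_{p,m-1}\bm{v}_{p,m-1} = \bm{\gamma}_{p,m-1}$ yields $\bm{v}_{p,m-1}^\transpose = \begin{bmatrix} \phi_1^{(p)} & \cdots & \phi_p^{(p)} & 0 & \cdots & 0 \end{bmatrix}$ by uniqueness of the solution together with \cref{thm:recursion_gamma}, whence $k_{p,m-1} = \gamma_0 - \phi_1^{(p)}\gamma_1 - \cdots - \phi_p^{(p)}\gamma_p = \sigma_W^2$, again by the $j=0$ case of \cref{thm:recursion_gamma}. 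Substituting $k_{p,m-1} = \sigma_W^2$ closes the recursion.

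The only point requiring care is the range of $m$: \cref{pro:sigma_matrix} is stated for $m \geq p+2$, whereas here I need the identity down to $m = p+1$. I would therefore treat $m = p+1$ as a base case, where $\Gamma_{p,m-1} = \Gamma_{p,p}$ and $\bm{\gamma}_{p,p} = \begin{bmatrix} \gamma_1 & \cdots & \gamma_p \end{bmatrix}^\transpose$; here solving $\Gamma_{p,p}\bm{v} = \bm{\gamma}_{p,p}$ gives $\bm{v} = \begin{bmatrix} \phi_1^{(p)} & \cdots & \phi_p^{(p)} \end{bmatrix}^\transpose$ directly from the Yule--Walker equations (the $j = 1,\ldots,p$ cases of \cref{thm:recursion_gamma}), so that $k_{p,p} = \sigma_W^2$ by the same argument. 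I do not anticipate a genuine obstacle: once the Schur complement is recognized, the proof is essentially determinant bookkeeping, and the only thing to get right is that the scalar Schur complement equals the white-noise variance across the full range $m \geq p+1$.
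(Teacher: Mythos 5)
Your proof is correct, but it reaches the recursion by a different mechanism than the paper's. The paper never inverts anything: it performs the single column operation $\Gamma_{p,m}(:,m) \leftarrow \Gamma_{p,m}(:,m) - \phi_1\,\Gamma_{p,m}(:,m-1) - \cdots - \phi_p\,\Gamma_{p,m}(:,m-p)$, which by \cref{thm:recursion_gamma} annihilates every entry of the last column except the $(m,m)$ entry, which becomes $\sigma_W^2$; Laplace expansion along that column then gives $\mathsf{det}(\Gamma_{p,m}) = \sigma_W^2\, \mathsf{det}(\Gamma_{p,m-1})$ in one stroke, with $\Gamma_{p,m-1}$ appearing as the leading principal submatrix. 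You instead peel off the $\gamma_0$ corner, write down the block LU factorization, and identify the scalar Schur complement $k_{p,m-1}$ with $\sigma_W^2$ via the Yule--Walker computation that the paper carries out inside the proof of \cref{pro:sigma_matrix} --- so both arguments run on the same engine (\cref{thm:recursion_gamma}), but deploy it differently. Your route buys a pleasant unification: the one factorization simultaneously yields \cref{pro:sigma_matrix} (by inverting) and \cref{pro:det_Gamma} (by taking determinants), and you correctly isolated the only genuine wrinkle, the base case $m = p+1$, where the solution of the Yule--Walker system $\Gamma_{p,p}\bm{v} = \bm{\gamma}_{p,p}$ has no trailing zeros. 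The trade-off is that your argument requires invertibility of $\Gamma_{p,m-1}$ as a standing hypothesis --- justified, as you note, by positive definiteness of the autocovariance matrix of a nondegenerate causal $\mathtt{AR}(p)$ process --- whereas the paper's column-operation proof is purely multilinear, needs no nondegeneracy assumption at this point, and is self-contained rather than importing the computation of $k_{p,m-1}$ from \cref{pro:sigma_matrix}.
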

\begin{proof}
	The autocovariance matrix $\Gamma_{p,m}$ for some $m > p$ is given by
	\begin{align*}
		\medskip \Gamma_{p,m} & = \begin{pmatrix}
			\gamma_0 & \cdots & \gamma_{m-2} & \gamma_{m-1} \\
			\vdots   & \ddots & \vdots       & \vdots \\
			\gamma_{m-2} & \cdots & \gamma_{0} & \gamma_{1} \\
			\gamma_{m-1} & \cdots & \gamma_{1} & \gamma_{0}
		\end{pmatrix}\ =\ \begin{pmatrix}
			&  & & \gamma_{m-1} \\
			& \Gamma_{p,m-1}  & & \vdots \\
			& & & \gamma_1 \\
			\gamma_{m-1} & \cdots & \gamma_{1} & \gamma_{0}
		\end{pmatrix}.
	\end{align*}
	Let $\Gamma_{p,m}(:,j)$ for $j=1,\ldots,m$ denote the $j\th$ column of the autocovariance matrix $\Gamma_{p,m}$. Motivated from \cref{thm:recursion_gamma}, performing the following column operation on the last column of $\Gamma_{p,m}$ does not change the determinant:
	\begin{align*}
		\medskip \Gamma_{p,m}(:,m) & = \Gamma_{p,m}(:,m) - \phi_{1} \Gamma_{p,m}(:,m-1) - \cdots - \phi_{p} \Gamma_{p,m}(:,m-p).
	\end{align*}
	This results in 	
	\begin{align*}
		\medskip \mathsf{det}(\Gamma_{p,m}) & = \begin{vmatrix}
			&  & & \gamma_{m-1} - \phi_{1}\gamma_{m-2} - \cdots - \phi_{p}\gamma_{m-p-1} \\
			& \Gamma_{p,m-1}  & & \vdots \\
			& & &  \gamma_{1} - \phi_{1}\gamma_{0} - \cdots - \phi_{p}\gamma_{p-1} \\
			\gamma_{m-1}  & \cdots & \gamma_{1} & \gamma_{0} - \phi_{1}\gamma_{1} - \cdots - \phi_{p}\gamma_{p}
		\end{vmatrix}.
	\end{align*}
	Applying \cref{thm:recursion_gamma} to column $m$ implies that, 
	\begin{align*}
		\medskip \mathsf{det}(\Gamma_{p,m}) & = \begin{vmatrix}
			&  &  & 0 \\
			& \Gamma_{p,m-1} &  & \vdots \\
			& & & 0 \\
			\gamma_{m-1}  & \cdots & \gamma_{1} & \sigma_W^2
		\end{vmatrix} \\
		\medskip & = (-1)^{m+m}\sigma_W^2 \mathsf{det}(\Gamma_{p,m-1}) \\
		\medskip & = \sigma_W^2 \mathsf{det}(\Gamma_{p,m-1}).		
	\end{align*}
\end{proof}

By considering all these results together, we can now prove \cref{thm:lrc_matrix}.

\subsubsection*{Proof of \cref{thm:lrc_matrix}}
\begin{proof}
	We prove the nested equation for the matrix $\mathsf{NLRC}_{p,m}$ by fixing $p$ and varying $m = p+1,p+2,\ldots$. For the initial condition at $ m=p+1 $, we have
	\begin{align*}
		\medskip \mathsf{NLRC}_{p,p+1}(1,1) & = \mathsf{Var}(\sqrt{n}\hat{\phi}_{p+1}^{(p+1)})\ =\ \mathsf{Var}(\sqrt{n}\widehat{\mathtt{PACF}}_{1}),
	\end{align*}
	where $\widehat{\mathtt{PACF}}_{1}$ is the sample \textsf{PACF} at lag $1$. Since the variance of the sample \textsf{PACF}, scaled by $\sqrt{n}$, at any lag is equal to $1$ (cf. \cref{thm:lim_dist_MLE}), it results in
	\begin{align*}
		\medskip \mathsf{NLRC}_{p,p+1}(1,1) & = 1\ =\ (\phi_0^{(p)})^2.
	\end{align*}
	
	\noindent Now, consider a fixed value of $m > p+1$. According to \cref{def:NLRC_matrix,pro:sigma_matrix}, we have 
	\begin{align*}
		\medskip \mathsf{NLRC}_{p,m}(2:m-p, 2:m-p) & = \Sigma_{p,m}(p+2:m, p+2:m) \\
		\mbox{(\cref{pro:sigma_matrix})} & = \Sigma_{p,m-1}(p+1:m-1, p+1:m-1) \\
		\medskip & \hspace*{0.5cm} + V_{p,m-1}(p+1:m-1, p+1:m-1) \\
		\medskip & = \Sigma_{p,m-1}(p+1:m-1, p+1:m-1) \\
		\medskip & = {\mathsf{NLRC}}_{p,m-1}.
	\end{align*}
	Note that the second last equality is due to the structure of matrix $V_{p,m-1}$ as defined in \cref{pro:sigma_matrix}. More precisely, except the $p\times p$ block on the top left corner of this matrix, all the other coordinates are equal to zero, including the $(m-p-1)\times (m-p-1)$ block on the lower right corner, that is $V_{p,m-1}(p+1:m-p-1, p+1:m-p-1)$. 
	
	\noindent To complete this proof, we only need to show that the nested equations provided for the first column of $\mathsf{NLRC}_{p,m}$ hold. For this purpose, we utilize \cref{thm:lim_dist_MLE,thm:invA_detA} in the proof. For a fixed $i\in\{ 1,\ldots,m-p \}$, we have
	\begin{align}
		\medskip \nonumber \mathsf{NLRC}_{p,m}(i,1) & = \Sigma_{p,m}(p+i,p+1) \\ 
		\medskip \nonumber \mbox{(\cref{thm:lim_dist_MLE})} & = \sigma_W^2 \Gamma_{p,m}^{-1}(p+i,p+1) \\
		\medskip \label{eq:thm:lrc_matrix_1} \mbox{(\cref{thm:invA_detA})} & = \sigma_W^2\frac{\mathsf{adj}(\Gamma_{p,m})(p+i,p+1)}{\mathsf{det}(\Gamma_{p,m})}.
	\end{align}
	The value of $\mathsf{adj}(\Gamma_{p,m})(p+i,p+1)$ is equal to the coordinate $(p+1,p+i)$ of the cofactor matrix of $\Gamma_{p,m}$. However, as the latter matrix is symmetric, its cofactor matrix is symmetric as well, implying that both coordinates $(p+1,p+i)$ and $(p+i,p+1)$ of the cofactor matrix are identical. This results in
	\begin{align*}
		\medskip \mathsf{adj}(\Gamma_{p,m})(p+i,p+1) & = (-1)^{2p+i+1} \begin{vmatrix}
			\gamma_0 & \cdots & \gamma_{p-1} & \gamma_{p+1} & \cdots & \gamma_{m-1} \\
			\vdots   & \ddots & \vdots & \vdots &  \ddots & \vdots \\
			\gamma_{p+i-2} & \cdots & \gamma_{i-1} & \gamma_{i-3} & \cdots & \gamma_{m-p-i+1} \\
			\gamma_{p+i} & \cdots & \gamma_{i+1} & \gamma_{i-1} & \cdots & \gamma_{m-p-i-1} \\
			\vdots   & \ddots & \vdots & \vdots &  \ddots & \vdots \\
			\gamma_{m-1} & \cdots & \gamma_{m-p} &  \gamma_{m-p-2} & \cdots &\gamma_{0}
		\end{vmatrix}.
	\end{align*}
	We find this determinant in four cases and consequently derive $\mathsf{NLRC}_{p,m}(i,1)$ for each case:
	
	\begin{itemize}
		\item Case $1$: $1 \leq i \leq m-p-1$ and $m \leq 2p+1$. Similar to the proof of \cref{pro:det_Gamma}, we can add a weighted sum of some columns to the last column and then utilize \cref{thm:recursion_gamma} to simplify the result while keeping the determinant unchanged. However, since $m \leq 2p+1$ and the $(p+1)\st$ column of the matrix $\Sigma_{p,m}$ has been already removed, implementing such operations on the last column will result in the coordinates of the removed column multiplied by its corresponding coefficient as in \cref{thm:recursion_gamma}. More precisely, by adding to the last column the $p-1$ preceding columns multiplied by $-\phi_1^{(p)}, \cdots, -\phi_{m-p-2}^{(p)}, -\phi_{m-p}^{(p)}, \cdots, -\phi_p^{(p)}$, respectively, and applying \cref{thm:recursion_gamma}, we yield
		\begin{align*}
			\medskip & \mathsf{adj}(\Gamma_{p,m})(p+i,p+1) = 	(-1)^{i+1}\begin{vmatrix}
				\gamma_0 & \cdots & \gamma_{p-1} & \gamma_{p+1} & \cdots & 	\phi_{m-p-1}^{(p)}\gamma_{p} \\
				\vdots   & \ddots & \vdots & \vdots &  \ddots & \vdots \\
				\gamma_{p+i-2} & \cdots & \gamma_{i-1} & \gamma_{i-3} & \cdots & \phi_{m-p-1}^{(p)}\gamma_{i-2} \\
				\gamma_{p+i} & \cdots & \gamma_{i+1} & \gamma_{i-1} & \cdots & 	\phi_{m-p-1}^{(p)}\gamma_{i} \\
				\vdots   & \ddots & \vdots & \vdots &  \ddots & \vdots \\
				\gamma_{m-1} & \cdots & \gamma_{m-p} &  \gamma_{m-p-2} & \cdots 	& \phi_{m-p-1}^{(p)}\gamma_{m-p-1} + \sigma_W^2
			\end{vmatrix} \\
			\medskip & \hspace*{0.1cm} = (-1)^{m-p+i-1}\phi_{m-p-1}^{(p)} \begin{vmatrix}
				\gamma_0 & \cdots & \gamma_{p-1} & \gamma_{p} & \gamma_{p+1} & \cdots & \gamma_{m-2} \\
				\vdots   & \ddots & \vdots & \vdots &  \ddots & \vdots \\
				\gamma_{p+i-2} & \cdots & \gamma_{i-1} & \gamma_{i-2} & \gamma_{i-3} & \cdots & \gamma_{m-p-i} \\
				\gamma_{p+i} & \cdots & \gamma_{i+1} & \gamma_{i} & \gamma_{i-1} & \cdots & \gamma_{m-p-i-2} \\
				\vdots   & \ddots & \vdots & \vdots &  \ddots & \vdots \\
				\gamma_{m-1} & \cdots & \gamma_{m-p} &  \gamma_{m-p-1} + \sigma_W^2/\phi_{m-p-1}^{(p)} & \gamma_{m-p-2} & \cdots & \gamma_1
			\end{vmatrix}.
		\end{align*}
		In the last equality, the determinant is multiplied by $(-1)^{m-p-2}$ due to the $m-p-2$ column exchange operations to bring the last column to the $(p+1)\st$ column. Now, by implementing a similar operation on the last row we obtain 
		\begin{align*}
			\medskip & \mathsf{adj}(\Gamma_{p,m})(p+i,p+1) = (-1)^{m-p+i-1}\phi_{m-p-1}^{(p)} \\
			\medskip & \hspace*{0.1cm} \times\begin{vmatrix}
				\gamma_0 & \cdots & \gamma_{p-1} & \gamma_{p} & \gamma_{p+1} & \cdots & \gamma_{m-2} \\
				\vdots   & \ddots & \vdots & \vdots &  \ddots & \vdots \\
				\gamma_{p+i-2} & \cdots & \gamma_{i-1} & \gamma_{i-2} & \gamma_{i-3} & \cdots & \gamma_{m-p-i} \\
				\gamma_{p+i} & \cdots & \gamma_{i+1} & \gamma_{i} & \gamma_{i-1} & \cdots & \gamma_{m-p-i-2} \\
				\vdots   & \ddots & \vdots & \vdots &  \ddots & \vdots \\
				\phi_{m-p-i}^{(p)}\gamma_{p+i-1} & \cdots &\phi_{m-p-i}^{(p)} \gamma_{i} &  \phi_{m-p-i}^{(p)}\gamma_{i-1} + \sigma_W^2/\phi_{h-p}^{(p)} & \phi_{m-p-i}^{(p)}\gamma_{i-2} & \cdots & \phi_{m-p-i}^{(p)}\gamma_{m-p-i-1}
			\end{vmatrix} \\
			\medskip & \hspace*{1cm} = (-1)^{(m-p+i-1) + (m-p-i-1)}\phi_{m-p-1}^{(p)}\phi_{m-p-i}^{(p)} \\
			\medskip & \hspace*{1.5cm} \times\begin{vmatrix}
				\gamma_0 & \cdots & \gamma_{p-1} & \gamma_{p} & \gamma_{p+1} & \cdots & \gamma_{m-2} \\
				\vdots   & \ddots & \vdots & \vdots &  \ddots & \vdots \\
				\gamma_{p+i-2} & \cdots & \gamma_{i-1} & \gamma_{i-2} & \gamma_{i-3} & \cdots & \gamma_{m-p-i} \\
				\gamma_{p+i-1} & \cdots & \gamma_{i} &  \gamma_{i-1} + \sigma_W^2/(\phi_{m-p-1}^{(p)}\phi_{m-p-i}^{(p)}) & \gamma_{i-2} & \cdots & \gamma_{m-p-i-1} \\
				\gamma_{p+i} & \cdots & \gamma_{i+1} & \gamma_{i} & \gamma_{i-1} & \cdots & \gamma_{m-p-i-2} \\
				\vdots   & \ddots & \vdots & \vdots &  \ddots & \vdots \\
				\gamma_{m-2} & \cdots & \gamma_{m-p-1} &  \gamma_{m-p-2} & \gamma_{m-p-3} & \cdots & \gamma_{0}
			\end{vmatrix} \\
			\medskip & \hspace*{1cm} = \phi_{m-p-1}^{(p)}\phi_{m-p-i}^{(p)} \begin{vmatrix}
				\gamma_0 & \cdots & \gamma_{p-1} & \gamma_{p} & \gamma_{p+1} & \cdots & \gamma_{m-2} \\
				\vdots   & \ddots & \vdots & \vdots &  \ddots & \vdots \\
				\gamma_{p+i-2} & \cdots & \gamma_{i-1} & \gamma_{i-2} & \gamma_{i-3} & \cdots & \gamma_{m-p-i} \\
				\gamma_{p+i-1} & \cdots & \gamma_{i} &  \gamma_{i-1} & \gamma_{i-2} & \cdots & \gamma_{m-p-i-1} \\
				\gamma_{p+i} & \cdots & \gamma_{i+1} & \gamma_{i} & \gamma_{i-1} & \cdots & \gamma_{m-p-i-2} \\
				\vdots   & \ddots & \vdots & \vdots &  \ddots & \vdots \\
				\gamma_{m-2} & \cdots & \gamma_{m-p-1} &  \gamma_{m-p-2} & \gamma_{m-p-3} & \cdots & \gamma_{0}
			\end{vmatrix} \\
			\medskip & \hspace*{1.5cm} + \phi_{m-p-1}^{(p)}\phi_{m-p-i}^{(p)} \begin{vmatrix}
				\gamma_0 & \cdots & \gamma_{p-1} & 0 & \gamma_{p+1} & \cdots & \gamma_{m-2} \\
				\vdots   & \ddots & \vdots & \vdots &  \ddots & \vdots \\
				\gamma_{p+i-2} & \cdots & \gamma_{i-1} & 0 & \gamma_{i-3} & \cdots & \gamma_{m-p-i} \\
				\gamma_{p+i-1} & \cdots & \gamma_{i} & \sigma_W^2/(\phi_{m-p-1}^{(p)}\phi_{m-p-i}^{(p)}) & \gamma_{i-2} & \cdots & \gamma_{m-p-i-1} \\
				\gamma_{p+i} & \cdots & \gamma_{i+1} & 0 & \gamma_{i-1} & \cdots & \gamma_{m-p-i-2} \\
				\vdots   & \ddots & \vdots & \vdots &  \ddots & \vdots \\
				\gamma_{m-2} & \cdots & \gamma_{m-p-1} &  0 & \gamma_{m-p-3} & \cdots & \gamma_{0}
			\end{vmatrix} \\
			\medskip & \hspace*{1cm} = \phi_{m-p-1}^{(p)}\phi_{m-p-i}^{(p)}\mathsf{det}(\Gamma_{p,m-1}) + \sigma_W^2 \mathsf{adj}(\Gamma_{p,m-1})(p+i,p+1).
		\end{align*}		
		This result along with \cref{eq:thm:lrc_matrix_1,pro:det_Gamma} imply that,
		\begin{align*}
			\medskip \mathsf{NLRC}_{p,m}(i,1) & = \sigma_W^2\frac{\phi_{m-p-1}^{(p)}\phi_{m-p-i}^{(p)} \mathsf{det}(\Gamma_{p,m-1}) + \sigma_W^2 \mathsf{adj}(\Gamma_{p,m-1})(p+i,p+1)}{\mathsf{det}(\Gamma_{p,m})} \\
			\medskip (\mbox{\cref{pro:det_Gamma}}) & = \phi_{m-p-1}^{(p)}\phi_{m-p-i}^{(p)} + \sigma_W^2\frac{\mathsf{adj}(\Gamma_{p,m-1})(p+i,p+1)}{\mathsf{det}(\Gamma_{p,m-1})} \\
			\medskip & = \phi_{m-p-1}^{(p)}\phi_{m-p-i}^{(p)} + \mathsf{NLRC}_{p,m-1}(i,1) \quad \mbox{for } i=1,\ldots,m-p-1.
		\end{align*}

		\item Case $2$: $i = m-p$ and $m \leq 2p+1$. Analogous to Case $1$, we have
		\begin{align*}
			\medskip & \mathsf{adj}(\Gamma_{p,m})(m,p+1) = (-1)^{m-p+1} \begin{vmatrix}
				\gamma_0 & \cdots & \gamma_{p-1} & \gamma_{p+1} & \cdots & \gamma_{m-1} \\
				\gamma_1 & \cdots & \gamma_{p-2} & \gamma_{p} & \cdots & \gamma_{m-2} \\
				\vdots   & \ddots & \vdots & \vdots &  \ddots & \vdots \\
				\gamma_{m-2} & \cdots & \gamma_{m-p-1} &  \gamma_{m-p-3} & \cdots & \gamma_{1}
			\end{vmatrix} \\
			\medskip & \hspace*{1cm} = (-1)^{m-p+1} \begin{vmatrix}
					\gamma_0 & \cdots & \gamma_{p-1} & \gamma_{p+1} & \cdots & 	\phi_{m-p-1}^{(p)}\gamma_{p} \\
					\gamma_1 & \cdots & \gamma_{p-2} & \gamma_{p} & \cdots & 	\phi_{m-p-1}^{(p)}\gamma_{p-1} \\
					\vdots   & \ddots & \vdots & \vdots &  \ddots & \vdots \\
					\gamma_{m-2} & \cdots & \gamma_{m-p-1} &  \gamma_{m-p-3} & \cdots 	& \phi_{m-p-1}^{(p)}\gamma_{m-p-2}
				\end{vmatrix} \\
			\medskip & \hspace*{1cm} = (-1)^{(m-p+1) + (m-p-2)}\phi_{m-p-1}^{(p)} \begin{vmatrix}
				\gamma_0 & \cdots & \gamma_{p-1} & \gamma_{p} & \gamma_{p+1} & \cdots & \gamma_{m-2} \\
				\gamma_1 & \cdots & \gamma_{p-2} & \gamma_{p-1} & \gamma_{p} & \cdots & \gamma_{m-3} \\
				\vdots   & \ddots & \vdots & \vdots &  \ddots & \vdots \\
				\gamma_{m-2} & \cdots & \gamma_{m-p-1} &  \gamma_{m-p-2} & \gamma_{m-p-3} & \cdots & \gamma_0
			\end{vmatrix} \\
			\medskip & \hspace*{1cm} = \phi_{0}^{(p)}\phi_{m-p-1}^{(p)}\mathsf{det}(\Gamma_{p,m-1}).
		\end{align*}
		This result along with \cref{eq:thm:lrc_matrix_1,pro:det_Gamma} imply that,
		\begin{align*}
			\medskip \mathsf{NLRC}_{p,m}(m-p,1) & = \sigma_W^2\frac{\phi_{0}^{(p)}\phi_{m-p-1}^{(p)}\mathsf{det}(\Gamma_{p,m-1})}{\mathsf{det}(\Gamma_{p,m})} \\
			\medskip (\mbox{\cref{pro:det_Gamma}}) & = \phi_{0}^{(p)}\phi_{m-p-1}^{(p)}.
		\end{align*}

		\item Case $3$: $1 \leq i \leq m-p-1$ and $m > 2p+1$. Unlike Case $1$, in this case all preceding $p$ columns of the last column exist in the determinant. So, analogous to the proof of \cref{pro:det_Gamma}, we add to the last column the $p$ preceding columns multiplied by $-\phi_1, \cdots, -\phi_p$, respectively, and apply \cref{thm:recursion_gamma} to yield
		\begin{align*}
			\medskip \mathsf{adj}(\Gamma_{p,m})(p+i,p+1) & = 	(-1)^{i+1}\begin{vmatrix}
				\gamma_0 & \cdots & \gamma_{p-1} & \gamma_{p+1} & \cdots & 	0 \\
				\vdots   & \ddots & \vdots & \vdots &  \ddots & \vdots \\
				\gamma_{p+i-2} & \cdots & \gamma_{i-1} & \gamma_{i-3} & \cdots & 0 \\
				\gamma_{p+i} & \cdots & \gamma_{i+1} & \gamma_{i-1} & \cdots & 	0 \\
				\vdots   & \ddots & \vdots & \vdots &  \ddots & \vdots \\
				\gamma_{m-1} & \cdots & \gamma_{m-p} &  \gamma_{m-p-2} & \cdots 	& \sigma_W^2
			\end{vmatrix} \\
			\medskip & = (-1)^{(i+1) + (2(m-1))}\sigma_W^{2} \begin{vmatrix}
				\gamma_0 & \cdots & \gamma_{p-1} & \gamma_{p+1} & \cdots & 	\gamma_{m-2} \\
				\vdots   & \ddots & \vdots & \vdots &  \ddots & \vdots \\
				\gamma_{p+i-2} & \cdots & \gamma_{i-1} & \gamma_{i-3} & \cdots & \gamma_{m-p-i} \\
				\gamma_{p+i} & \cdots & \gamma_{i+1} & \gamma_{i-1} & \cdots & \gamma_{m-p-i-2} \\
				\vdots   & \ddots & \vdots & \vdots &  \ddots & \vdots \\
				\gamma_{m-2} & \cdots & \gamma_{m-p-1} &  \gamma_{m-p-3} & \cdots 	& \gamma_0
			\end{vmatrix} \\
			\medskip & = \sigma_W^2 \mathsf{adj}(\Gamma_{p,m-1})(p+i,p+1).
		\end{align*}
		This result along with \cref{eq:thm:lrc_matrix_1,pro:det_Gamma,thm:lim_dist_MLE} imply that,
		\begin{align*}
			\medskip \mathsf{NLRC}_{p,m}(i,1) & = \sigma_W^2\frac{\sigma_W^2 \mathsf{adj}(\Gamma_{p,m-1})(p+i,p+1)}{\mathsf{det}(\Gamma_{p,m})} \\
			\medskip (\mbox{\cref{pro:det_Gamma}}) & = \sigma_W^2\frac{ \mathsf{adj}(\Gamma_{p,m-1})(p+i,p+1)}{\mathsf{det}(\Gamma_{p,m})} \\
			\medskip \mbox{(\cref{thm:lim_dist_MLE})} & = \mathsf{NLRC}_{p,m-1}(i,1)\quad \mbox{for } i=p+2,\ldots,m-p-1.
		\end{align*}
	
		\item Case $4$: $i = m-p$ and $m > 2p+1$. Analogous to Cases $2$ and $3$, we obtain 
		\begin{align*}
			\medskip \mathsf{adj}(\Gamma_{p,m})(m,p+1) & = (-1)^{m-p+1} \begin{vmatrix}
				\gamma_0 & \cdots & \gamma_{p-1} & \gamma_{p+1} & \cdots & \gamma_{m-1} \\
				\gamma_1 & \cdots & \gamma_{p-2} & \gamma_{p} & \cdots & \gamma_{m-2} \\
				\vdots   & \ddots & \vdots & \vdots &  \ddots & \vdots \\
				\gamma_{m-2} & \cdots & \gamma_{m-p-1} &  \gamma_{m-p-3} & \cdots & \gamma_{1}
			\end{vmatrix} \\
			\medskip & = (-1)^{m-p+1} \begin{vmatrix}
				\gamma_0 & \cdots & \gamma_{p-1} & \gamma_{p+1} & \cdots & 	0 \\
				\gamma_1 & \cdots & \gamma_{p-2} & \gamma_{p} & \cdots & 	0 \\
				\vdots   & \ddots & \vdots & \vdots &  \ddots & \vdots \\
				\gamma_{m-2} & \cdots & \gamma_{m-p-1} &  \gamma_{m-p-3} & \cdots 	& 0
			\end{vmatrix} \\
			\medskip & = 0.
		\end{align*}
	This result along with \cref{eq:thm:lrc_matrix_1} imply that,
	\begin{align*}
		\medskip \mathsf{NLRC}_{p,m}(m-p,1) & = 0.
	\end{align*}
	\end{itemize} 
	Now, by starting with the scaler $\mathsf{NLRC}_{p,p+1}(1,1) = 1$ and implementing the nested equations iteratively on $m$, the coordinates of the matrix $\mathsf{NLRC}_{p,m}$ are accordingly derived.
\end{proof}

\subsection{Proof of \cref{thm:lim_dist_rolling_average}}

\begin{proof}
	We prove this theorem by fixing $p$ and utilizing \cref{def:rolling_average} and \cref{thm:lrc_matrix} in three cases:
	\begin{itemize}
		\item Case $1$: $m = p+1$.
		\begin{align*}
			\medskip \sigma_{p,p+1}^2 & = \mathsf{Var}(\sqrt{n} \bar{\phi}_{p,p+1}) \\
			\medskip & = \mathsf{Var}(\sqrt{n} \hat{\phi}_{p+1}^{(p+1)}) \\
			\medskip & = \mathsf{NLRC}_{p,p+1}(1,1) \\
			\medskip & = (\phi_0^{(p)})^2.
		\end{align*} 
		
		\item Case $2$: $p+1 < m \leq 2p$.
		\begin{align*}
			\medskip (m-p)^2\sigma_{p,m}^2 & = (m-p)^2\mathsf{Var}(\sqrt{n}\bar{\phi}_{p,m}) \\
			\medskip \mbox{(\cref{def:rolling_average})} & = (m-p)^2\mathsf{Var}(\frac{1}{m-p} \sum_{i=p+1}^{m} \sqrt{n}\hat{\phi}_{i}^{(m)}) \\
			\medskip & = \mathsf{Var}(\sqrt{n}\hat{\phi}_{p+1}^{(m)}) + \mathsf{Var}(\sum_{i=p+2}^{m} \sqrt{n}\hat{\phi}_{i}^{(m)}) + 2\mathsf{Cov}(\sqrt{n}\hat{\phi}_{p+1}^{(m)}, \sum_{i=p+2}^{m} \sqrt{n}\hat{\phi}_{i}^{(m)}) \\
			 & = \mathsf{Var}(\sqrt{n}\hat{\phi}_{p+1}^{(m)}) + \left(\sum_{i=p+2}^{m}\mathsf{Var}(\sqrt{n}\hat{\phi}_{i}^{(m)}) + 2 \sum_{i=p+3}^{m}\sum_{j=p+2}^{i-1} \mathsf{Cov}(\sqrt{n}\hat{\phi}_{i}^{(m)},\sqrt{n}\hat{\phi}_{j}^{(m)})\right) \\
			\medskip & \hspace*{0.8cm} + 2\sum_{i=p+2}^{m}\mathsf{Cov}(\sqrt{n}\hat{\phi}_{p+1}^{(m)}, \sqrt{n}\hat{\phi}_{i}^{(m)}) \\
			 \mbox{(\cref{thm:lrc_matrix})}  & = \mathsf{NLRC}_{p,m}(1,1) + \left(\sum_{i=2}^{m-p}\mathsf{NLRC}_{p,m}(i,i) + 2 \sum_{i=3}^{m-p}\sum_{j=2}^{i-1} \mathsf{NLRC}_{p,m}(i,j)\right) \\
			\medskip & \hspace*{0.8cm} + 2\sum_{i=2}^{m-p}\mathsf{NLRC}_{p,m}(i,1) \\
			\mbox{(\cref{thm:lrc_matrix})} & = \left(\sum_{i=2}^{m-p}\mathsf{NLRC}_{p,m-1}(i-1,i-1) + 2 \sum_{i=3}^{m-p}\sum_{j=2}^{i-1} \mathsf{NLRC}_{p,m-1}(i-1,j-1)\right) \\
			\medskip & \hspace*{0.8cm} + \left(\mathsf{NLRC}_{p,m}(1,1) + 2\sum_{i=2}^{m-p}\mathsf{NLRC}_{p,m}(i,1)\right) \\
			 & = \left(\sum_{i=1}^{m-p-1}\mathsf{NLRC}_{p,m-1}(i,i) + 2 \sum_{i=2}^{m-p-1}\sum_{j=1}^{i-1} \mathsf{NLRC}_{p,m-1}(i,j)\right) \\
			\medskip & \hspace*{0.8cm} + \left(\mathsf{NLRC}_{p,m}(1,1) + 2\sum_{i=2}^{m-p}\mathsf{NLRC}_{p,m}(i,1)\right) \\
			 \mbox{(\cref{thm:lrc_matrix})} & = \left(\sum_{i=p+1}^{m-1}\mathsf{Var}(\sqrt{n}\hat{\phi}_{i}^{(m-1)}) + 2 \sum_{i=p+2}^{m-1}\sum_{j=p+1}^{i-1} \mathsf{Cov}(\sqrt{n}\hat{\phi}_{i}^{(m-1)},\sqrt{n}\hat{\phi}_{j}^{(m-1)})\right) \\
			\medskip & \hspace*{0.8cm} + \left(\sum_{k=0}^{m-p-1}(\phi_k^{(p)})^2 + 2\sum_{i=2}^{m-p}\sum_{k=0}^{m-p-i}\phi_k^{(p)} \phi_{k+i-1}^{(p)}\right) \\
			\medskip & = \left( \mathsf{Var}(\sum_{i=p+1}^{m-1} \sqrt{n}\hat{\phi}_{i}^{(m-1)}) \right) + \left(\sum_{k=0}^{m-p-1}(\phi_k^{(p)})^2 + 2\sum_{i=1}^{m-p-1}\sum_{k=0}^{m-p-1-i}\phi_k^{(p)} \phi_{k+i}^{(p)}\right) \\
			\medskip & = (m-p-1)^2\sigma_{p,m-1}^2 + (\phi_0^{(p)}+\cdots+\phi_{m-p-1}^{(p)})^2.
		\end{align*} 
		
		\item Case $3$: $m \geq 2p+1$. 
		The proof of this case is analogous to Case $2$. To avoid duplication, it is omitted. 
	\end{itemize}
	Now, it is readily seen that by starting from Case 1 and implementing the recursion iteratively, the provided general solution is derived. 
\end{proof}

%
%
%
%
%
%



\bibliographystyle{plain}
\bibliography{Biblio}

\end{document}